\def\K{\mathbf{K}}
\def\LLL{\mathbf{L}}
\def\CD{\mathbf{CD}}
\def\wCD{\mathbf{wCD}}
\def\DP{\mathbf{DP}}
\def\NotPosBot{\mathbf{N}}
\def\NEC{\mathbf{NEC}}
\def\MP{\mathbf{MP}}
\def\+#1{\mathcal{#1}}
\def\-#1{\mathbf{#1}}
\newcommand{\logic}{\textbf{FIK}}
\newcommand{\calculus}{$\mathbf{C}_{\logic}$}
\newcommand{\Derivable}{$\mathbf{D}_{\logic}$}
\newcommand{\N}{\mathbb{N}}
\newcommand{\hide}[1]{}
\newcommand{\seq}{\Rightarrow}
\title{A Natural Intuitionistic Modal Logic: Axiomatization and Bi-nested Calculus}
\author{
Philippe Balbiani\inst{1}\and
Han Gao\inst{2}\and
Çi\u{g}dem Gencer\inst{1}\and
Nicola Olivetti\inst{2}
}
\authorrunning{Balbiani, Gao, Gencer and Olivetti}
\institute{CNRS-INPT-UT3, IRIT, Toulouse, France\\
\email{\{philippe.balbiani, cigdem.gencer\}@irit.fr}
\and
Aix Marseille University, CNRS, LIS, Marseille, France\\
\email{\{gao.han, nicola.olivetti\}@lis-lab.fr}}
\titlerunning{A Natural Intuitionistic Modal Logic} 
\begin{document}
\maketitle              
\begin{abstract}
    We introduce \logic, a natural intuitionistic modal logic specified by Kripke models satisfying the condition of forward confluence. 
    We give a complete Hilbert-style axiomatization of this logic and propose a bi-nested calculus for it. 
    The calculus provides a decision procedure as well as a countermodel extraction: from any failed derivation of a given formula, we obtain by the calculus a finite  countermodel of it. 

\keywords{Intuitionistic Modal Logic \and Axiomatization \and Completeness \and Sequent Calculus.}
\end{abstract}
\section{Introduction}

Intuitionistic modal logic (\textbf{IML}) has a long history,  starting from the pioneering work by Fitch~\cite{Fitch:1948} in the late 40's and Prawitz~\cite{Prawitz:1965} in the 60's. Along the time, two traditions emerged that led to the study of two different families of systems.
The first tradition, called Intuitionistic modal logics, has been introduced by Fischer Servi~\cite{Fischer:Servi:1977,Fischer:Servi:1978,Fischer:Servi:1984}, Plotkin and Stirling~\cite{Plotkin:Stirling:1986} and then systematized by Simpson~\cite{Simpson:1994}. Its main goal is to define an analogous of classical modalities justified from an intuitionistic meta-theory. 
The basic  modal logic in this tradition, \textbf{IK}, is intended to be the intuitionistic counterpart of the minimal normal modal logic \textbf{K}.
The second tradition leads to so-called Constructive modal logics that are mainly motivated by their applications in computer science such as type-theoretic interpretations,  verification and knowledge representation (contextual reasoning), together with their mathematical semantics.
This second tradition has been developed independently, first by Wijesekera~\cite{Wijesekera:1990} who proposed the system \textbf{CCDL} (Constructive Concurrent Dynamic logic), and then by Bellin, De Paiva, and Ritter~\cite{Bellin:et:al:2001}, among others who proposed the logic \textbf{CK} (Constructive \textbf{K}) as the basic system for a constructive account of modality.

But putting aside the historical perspective, we can consider naively the following question: how can we build "from scratch" an \textbf{IML}? Since both modal logic and intuitionistic logic enjoy Kripke semantics, we can think of combining them together in order to define an intuitionistic modal logic. The simplest proposal is to consider Kripke models equipped with two relations,  $\leq$ for intuitionistic implication and $R$ for modalities. Propositional intuitionistic connectives (in particular implication) have their usual interpretations. We request that every valid formula or rule scheme of propositional intuitionistic logic \textbf{IPL} is also valid in \textbf{IML}. To reach this goal, we must ensure the {\em hereditary property}, which means for any formula $A$,
\begin{quote}
	if $x\Vdash A$ and $x\leq y$ then also $y\Vdash A$.
\end{quote}
Thus the question becomes how to define modalities in order to ensure this property. The simplest solution is to build the hereditary property in the forcing conditions for $\Box$ and $\Diamond$:
\begin{quote}
(1) $x\Vdash \Box A$ iff for all $x'$ with $x'\geq x$, for all $y$ with $Rx'y$ it holds $y \Vdash A$ and\\
(1') $x\Vdash \Diamond A$ iff for all $x'$ with $x'\geq x$, there exists $y$ with $Rx'y$ s.t. $y \Vdash A$.
\end{quote}
Observe that the definition of $\Box A$ is reminiscent of the definition of $\forall$ in intuitionistic first-order logic. 
This logic is nothing else than  the propositional part of Wijeskera's  \textbf{CCDL} mentioned above and is \emph{non-normal} as it does not contain all formulas of the form
$$(DP) \ \Diamond (A\lor B) \supset \Diamond A \lor \Diamond B.$$
Moreover, the logic does not satisfy the maximality criteria, one of the criteria stated by Simpson~\cite[Chapter~$3$]{Simpson:1994} for a "good" \textbf{IML} since by adding any classical principle to it, we cannot get classical normal modal logic \textbf{K}. In addition, \textbf{CCDL} has also been criticized for being \emph{too strong}, as it still satisfies the \emph{nullary} $\Diamond$ distribution: $\Diamond \bot\supset\bot$. By removing this last axiom, the constructive modal logic \textbf{CK} is obtained. 

However,  the opposite  direction is also possible: we can make local the definition of $\Diamond$ (pursuing the analogy with $\exists$ in intuitionistic first-order logic \textbf{FOIL}) exactly as in classical \textbf{K}, that is:
\begin{quote}
(2)	$x\Vdash \Diamond A$ iff there exists $y$ with $Rxy$ s.t. $y \Vdash A$.
\end{quote}
In this way we recover $\Diamond (A\lor B) \supset \Diamond A \lor \Diamond B$, making the logic \emph{normal}. But there is a price to pay: nothing ensures that hereditary property holds for $\Diamond$-formulas. In order to solve this problem, we need to postulate some frame conditions. The most natural (and maybe the weakest) condition is simply that if $x'\geq x$ and $x$ has an $R$-accessible $y$ then also $x'$ must have an $R$-accessible $y'$ which refines $y$, which means $y'\geq y$. This condition is called \emph{Forward Confluence} in~\cite{Balbiani:et:al:2021}. It is not new as it is also called (F1) by Simpson~\cite[Chapter~$3$]{Simpson:1994} and together with another frame conditions (F2) characterizes the very well-known system \textbf{IK} by Fischer-Servi and Simpson. 
Although from a meta-theoretical point of view \textbf{IK} can be justified by its standard translation in first-order intuitionistic logic, it does not seem to be the minimal system allowing the definition of modalities as in (1) and (2) above.

This paper attempts to fill the gap by studying a weaker logic whose forcing conditions are just (1) and (2) above and we assume \emph{only} Forward Confluence.
We call this logic \textbf{FIK} for \emph{forward confluenced} \textbf{IK}. As far as we know, this logic has never been studied before. And we think it is well worth being studied: it seems to be the minimal logic defined by bi-relational models with forcing conditions (1) and (2) which preserves intuitionistic validity. 

We first give a sound and complete Hilbert axiomatization of \textbf{FIK}. We show that \textbf{FIK} finds its place in the \textbf{IML}/Constructive family: it is strictly stronger than \textbf{CCDL} (whence than \textbf{CK}) and strictly weaker than \textbf{IK}. At the same time \textbf{FIK} seems acceptable to be regarded as an \textbf{IML} since it satisfies \emph{all} criteria proposed by Simpson, including the one about maximality: by adding any classical principle to \textbf{FIK}, we get classical normal modal logic \textbf{K}. 
All in all \textbf{FIK} seems to be a respectable intuitionistic modal logic and is a kind of "third way" between intuitionistic \textbf{IK} and constructive \textbf{CCDL}/\textbf{CK}.

We then investigate \textbf{FIK} from a proof-theoretic viewpoint. We propose a nested sequent calculus \calculus \  which makes use of two kinds of nesting: one for representing $\geq$-upper worlds and the other for $R$-related worlds. A nested sequent calculus for (first-order) intuitionistic logic that makes use of the first type of nesting has been proposed in \cite{Fitting:2014}, so that our calculus can be seen as an extension of the propositional part of it. More recently in \cite{anupam:2023}, the authors present a sequent calculus with the same kind of nesting to capture the \textbf{IML} logic given by $\textbf{CCDL}+(DP)$. 

As mentioned, our calculus contains a double type of nesting. The use of this double nesting is somewhat analogous to the labelled calculus proposed in~\cite{Marin:et:al:2021} which introduces the two relations on labels in the syntax. However, the essential ingredient of the calculus \calculus \ is the {\em interaction rule} between the two kinds of nested sequents that captures the specific Forward Confluence condition.

We prove that the calculus \calculus\ provides a decision procedure for the logic \textbf{FIK}. In addition, since the rules of \calculus\ are invertible, we show that from a single failed derivation under a suitable strategy, it is possible to extract a finite countermodel of the formula or sequent at the root of the derivation. 
This result allows us to obtain a constructive proof of the finite model property, which means if a formula is not valid then it has a finite countermodel.

\section{A natural intuitionistic modal logic}\label{Sec:logic}
Firstly, we present the syntax and semantics of forward confluenced intuitionistic modal logic \logic.
Secondly, we present an axiom system and we prove its soundness and completeness.
Thirdly, we discuss whether \logic~satisfies the properties that are expected from intuitionistic modal logics.

%
%
%
%
%
%
\begin{definition}[Formulas]
The set $\+L$~of all formulas (denoted $A$, $B$, etc.) is generated by the following grammar: 
$
A::=~p~|~\bot~|~\top~|~(A\wedge A)~|~(A\vee A)~|~(A\supset A)~|~\square A~|~\Diamond A
$
where $p$ ranges over  a  countable set of atomic propositions $\textsf{At}$.
We omit parentheses for readability.
For all formulas $A$, we write $\neg A$ instead of $A\supset\bot$.
For all formulas $A,B$, we write $A\equiv B$ instead of $(A\supset B)\wedge(B\supset A)$.
The size of a formula $A$ is denoted ${\mid}A{\mid}$.
\end{definition}
\begin{definition}[Bi-relational model]\label{bi-relational-model}
 A \textit{bi-relational model} is a quadruple $\+M=(W,\leq, R,V)$ where $W$ is a nonempty set of worlds, $\leq$ is a pre-order on $W$, $R$ is a binary relation on $W$ and $V:~W\longrightarrow\wp(\textsf{At})$ is a valuation on $W$ satisfying the following hereditary condition:
 $$
\forall x,y\in W,\ (x\leq y\ \Rightarrow\ V(x)\subseteq V(y)).
$$  
The triple $(W,\leq,R)$ is called a \textit{frame.}
For all $x,y\in W$, we write $x\geq y$ instead of $y\leq x$.
Moreover, we say ``$y$ is a successor of $x$'' when $Rxy$.
\end{definition}
It is worth mentioning that
an upper world of a successor of a world is not necessarily a successor of an upper world of that world.
However, from now on in this paper, we only consider models $\+M=(W,\leq, R,V)$ that satisfy the following condition called {\em Forward Confluence} as in \cite{Balbiani:et:al:2021}:
\begin{description}
\item[(FC)] $\forall x,y\in W,\ (\exists z\in W,\ (x\geq z\ \&\ Rzy)\ \Rightarrow\ \exists t\in W,\ (Rxt\ \&\ t\geq y))$.
\end{description}
\begin{definition}[Forcing relation]\label{definition:forcing:relation}
Let $\+M=(W,\leq,R,V)$ be a bi-relational model and $w\in W$.
The forcing conditions are the usual ones for atomic propositions and for formulas constructed by means of the connectives $\bot,\top$, $\wedge$ and $\vee$.
For formulas constructed by means of the connectives $\supset$, $\square$ and $\Diamond$, the forcing conditions are as follows:
\begin{itemize}
%
%
%
%
%
%
%
%
%
%
\item $\+M,w\Vdash B\supset C$~~iff~~for all $w'\in W$ with $w\leq w'$ and $\+M,w'\Vdash B$, $\+M,w'\Vdash C$;
\item $\+M,w\Vdash \square B$~~iff~~for all $w',v'\in W$ with $w\leq w'$ and $Rw'v'$, $v'\Vdash B$;
\item $\+M,w\Vdash \Diamond B$~~iff~~there exists $v\in W$ with $Rwv$ and $\+M,v\Vdash B$.
\end{itemize}
We also abbreviate $\+M,w\Vdash A$ as $w\Vdash A$ if the model is clear from the context.
\end{definition}
\begin{proposition}\label{proposition:monotonic}
Let $(W,\leq,R,V)$ be a bi-relational model.
For all formulas $A$ in $\+L$ and for all $x,y \in W\text{~with~}x\leq y,\ x\Vdash A\text{~implies~}y\Vdash A.$
\end{proposition}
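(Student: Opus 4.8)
The plan is to prove the statement by induction on the structure of the formula $A$, exploiting the fact that $\leq$ is a pre-order (in particular transitive) and, crucially, the Forward Confluence condition (FC) in the modal case for $\Diamond$. Fix a bi-relational model $\+M=(W,\leq,R,V)$ satisfying (FC), and fix $x,y\in W$ with $x\leq y$. We show $x\Vdash A$ implies $y\Vdash A$ for every $A\in\+L$, assuming the analogous implication holds for all proper subformulas of $A$.

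For the base cases, if $A=p$ is atomic, then $x\Vdash p$ means $p\in V(x)$, and the hereditary condition on $V$ gives $p\in V(y)$, i.e. $y\Vdash p$; the cases $A=\bot$ and $A=\top$ are immediate since no world forces $\bot$ and every world forces $\top$. The cases $A=B\wedge C$ and $A=B\vee C$ follow directly by applying the induction hypothesis to $B$ and $C$. For $A=B\supset C$, suppose $x\Vdash B\supset C$ and $x\leq y$; to show $y\Vdash B\supset C$, take any $y'$ with $y\leq y'$ and $y'\Vdash B$. By transitivity of $\leq$ we get $x\leq y'$, so $x\Vdash B\supset C$ yields $y'\Vdash C$ as required. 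For $A=\square B$, suppose $x\Vdash\square B$ and $x\leq y$; to show $y\Vdash\square B$, take any $y',v'$ with $y\leq y'$ and $Ry'v'$. Again by transitivity $x\leq y'$, so the forcing condition for $\square B$ at $x$ gives $v'\Vdash B$; note this case does not even require the induction hypothesis.

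The one genuinely interesting case is $A=\Diamond B$, and this is where (FC) is needed. Suppose $x\Vdash\Diamond B$ and $x\leq y$. Then there exists $v\in W$ with $Rxv$ and $v\Vdash B$. Since $y\geq x$ and $Rxv$, the antecedent of (FC) is satisfied (witnessed by taking $z:=x$), so there exists $t\in W$ with $Ryt$ and $t\geq v$. From $v\leq t$ and $v\Vdash B$, the induction hypothesis applied to the subformula $B$ yields $t\Vdash B$. Hence $Ryt$ together with $t\Vdash B$ gives $y\Vdash\Diamond B$, completing the induction. I expect this $\Diamond$ step to be the only nontrivial point: it is precisely the reason the frame condition (FC) is imposed, as without it the hereditary property could fail for $\Diamond$-formulas under the local forcing clause (2).
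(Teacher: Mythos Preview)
Your proof is correct and follows exactly the approach the paper indicates: induction on the size (structure) of $A$, with the only nontrivial case being $A=\Diamond B$, handled precisely via the Forward Confluence condition (FC) as you do. The paper merely states this in one sentence; your write-up fills in the details faithfully.
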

Proposition~\ref{proposition:monotonic} is proved by induction on the size of $A$ using (FC) for the case of $A=\Diamond B$.
\hide{
\begin{proof}
By induction on the complexity of $A$.
\end{proof}
}
%
%
%
%
\begin{definition}[Validity]
A formula $A$ in $\+L$ is \textit{valid}, denoted $\Vdash A$, if for any bi-relational model $\mathcal{M}$ and any world $w$ in it, $\mathcal{M},w\Vdash A$.
Let \logic~be the set of all valid formulas.
\end{definition}
%
%
%
%
%
%
Obviously, \logic~contains all standard axioms of \textbf{IPL}.
Moreover, \logic~is closed with respect to the following inference rules:
\[
  \AxiomC{$p\supset q, p$}
  \RightLabel{\rm (\textbf{MP})}
  \UnaryInfC{$q$}
  \DisplayProof
  \quad
  \AxiomC{$p$}
  \RightLabel{\rm (\textbf{NEC})}
  \UnaryInfC{$\square p$}
  \DisplayProof
\]
Finally, \logic~contains the following formulas:
\begin{description}
\item[] $(\K_{\square})$ $\square(p\supset q)\supset(\square p\supset\square q)$,
\item[] $(\K_{\Diamond})$  $\square(p\supset q)\supset(\Diamond p\supset\Diamond q)$,
\item[] $(\NotPosBot)$ $\neg\Diamond\bot$,
\item[] $(\DP)$ $\Diamond(p\vee q)\supset\Diamond p\vee\Diamond q$,
\item[] $(\wCD)$ $\square(p\vee q)\supset((\Diamond p\supset\square q)\supset\square q)$.
\end{description}
We only show the validity of $(\wCD)$.
Suppose $\not\Vdash\square(p\vee q)\supset((\Diamond p\supset\square q)\supset\square q)$.
Hence, there exists a model $(W,\leq,R,V)$ and $w\in W$ such that $w\Vdash\square(p\vee q)$, $w\Vdash\Diamond p\supset\square q$ and $w\not\Vdash\square q$.
Thus, let $u,v\in W$ be such that $w\leq u$, $Ruv$ and $v\not\Vdash q$.
Since $w\Vdash\square(p\vee q)$, $v\Vdash p\vee q$.
Since $v\not\Vdash q$, $v\Vdash p$.
Since $Ruv$, $u\Vdash\Diamond p$.
Since $w\Vdash\Diamond p\supset\square q$ and $w\leq u$, $u\Vdash\Diamond p\supset\square q$.
Since $u\Vdash\Diamond p$, $u\Vdash\square q$.
Since $Ruv$, $v\Vdash q$: a contradiction.
\begin{definition}[Axiom system]
Let \Derivable~be the Hilbert-style axiom system consisting of all standard axioms of \textbf{IPL}, the inference rules $(\MP)$ and $(\NEC)$ and the formulas $(\K_{\square})$, $(\K_{\Diamond})$, $(\NotPosBot)$, $(\DP)$ and $(\wCD)$ considered as axioms.
Derivations are defined as usual.
For all formulas $A$, we write $\vdash A$ when $A$ is \Derivable-derivable.
The set of all \Derivable-derivable formulas will also be denoted \Derivable.
\end{definition}
The formulas $(\K_{\square})$, $(\K_{\Diamond})$, $(\DP)$ and $(\NotPosBot)$ are not new, seeing that they have already been used by many authors as axioms in multifarious variants of \textbf{IML}.
As for the formula $(\wCD)$, as far as we are aware, it is used here for the first time as an axiom of an \textbf{IML} variant.
Indeed, $(\wCD)$ is derivable in \textbf{IK}.
Moreover, it is a weak form of the {\em Constant Domain}\/ axiom $(\CD):\ \square(p\vee q)\supset\Diamond p\vee\square q$ used in~\cite{Balbiani:et:al:2021}.
In other respect, $(\wCD)$ is derivable in \textbf{IK}, whereas it is not derivable in \textbf{CCDL}/\textbf{CK}.
As for the \textbf{IK} axiom $(\Diamond p\supset\square q)\supset\square(p\supset q)$, it is not in \logic\ as it will be also constructively shown by using the calculus presented in next section.
Therefore, we get
\textbf{CK}$\subset$\textbf{CCDL}$\subset$\logic$\subset$\textbf{IK}. 
We can consider also the logic $\textbf{CCDL}+(\textbf{DP})$ (= $\textbf{CK}+(\textbf{N})+(\textbf{DP})$) recently studied in \cite{anupam:2023}, according to the results in that paper, we get that $\textbf{CCDL}+(\textbf{DP})\subset \logic$.
%
%
%
%
%
%
\begin{theorem}[Soundness]\label{theorem:soundness:dfik}
\Derivable\ $\subseteq$\ \logic, i.e. for all formulas $A$, if $\vdash A$ then $\Vdash A$.
\end{theorem}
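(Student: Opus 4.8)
The plan is the standard one for semantic soundness: argue by induction on the length of a \Derivable-derivation that every \Derivable-derivable formula is valid. For the base case we must check that each of the axioms ($\K_{\square}$), ($\K_{\Diamond}$), ($\NotPosBot$), ($\DP$), ($\wCD$) and each standard \textbf{IPL}-axiom is valid; for the induction step we must check that ($\MP$) and ($\NEC$) preserve validity. Since the semantic arguments below go through verbatim with arbitrary formulas $A,B$ in place of the atoms $p,q$ (equivalently, validity is closed under uniform substitution, using Proposition~\ref{proposition:monotonic} to see that the ``substituted'' valuations remain hereditary), it suffices to treat the displayed instances of the schemas.

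For the intuitionistic base, observe that the forcing clauses for $\bot,\top,\wedge,\vee,\supset$ are exactly those of intuitionistic Kripke semantics with respect to the pre-order $\leq$, and that Proposition~\ref{proposition:monotonic} provides monotonicity of forcing along $\leq$. Hence every \textbf{IPL}-axiom is valid by the usual induction, and ($\MP$) preserves validity: from $\Vdash B\supset C$ and $\Vdash B$ one instantiates the $\supset$-clause at $w'=w$ to get $w\Vdash C$ at every world $w$. It is worth noting that this is essentially the \emph{only} place where the frame condition \textbf{(FC)} enters the soundness proof: it is needed, via the $\Diamond$-case of Proposition~\ref{proposition:monotonic}, to guarantee monotonicity, whereas the five modal axioms are in fact valid on \emph{all} bi-relational frames.

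It remains to handle ($\NEC$) and the four modal axioms other than ($\wCD$). For ($\NEC$): if $\Vdash p$ then $v'\Vdash p$ at every world $v'$ of every model, so a fortiori $v'\Vdash p$ whenever $w\leq w'$ and $Rw'v'$, i.e.\ $w\Vdash\square p$ everywhere. For ($\K_{\square}$) one unwinds the $\square$-clause twice, using transitivity of $\leq$ to pass along $w\leq w'\leq w''$. For ($\K_{\Diamond}$), given $w\leq w'$ with $w'\Vdash\Diamond p$, pick the witness $v$ with $Rw'v$ and $v\Vdash p$; the $\square$-clause for $w\Vdash\square(p\supset q)$ applied along $w\leq w'$ and $Rw'v$ yields $v\Vdash p\supset q$, hence $v\Vdash q$ and so $w'\Vdash\Diamond q$. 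Axiom ($\NotPosBot$), i.e.\ $\Diamond\bot\supset\bot$, holds vacuously since no world forces $\Diamond\bot$. Axiom ($\DP$) is immediate from the locality of the $\Diamond$-clause (condition~(2)): a witness $v$ with $v\Vdash p\vee q$ for $w'\Vdash\Diamond(p\vee q)$ already witnesses $w'\Vdash\Diamond p$ or $w'\Vdash\Diamond q$. Finally, the validity of ($\wCD$) is precisely the computation carried out just before the statement of the theorem. Assembling these facts, the induction closes and $\Derivable\ \subseteq\ \logic$.

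I do not anticipate any genuine obstacle here; the argument is routine modal-logic bookkeeping. The only subtlety to keep in mind is the one flagged above: the intuitionistic part of the proof silently relies on \textbf{(FC)} through Proposition~\ref{proposition:monotonic}, so that proposition must be in hand (it is, having been stated earlier) before monotonicity is invoked in the \textbf{IPL}-cases and in the case of ($\MP$).
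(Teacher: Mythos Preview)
Your proposal is correct and follows exactly the approach indicated by the paper, namely a straightforward induction on the length of the \Derivable-derivation, checking that the axioms are valid and that $(\MP)$ and $(\NEC)$ preserve validity; the paper in fact gives only this one-line hint together with the explicit verification of $(\wCD)$ that you rightly cite. Your additional remark that \textbf{(FC)} enters only through Proposition~\ref{proposition:monotonic} (needed when the \textbf{IPL}-schemas are instantiated with $\Diamond$-formulas) is a correct and useful observation that goes slightly beyond what the paper makes explicit.
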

%
%
%
%
%
%
%
%
%
%
%
%
%
%
%
%
%
%
Theorem~\ref{theorem:soundness:dfik} can be proved by induction on the length of the derivation of $A$.
Later, we will prove the converse inclusion (Completeness) saying that \logic\ $\subseteq$\ \Derivable.
At the heart of our proof of completeness, there will be the concept of theory.
\begin{definition}[Theories]
A \textit{theory} is a set of formulas containing \Derivable~and closed with respect to $\MP$.
A theory $\Gamma$ is \textit{proper} if $\bot\not\in\Gamma$.
A proper theory $\Gamma$ is \textit{prime} if for all formulas $A,B$, if $A\vee B\in\Gamma$ then either $A\in\Gamma$, or $B\in\Gamma$.
For all theories $\Gamma$ and for all formulas $A$, let $\Gamma+A=\{B\in\+L:\ A\supset B\in\Gamma\}$ and $\square\Gamma=\{A\in\+L:\ \square A\in\Gamma\}$.
\end{definition}
Obviously, \Derivable~is the least theory and $\+L$ is the greatest theory.
Moreover,
%
%
%
%
%
%
%
%
for all theories $\Gamma$, $\Gamma$ is proper if and only if $\Gamma\not=\+L$ if and only if $\Diamond\bot\not\in\Gamma$.
%
%
%
%
%
%
%
%
%
%
\begin{lemma}\label{lemma:theory:gamma:plus:A}
For all theories $\Gamma$ and for all formulas $A$,
%
%
%
%
(i)~$\Gamma+A$ is the least theory containing $\Gamma$ and $A$;
%
%
(ii)~$\Gamma+A$ is proper if and only if $\neg A\not\in\Gamma$;
%
%
(iii)~$\square\Gamma$ is a theory.
%
%
%
%
\end{lemma}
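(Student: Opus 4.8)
The plan is to prove the three parts of Lemma~\ref{lemma:theory:gamma:plus:A} in order, relying throughout on the deduction-theorem-style behaviour of the operation $\Gamma+A$ and on the fact that $\Gamma$ already contains all of \textbf{IPL} and is closed under $\MP$.

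For part (i), I would first check that $\Gamma+A = \{B : A\supset B\in\Gamma\}$ is a theory: it contains \Derivable\ because for any $\vdash C$ we have $\vdash A\supset C$ (a \textbf{IPL} tautology instance combined with $\MP$), hence $A\supset C\in\Gamma$; and it is closed under $\MP$ because if $A\supset B\in\Gamma$ and $A\supset(B\supset C)\in\Gamma$ then using the \textbf{IPL} theorem $(A\supset B)\supset((A\supset(B\supset C))\supset(A\supset C))$ and two applications of $\MP$ inside $\Gamma$ we get $A\supset C\in\Gamma$. Next, $\Gamma\subseteq\Gamma+A$ since $B\supset(A\supset B)$ is an \textbf{IPL} axiom, and $A\in\Gamma+A$ since $A\supset A\in\Gamma$. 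Finally, for minimality, suppose $\Delta$ is any theory with $\Gamma\cup\{A\}\subseteq\Delta$; if $B\in\Gamma+A$ then $A\supset B\in\Gamma\subseteq\Delta$, and since $A\in\Delta$ and $\Delta$ is closed under $\MP$, $B\in\Delta$. Hence $\Gamma+A\subseteq\Delta$.

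For part (ii), the equivalence ``$\Gamma+A$ is proper iff $\neg A\notin\Gamma$'' unfolds to ``$\bot\notin\Gamma+A$ iff $A\supset\bot\notin\Gamma$,'' which is immediate from the definition of $\Gamma+A$ since $\neg A$ is by notation exactly $A\supset\bot$: $\bot\in\Gamma+A$ iff $A\supset\bot\in\Gamma$ iff $\neg A\in\Gamma$, and negate both sides. For part (iii), I would show $\square\Gamma = \{A : \square A\in\Gamma\}$ is a theory: it contains \Derivable\ because if $\vdash A$ then $\vdash\square A$ by $\NEC$, so $\square A\in\Gamma$ and $A\in\square\Gamma$; and it is closed under $\MP$ because if $A\in\square\Gamma$ and $A\supset B\in\square\Gamma$, i.e.\ $\square A\in\Gamma$ and $\square(A\supset B)\in\Gamma$, then by axiom $(\K_{\square})$ we have $\square(A\supset B)\supset(\square A\supset\square B)\in\Gamma$, so two applications of $\MP$ inside $\Gamma$ yield $\square B\in\Gamma$, i.e.\ $B\in\square\Gamma$.

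None of the three parts presents a genuine obstacle; the only mild subtlety is being careful that all the ``meta'' manipulations are really justified by closure under $\MP$ together with membership of the relevant \textbf{IPL} tautologies (and, for (iii), of the instance of $(\K_{\square})$ and the use of the rule $\NEC$ at the level of derivability). In particular one should not appeal to a deduction theorem for \Derivable\ itself but only to \textbf{IPL}-provable implicational schemata inside $\Gamma$, which is exactly what the definition of $\Gamma+A$ is engineered to make available. The hardest bookkeeping step is the $\MP$-closure of $\Gamma+A$ in part (i), where one must pick the right \textbf{IPL} tautology to contract the two nested implications, but this is routine.
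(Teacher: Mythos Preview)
Your proof is correct and follows exactly the route the paper indicates: it explicitly says that the lemma ``can be proved by using standard axioms of \textbf{IPL}, inference rules $(\MP)$ and $(\NEC)$ and axiom $(\K_{\square})$,'' and your three parts unpack precisely these ingredients. There is nothing to add.
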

%
%
%
%
%
%
%
%
%
%
%
%
%
%
%
%
%
%
%
%
%
%
Lemma~\ref{lemma:theory:gamma:plus:A} can be proved by using standard axioms of \textbf{IPL}, inference rules $(\MP)$ and $(\NEC)$ and axiom $\K_{\square}$.

\begin{lemma}[Lindenbaum's Lemma]\label{lemma:almost:completeness}
  Let $A$ be a formula. If $A\not\in$ \Derivable~then there exists a prime theory $\Gamma$ such that $A\not\in\Gamma$.
\end{lemma}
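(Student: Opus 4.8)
The plan is to follow the classical Lindenbaum construction adapted to prime theories. First I would fix an enumeration $A_0, A_1, A_2, \dots$ of all formulas in $\+L$, which is possible since $\textsf{At}$ is countable and hence $\+L$ is countable. Since $A \notin \Derivable$, the set $\Derivable$ itself is a proper theory not containing $A$; indeed if $\bot \in \Derivable$ then $A \in \Derivable$ by intuitionistic reasoning, contradicting the hypothesis, so $\Derivable$ is proper. This will be the starting point $\Gamma_0 = \Derivable$ of the construction.

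Next I would build an increasing chain of proper theories $\Gamma_0 \subseteq \Gamma_1 \subseteq \Gamma_2 \subseteq \cdots$, all of which avoid $A$, by a stage-by-stage process. At stage $n+1$, given the proper theory $\Gamma_n$ with $A \notin \Gamma_n$, I consider the formula $A_n$ and ask whether the theory $\Gamma_n + A_n$ still avoids $A$; if it does, set $\Gamma_{n+1} = \Gamma_n + A_n$, and otherwise set $\Gamma_{n+1} = \Gamma_n$. Here I use Lemma~\ref{lemma:theory:gamma:plus:A}(i) to know that $\Gamma_n + A_n$ is again a theory containing $\Gamma_n$, and the key point is that $\Gamma_{n+1}$ remains proper: if $A \notin \Gamma_{n+1}$ then in particular $\bot \notin \Gamma_{n+1}$ since $\bot \supset A$ is an \textbf{IPL}-theorem, so $\bot \in \Gamma_{n+1}$ would force $A \in \Gamma_{n+1}$. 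Finally I take $\Gamma = \bigcup_{n\in\N}\Gamma_n$. A routine check shows $\Gamma$ is a theory: it contains $\Derivable$, and closure under $\MP$ follows because any two premises lie in some common $\Gamma_n$ by the chain property. Moreover $A \notin \Gamma$ because $A \notin \Gamma_n$ for every $n$.

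It then remains to verify that $\Gamma$ is prime. Suppose $B \vee C \in \Gamma$ but $B \notin \Gamma$ and $C \notin \Gamma$. Let $B = A_i$ and $C = A_j$ in the enumeration. Since $B \notin \Gamma$, at stage $i+1$ we must have rejected $B$, i.e. $A \in \Gamma_i + B$, which by Lemma~\ref{lemma:theory:gamma:plus:A}(i) means $A$ belongs to the least theory containing $\Gamma_i$ and $B$; since $\Gamma_i \subseteq \Gamma$, this gives $B \supset A \in \Gamma$. Symmetrically $C \supset A \in \Gamma$. Now from $B \vee C$, $B \supset A$ and $C \supset A$, an application of the \textbf{IPL} disjunction-elimination principle (available since $\Gamma$ is a theory, hence contains all \textbf{IPL}-theorems and is closed under $\MP$) yields $A \in \Gamma$, contradicting $A \notin \Gamma$. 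Hence $\Gamma$ is prime, and it is proper since it avoids $A$ and therefore avoids $\bot$.

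The construction is essentially standard, so there is no genuinely deep obstacle; the point requiring the most care is the invariant that each $\Gamma_n$ stays \emph{proper} throughout — this is exactly where one must argue that avoiding $A$ already forces avoiding $\bot$, using that $\bot \supset A$ is derivable. A secondary subtlety is making sure, in the primeness argument, that ``$\Gamma_i + B$ contains $A$'' is correctly translated into ``$B \supset A \in \Gamma_i \subseteq \Gamma$'', which is precisely the content of the definition of $\Gamma_i + B$ together with Lemma~\ref{lemma:theory:gamma:plus:A}(i); everything else is bookkeeping with the enumeration and the chain.
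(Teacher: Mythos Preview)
Your argument is correct: the enumeration-and-chain construction is the standard Lindenbaum technique, and you handle the two genuine checkpoints (propriety via $\bot\supset A$, and primeness via the rejection step giving $B\supset A\in\Gamma_i$) accurately. One tiny wording point: in the primeness step you should say explicitly that $B\notin\Gamma$ implies $B\notin\Gamma_{i+1}$, hence the branch $\Gamma_{i+1}=\Gamma_i+B$ is impossible (since that branch would put $B\in\Gamma_{i+1}$), forcing the other branch $A\in\Gamma_i+B$; you essentially say this, but make the case split visible.

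The paper itself gives no proof of this lemma, treating it as folklore. Judging from the surrounding material, the authors' preferred idiom is Zorn's Lemma rather than an explicit enumeration: the Existence Lemma (Lemma~\ref{lemma:prime:proper:for:implication}) and Lemma~\ref{fc:canonical:frame:is:forward:confluent} both build the required prime theory as a maximal element of a suitable poset of theories. In that style one would take $\mathcal{S}=\{\Delta:\Delta\text{ is a theory with }A\notin\Delta\}$, note $\Derivable\in\mathcal{S}$, apply Zorn, and then show maximality forces primeness by the same $B\supset A$, $C\supset A$ trick you use. Your approach has the mild advantage of avoiding choice for a countable language; the Zorn approach is shorter to write and matches the paper's house style. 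Either is perfectly acceptable here.
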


\begin{definition}[Canonical model]
Let $\bowtie$ be the binary relation between sets of formulas such that for all sets $\Delta,\Lambda$ of formulas, $\Delta\bowtie\Lambda$ iff for all formulas $B$, the following conditions hold: (i) if $\square B\in\Delta$ then $B\in\Lambda$ and (ii) if $B\in\Lambda$ then $\Diamond B\in\Delta$.

%
%
%
%
%
%
%
%
%
Let $(W_{c},\leq_{c},R_{c})$ be the frame such that
%
%
%
%
$W_{c}$ is the set of all prime theories,
%
%
$\leq_{c}$ is the inclusion relation on $W_{c}$ and
%
%
$R_{c}$ is the restriction of $\bowtie$ to $W_{c}$.
%
%
%
%
For all $\Gamma,\Delta\in W_{c}$, we write ``$\Gamma\geq_{c}\Delta$'' instead of ``$\Delta\leq_{c}\Gamma$''.
Let $V_{c}:\ W_{c}\longrightarrow\wp(\textsf{At})$ be the valuation on $W_{c}$ such that for all $\Gamma$ in $W_{c}$, $V_{c}(\Gamma)=\Gamma\cap\textsf{At}$.
%
%
\end{definition}
By Theorem~\ref{theorem:soundness:dfik}, $\bot\not\in$ \Derivable.
Hence, by Lemma~\ref{lemma:almost:completeness}, $W_{c}$ is nonempty.
\begin{lemma}\label{fc:canonical:frame:is:forward:confluent}
$(W_{c},\leq_{c},R_{c},V_{c})$ satisfies the frame condition {\bf (FC)}.
\end{lemma}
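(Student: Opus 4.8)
The plan is to spell out what {\bf (FC)} requires of the canonical frame and then meet it with a Lindenbaum-style construction. Concretely, suppose $\Gamma,\Delta\in W_{c}$ and $\Lambda\in W_{c}$ satisfy $\Gamma\geq_{c}\Lambda$ (that is, $\Lambda\subseteq\Gamma$) and $R_{c}\Lambda\Delta$ (that is, $\Lambda\bowtie\Delta$); I must produce a prime theory $\Theta$ with $R_{c}\Gamma\Theta$ and $\Theta\geq_{c}\Delta$.

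First I would consider the least theory $\Gamma^{*}$ containing $\square\Gamma\cup\Delta$; it is well defined since $\square\Gamma$ is a theory by Lemma~\ref{lemma:theory:gamma:plus:A}(iii) and $\Delta$ is a theory. Using the deduction property of \textbf{IPL}, the fact that $\Delta$ is closed under conjunction, and the derivability of $\square C_{1}\wedge\square C_{2}\supset\square(C_{1}\wedge C_{2})$ (via $\NEC$ and $\K_{\square}$), one obtains the convenient description: $B\in\Gamma^{*}$ iff $\vdash(C\wedge D)\supset B$ for some $C,D$ with $\square C\in\Gamma$ and $D\in\Delta$. Now set $Z=\{B\in\+L:\ \Diamond B\notin\Gamma\}$. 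The key claim is that $\Gamma^{*}\cap Z=\emptyset$: if $B\in\Gamma^{*}$, pick such $C,D$; from $\vdash C\supset(D\supset B)$ we get $\square(D\supset B)\in\Gamma$ by $\NEC$ and $\K_{\square}$, then $\Diamond D\supset\Diamond B\in\Gamma$ by $\K_{\Diamond}$, while $D\in\Delta$ together with $\Lambda\bowtie\Delta$ and $\Lambda\subseteq\Gamma$ yields $\Diamond D\in\Gamma$, hence $\Diamond B\in\Gamma$ by $\MP$, i.e.\ $B\notin Z$. Since moreover $\Diamond\bot\notin\Gamma$ (because $\neg\Diamond\bot$ is derivable, hence in $\Gamma$, and $\Gamma$ is proper), we have $\bot\in Z$, so this claim also shows that $\Gamma^{*}$ is proper.

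Next I would observe that $Z$ is closed under disjunction: if $\Diamond B_{1}\notin\Gamma$ and $\Diamond B_{2}\notin\Gamma$, then $\Diamond B_{1}\vee\Diamond B_{2}\notin\Gamma$ since $\Gamma$ is prime, whence $\Diamond(B_{1}\vee B_{2})\notin\Gamma$ by $(\DP)$ and $\MP$, i.e.\ $B_{1}\vee B_{2}\in Z$. The standard Lindenbaum construction for prime theories now applies: enumerating the disjunctions and, each time one of them lies in the theory built so far, adjoining a disjunct in a way that preserves disjointness from $Z$ — which is possible exactly because $Z$ is closed under $\vee$ and the current theory is disjoint from $Z$ — yields a prime theory $\Theta\supseteq\Gamma^{*}$ with $\Theta\cap Z=\emptyset$. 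Being prime, $\Theta\in W_{c}$. It remains to verify that $\Theta$ is as required: $\square\Gamma\subseteq\Gamma^{*}\subseteq\Theta$ gives clause~(i) of $\Gamma\bowtie\Theta$; $\Theta\cap Z=\emptyset$ gives clause~(ii); and $\Delta\subseteq\Gamma^{*}\subseteq\Theta$ gives $\Theta\geq_{c}\Delta$. Hence $R_{c}\Gamma\Theta$ and $\Theta\geq_{c}\Delta$, which is what {\bf (FC)} demands.

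I expect the only genuine obstacle to be the key claim $\Gamma^{*}\cap Z=\emptyset$: it is where $\K_{\square}$, $\K_{\Diamond}$ and the $\bowtie$-link between $\Lambda$ and $\Delta$ really enter, and it is the modal core of the lemma. The $\vee$-closure of $Z$ (using $(\DP)$ and primeness of $\Gamma$) and the Lindenbaum construction are routine, as are the final inclusions, which follow directly from the definitions of $\bowtie$, $\leq_{c}$ and $R_{c}$.
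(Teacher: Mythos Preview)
Your proposal is correct and follows essentially the same approach as the paper: both construct $\Theta$ as a prime theory extending the theory generated by $\square\Gamma\cup\Delta$ while staying disjoint from $\{B:\Diamond B\notin\Gamma\}$, relying on $\K_\square$, $\K_\Diamond$, $(\DP)$ and the hypothesis $\Lambda\bowtie\Delta$ for the key modal step. The only difference is organizational: you use a Lindenbaum-style extension avoiding a $\vee$-closed set, whereas the paper invokes Zorn's Lemma on the poset of theories $\Theta$ satisfying $\Gamma\bowtie\Theta$ and $\Theta\supseteq\Delta$ and then verifies primeness of the maximal element directly.
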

%
%
The proof of the completeness will be based on the following lemmas.
\begin{lemma}[Existence Lemma]\label{lemma:prime:proper:for:implication}
Let $\Gamma$ be a prime theory.
Let $B,C$ be formulas.
\begin{enumerate}
\item If $B\supset C\not\in\Gamma$ then there exists a prime theory $\Delta$ such that $\Gamma\subseteq\Delta$, $B\in\Delta$ and $C\not\in\Delta$,
\item if $\square B\not\in\Gamma$ then there exists prime theories $\Delta,\Lambda$ such that $\Gamma\subseteq\Delta$, $\Delta\bowtie\Lambda$ and $B\not\in\Lambda$,
\item if $\Diamond B\in\Gamma$ then there exists a prime theory $\Delta$ such that $\Gamma\bowtie\Delta$ and $B\in\Delta$.
\end{enumerate}
\end{lemma}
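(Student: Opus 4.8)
All three clauses produce the required prime theories as Lindenbaum-style saturations of suitable auxiliary theories. I will use throughout the \emph{relativized} form of Lemma~\ref{lemma:almost:completeness} — if $\Theta$ is a theory and $A\notin\Theta$, then there is a prime theory $\Theta'\supseteq\Theta$ with $A\notin\Theta'$ — which is what the proof of Lemma~\ref{lemma:almost:completeness} actually delivers, together with the facts that theories are closed under conjunction and that $\Theta+A$ and $\square\Theta$ are theories (Lemma~\ref{lemma:theory:gamma:plus:A}). Clause~(1) is then immediate: since $B\supset C\notin\Gamma$, the theory $\Gamma+B$, least among theories containing $\Gamma$ and $B$, omits $C$ and is therefore proper; relativized Lindenbaum gives a prime $\Delta\supseteq\Gamma+B$ with $C\notin\Delta$, and $B\in\Gamma+B\subseteq\Delta$.

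For clause~(3), assume $\Diamond B\in\Gamma$ and set $\Xi:=\{D:\Diamond D\in\Gamma\}$. Since $\Gamma\bowtie\Delta$ means exactly $\square\Gamma\subseteq\Delta\subseteq\Xi$, the goal is a prime $\Delta$ with $\square\Gamma\subseteq\Delta\subseteq\Xi$ and $B\in\Delta$. Start from $(\square\Gamma)+B$: using $(\K_{\Diamond})$ in the shape $\square(B\supset D)\supset(\Diamond B\supset\Diamond D)$ together with $\Diamond B\in\Gamma$ one gets $(\square\Gamma)+B\subseteq\Xi$, and (via $(\NotPosBot)$, i.e.\ $\Diamond\bot\notin\Gamma$) that this theory is proper. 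It remains to saturate it to a prime theory without leaving $\Xi$, i.e.\ to run the standard prime-extension construction but adding a formula only when the resulting theory still lies in $\Xi$. That primeness is not lost is the crux and is supplied by $(\DP)$: if $F\vee G\in\Xi$ then $\Diamond(F\vee G)\in\Gamma$, so $\Diamond F\vee\Diamond G\in\Gamma$, and primeness of $\Gamma$ forces $F\in\Xi$ or $G\in\Xi$; so $\Xi$ has the disjunction property, which rules out the situation in which neither of the two disjuncts of a formula in the saturation could have been added without escaping $\Xi$.

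For clause~(2), assume $\square B\notin\Gamma$. The first move is to enlarge $\Gamma$, and here $(\wCD)$ is essential: instantiating it with $p:=\top$, $q:=B$ and using $\vdash\square(\top\vee B)$ yields $\vdash(\Diamond\top\supset\square B)\supset\square B$, so $\Diamond\top\supset\square B\notin\Gamma$ and hence $\Gamma+\Diamond\top$ is a proper theory still omitting $\square B$. Relativized Lindenbaum gives a prime $\Delta\supseteq\Gamma$ with $\Diamond\top\in\Delta$ and $\square B\notin\Delta$. Since $\Diamond\top\in\Delta$, $(\K_{\Diamond})$ gives $\square D\in\Delta\Rightarrow\Diamond D\in\Delta$, that is $\square\Delta\subseteq\Xi_{\Delta}:=\{D:\Diamond D\in\Delta\}$, while $\square B\notin\Delta$ gives $B\notin\square\Delta$. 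One then extracts a prime $\Lambda$ with $\square\Delta\subseteq\Lambda\subseteq\Xi_{\Delta}$ and $B\notin\Lambda$ — equivalently $\Delta\bowtie\Lambda$ and $B\notin\Lambda$ — by a restricted prime-extension of $\square\Delta$, staying inside $\Xi_{\Delta}$ (which again has the disjunction property by $(\DP)$) and avoiding $B$.

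\emph{Expected main obstacle.} The hard part is the restricted Lindenbaum in clause~(2): one has to preserve primeness while simultaneously (a) staying inside $\Xi_{\Delta}$ and (b) keeping $B$ out, and these two constraints can conflict precisely when $\Diamond B\in\Delta$. Reconciling them is exactly where the preparatory shaping of $\Delta$ (forcing $\Diamond\top$ in and keeping $\square B$, and where possible $\Diamond B$, out) and the joint use of $(\wCD)$, $(\K_{\Diamond})$, $(\DP)$ and $(\NotPosBot)$ come into play; in the stubborn case it is cleanest to build $\Delta$ and $\Lambda$ by a single simultaneous induction rather than one after the other. Clauses~(1) and~(3) are by comparison routine once the relativized Lindenbaum lemma and the disjunction property of $\Xi$ are available.
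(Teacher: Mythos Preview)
Your treatment of clauses~(1) and~(3) is essentially the paper's: in~(1) saturate $\Gamma+B$ avoiding $C$; in~(3) saturate $(\square\Gamma)+B$ inside $\Xi=\{D:\Diamond D\in\Gamma\}$, with $(\K_\Diamond)$ providing the base case and $(\DP)$ plus primeness of $\Gamma$ providing the disjunction property of $\Xi$ needed for the primeness step. No issues there.

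For clause~(2), however, there is a genuine gap exactly where you flag the ``expected main obstacle,'' and your suggested fallback (building $\Delta$ and $\Lambda$ by a simultaneous induction) is not concrete enough to count as a resolution. With your two constraints on $\Lambda$ --- stay inside $\Xi_\Delta$ and keep $B$ out --- the primeness argument for a Zorn-maximal $\Lambda$ breaks down in the mixed case: if $C\vee D\in\Lambda$, $\Lambda+C$ forces $B$ (so $C\supset B\in\Lambda$), and $\Lambda+D$ escapes $\Xi_\Delta$ via some $G$ with $\Diamond G\notin\Delta$ (so $D\supset G\in\Lambda$), then you only get $B\vee G\in\Lambda\subseteq\Xi_\Delta$, hence $\Diamond B\in\Delta$ or $\Diamond G\in\Delta$, hence $\Diamond B\in\Delta$. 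That is no contradiction when $\Diamond B\in\Gamma$ to begin with, and nothing in your setup excludes this. Your deployment of $(\wCD)$ only for the instance $p:=\top$ is too weak to help here.

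The paper's fix is not a simultaneous construction but a sharper constraint on $\Lambda$: instead of $\Lambda\subseteq\Xi_\Delta$, require that for all $C$, $C\vee B\in\Lambda\Rightarrow\Diamond C\in\Delta$. This still implies $\Lambda\subseteq\Xi_\Delta$ (take $C\in\Lambda$, then $C\vee B\in\Lambda$), but now the mixed case collapses immediately: $G\vee B\in\Lambda$ gives $\Diamond G\in\Delta$, contradiction. What remains is to show that $\square\Delta$ already satisfies this sharper condition, and \emph{that} is where $(\wCD)$ is used in full generality and where you need $\Delta$ to be \emph{maximal} (not merely prime) among theories extending $\Gamma$ and omitting $\square B$: if $C\vee B\in\square\Delta$ but $\Diamond C\notin\Delta$, maximality of $\Delta$ gives $\Diamond C\supset\square B\in\Delta$, and then $(\wCD)$ applied to $\square(C\vee B)\in\Delta$ yields $\square B\in\Delta$, a contradiction. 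So $\Delta$ and $\Lambda$ are still built sequentially; the missing idea is the ``$\vee B$'' twist in the invariant, together with the use of $(\wCD)$ for arbitrary $C$ rather than just $\top$.
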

%
%
%
%
%
%
%
%
%
%
%
%

%
%
\begin{lemma}[Truth Lemma]\label{lemma:truth:lemma}
For all formulas $A$ and for all $\Gamma\in W_{c}$, $A\in\Gamma$ if and only if $\Gamma\models A$.
\end{lemma}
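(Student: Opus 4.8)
The plan is to prove the Truth Lemma by induction on the size ${\mid}A{\mid}$, simultaneously for all $\Gamma\in W_{c}$. The base cases are immediate: for $A=p$ atomic, $p\in\Gamma$ iff $p\in V_{c}(\Gamma)$ by definition of $V_{c}$, which is exactly $\Gamma\models p$; for $A=\bot$, neither side holds, since every $\Gamma\in W_{c}$ is a proper theory; for $A=\top$, both sides hold since $\top\in$ \Derivable.

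For the propositional connectives the argument is the standard Lindenbaum-style one. For $A=B\wedge C$ and $A=B\vee C$ I would use that $\Gamma$ is a theory (closed under $\MP$ and containing all \textbf{IPL} theorems) and, for the disjunction, that $\Gamma$ is prime; together with the induction hypothesis this yields both directions. For $A=B\supset C$: if $B\supset C\in\Gamma$ and $\Gamma\subseteq\Delta$ with $\Delta\models B$, then $B\in\Delta$ by the induction hypothesis, and since $\Gamma\subseteq\Delta$ also $B\supset C\in\Delta$, hence $C\in\Delta$ by $\MP$ and $\Delta\models C$ by the induction hypothesis; conversely, if $B\supset C\notin\Gamma$, item~1 of the Existence Lemma (Lemma~\ref{lemma:prime:proper:for:implication}) produces a prime $\Delta\supseteq\Gamma$ with $B\in\Delta$ and $C\notin\Delta$, so by the induction hypothesis $\Delta\models B$ and $\Delta\not\models C$, witnessing $\Gamma\not\models B\supset C$.

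The modal cases are where the design of the canonical model is used, and they form the crux. For $A=\square B$: if $\square B\in\Gamma$, take any $\Delta,\Lambda\in W_{c}$ with $\Gamma\leq_{c}\Delta$ and $R_{c}\Delta\Lambda$; then $\square B\in\Delta$ since $\Gamma\subseteq\Delta$, hence $B\in\Lambda$ by clause~(i) of the definition of $\bowtie$, so $\Lambda\models B$ by the induction hypothesis, i.e.\ $\Gamma\models\square B$. Conversely, if $\square B\notin\Gamma$, item~2 of the Existence Lemma gives prime $\Delta,\Lambda$ with $\Gamma\subseteq\Delta$, $\Delta\bowtie\Lambda$ and $B\notin\Lambda$; by the induction hypothesis $\Lambda\not\models B$, and since $\Gamma\leq_{c}\Delta$ and $R_{c}\Delta\Lambda$ this shows $\Gamma\not\models\square B$. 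For $A=\Diamond B$: if $\Diamond B\in\Gamma$, item~3 of the Existence Lemma gives a prime $\Delta$ with $\Gamma\bowtie\Delta$ and $B\in\Delta$; then $R_{c}\Gamma\Delta$ and $\Delta\models B$ by the induction hypothesis, so $\Gamma\models\Diamond B$. Conversely, if $\Gamma\models\Diamond B$, there is $\Delta\in W_{c}$ with $R_{c}\Gamma\Delta$ and $\Delta\models B$; by the induction hypothesis $B\in\Delta$, and then $\Diamond B\in\Gamma$ by clause~(ii) of the definition of $\bowtie$.

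I expect the real work to be already packaged in the Existence Lemma and in the fact that $(W_{c},\leq_{c},R_{c},V_{c})$ is a genuine bi-relational model satisfying {\bf (FC)} (Lemma~\ref{fc:canonical:frame:is:forward:confluent}); the Truth Lemma itself then proceeds routinely. The one place demanding care is the $\square$ case, where one must first pass from $\Gamma$ to an arbitrary $\leq_{c}$-successor $\Delta$ before invoking the definition of $\bowtie$, so as to match the two-step forcing clause for $\square$ in Definition~\ref{definition:forcing:relation}; the analogous care in the $\supset$ case is what forces the use of item~1 of the Existence Lemma rather than mere closure under $\MP$.
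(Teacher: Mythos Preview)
Your proof is correct and follows essentially the same route as the paper's own argument: induction on $A$, with the $\supset$, $\square$, $\Diamond$ cases handled via the three items of the Existence Lemma in the nontrivial direction and via closure under $\MP$ (resp.\ the two clauses of $\bowtie$) in the other direction. Your identification of the two-step $\leq_{c}$-then-$R_{c}$ handling for $\square$ and the use of clause~(ii) of $\bowtie$ for the converse of the $\Diamond$ case matches the paper exactly.
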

%
%
%
%
The proof of Lemma~\ref{lemma:truth:lemma} can be done by induction on the size of $A$.
The case when $A$ is an atomic proposition is by definition of $V_{c}$.
The cases when $A$ is of the form $\bot,\top$, $B\wedge C$ and $B\vee C$ are as usual.
The cases when $A$ is of the form $B\supset C$, $\square B$ and $\Diamond B$ use the Existence Lemma.

As for the proof of Theorem~\ref{theorem:completeness:dfik}, it can be done by contraposition.
Indeed, if $\not\vdash A$ then by Lemma~\ref{lemma:almost:completeness}, there exists a prime theory $\Gamma$ such that $A\not\in\Gamma$.
Thus, by Lemma~\ref{lemma:truth:lemma}, $\Gamma\not\models A$.
Consequently, $\not\Vdash A$.
%
%
%
%
%
%
\begin{theorem}[Completeness]\label{theorem:completeness:dfik}
\logic\ $\subseteq$\ \Derivable, i.e. for all formulas $A$, if $\Vdash A$ then $\vdash A$.
\end{theorem}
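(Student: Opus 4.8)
The plan is to prove Completeness by contraposition in the standard Henkin style, relying on the canonical model machinery already set up in the excerpt. Assume $\not\vdash A$. By Lindenbaum's Lemma (Lemma~\ref{lemma:almost:completeness}) there is a prime theory $\Gamma$ with $A\notin\Gamma$. By Lemma~\ref{fc:canonical:frame:is:forward:confluent}, the canonical structure $(W_c,\leq_c,R_c,V_c)$ is a genuine bi-relational model satisfying \textbf{(FC)}; the hereditary condition on $V_c$ is immediate since $\leq_c$ is inclusion. By the Truth Lemma (Lemma~\ref{lemma:truth:lemma}), $A\notin\Gamma$ gives $\Gamma\not\models A$, witnessing $\not\Vdash A$. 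Hence $\Vdash A$ implies $\vdash A$, which is exactly \logic\ $\subseteq$\ \Derivable.

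Since the only work left is in the supporting lemmas, I would organize the argument so that the proof of Theorem~\ref{theorem:completeness:dfik} itself is a two-line assembly, and indicate where each lemma enters. The Truth Lemma is proved by induction on $|A|$: atoms by definition of $V_c$; $\bot,\top,\wedge,\vee$ routine (using primeness for the $\vee$ case); $B\supset C$ using part~(1) of the Existence Lemma for the left-to-right direction and monotonicity (Proposition~\ref{proposition:monotonic}, instantiated in the canonical model) for the other; $\square B$ using part~(2) of the Existence Lemma; $\Diamond B$ using part~(3). The Existence Lemma in turn rests on Lemma~\ref{lemma:theory:gamma:plus:A} and on a careful extension-to-prime-theory argument (a localized Lindenbaum construction) applied to $\Gamma+B$, to $\square\Gamma$, and to the relevant witness sets, checking properness at each step via clause (ii) of Lemma~\ref{lemma:theory:gamma:plus:A}.

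The main obstacle is not Theorem~\ref{theorem:completeness:dfik} as such but Lemma~\ref{fc:canonical:frame:is:forward:confluent}, i.e.\ verifying that the canonical frame satisfies \textbf{(FC)}: given prime theories with $\Gamma\geq_c\Delta$ and $\Delta\bowtie\Lambda$, one must produce a prime theory $\Theta$ with $\Gamma\bowtie\Theta$ and $\Theta\geq_c\Lambda$. The natural candidate is to build $\Theta$ as a prime extension of the set $\{\Diamond\text{-antecedents forced by }\Gamma\}\cup\Lambda$ while keeping out everything $\Gamma$ does not $\Diamond$-accept; the delicate point is showing this set is consistent (properness), and this is precisely where the new axiom $(\wCD)$ — together with $(\DP)$ and $(\NotPosBot)$ — is needed, since these are exactly the principles relating $\square$, $\Diamond$, $\vee$ and $\bot$ that make the confluence witness available. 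Once that lemma is in place, the remaining cases of the Existence Lemma for $\square$ and $\Diamond$ use the same extension technique, and the final theorem follows by the contraposition sketched above.
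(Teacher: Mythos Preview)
Your proposal is correct and follows exactly the paper's approach: contraposition via Lindenbaum's Lemma, the canonical model (which is a legitimate \textbf{(FC)} bi-relational model by Lemma~\ref{fc:canonical:frame:is:forward:confluent}), and the Truth Lemma. One small correction to your commentary on the supporting lemmas: in the paper, axiom $(\wCD)$ is invoked in the proof of the Existence Lemma (case~2, for $\square B$, where one shows that if $C\vee B\in\square\Delta$ then $\Diamond C\in\Delta$), not in the proof of Lemma~\ref{fc:canonical:frame:is:forward:confluent}; the \textbf{(FC)} verification for the canonical frame uses only $(\K_\square)$, $(\K_\Diamond)$, $(\NotPosBot)$ and $(\DP)$.
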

%
%
%
%
%
%
%
%
%
%

%
%
As mentioned above, there exists many variants of \textbf{IML}.
Therefore, one may ask how much {\em natural}\/ is the variant we consider here.
Simpson~\cite[Chapter~$3$]{Simpson:1994} discusses the formal features that might be expected of an \textbf{IML} $\LLL$:
\begin{description}
\item[$(C_{1})$] $\LLL$ is conservative over \textbf{IPL},
\item[$(C_{2})$] $\LLL$ contains all substitution instances of \textbf{IPL} and is closed under $(\MP)$,
\item[$(C_{3})$] for all formulas $A,B$, if $A\vee B$ is in $\LLL$ then either $A$ is in $\LLL$, or $B$ is in $\LLL$,
\item[$(C_{4})$] the addition of the law of excluded middle to $\LLL$ yields modal logic \textbf{K},
\item[$(C_{5})$] $\square$ and $\Diamond$ are independent in $\LLL$.
\end{description}
The fact that \Derivable~satisfies features $(C_{1})$ and $(C_{2})$ is an immediate consequence of Theorems~\ref{theorem:soundness:dfik} and~\ref{theorem:completeness:dfik}.
The fact that \Derivable~satisfies feature $(C_{3})$ will be proved in Section~\ref{section:disjunctive:property}.
Concerning feature $(C_{4})$, let \Derivable$^{+}$~be the Hilbert-style axiom system consisting of \Derivable~plus the law $p\vee\neg p$ of excluded middle.
The set of all \Derivable$^{+}$-derivable formulas will also be denoted \Derivable$^{+}$.
Obviously, \Derivable$^{+}$~contains all substitution instances of \textbf{CPL} and is closed under $(\MP)$.
Moreover, it contains all substitution instances of $(\K_{\square})$ and is closed under $(\NEC)$.
Therefore, in order to prove that \Derivable~satisfies feature $(C_{4})$, it suffices to prove
\begin{lemma}\label{lemma:about:excluded:middle:consequences}
$\Diamond p\equiv\neg\square\neg p$ is in \Derivable$^{+}$.
\end{lemma}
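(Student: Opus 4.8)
The plan is to split the biconditional into its two halves and show $\Diamond p\supset\neg\square\neg p$ already in \Derivable, and $\neg\square\neg p\supset\Diamond p$ in \Derivable$^{+}$ using the excluded middle $p\vee\neg p$; the statement then follows from the definition of $\equiv$. Throughout I use that \Derivable\ (hence \Derivable$^{+}$) contains all standard \textbf{IPL}-theorems and is closed under propositional reasoning, $(\MP)$, $(\NEC)$, and substitution instances of the modal axioms, and I recall that $\neg A$ abbreviates $A\supset\bot$.

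For the direction $\Diamond p\supset\neg\square\neg p$, I would first instantiate $(\K_{\Diamond})$ with $q:=\bot$, obtaining $\square(p\supset\bot)\supset(\Diamond p\supset\Diamond\bot)$, i.e. $\square\neg p\supset(\Diamond p\supset\Diamond\bot)$. Combining this with $(\NotPosBot)$, namely $\Diamond\bot\supset\bot$, and chaining implications in \textbf{IPL} yields $\square\neg p\supset(\Diamond p\supset\bot)$, that is $\square\neg p\supset\neg\Diamond p$. Since $A\supset\neg B$ and $B\supset\neg A$ are \textbf{IPL}-interderivable, this gives $\Diamond p\supset\neg\square\neg p$, and all of this already lives in \Derivable.

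For the direction $\neg\square\neg p\supset\Diamond p$, which is the only place the excluded middle is needed, I would start from the \Derivable$^{+}$-axiom $p\vee\neg p$ and apply $(\NEC)$ to get $\square(p\vee\neg p)\in$ \Derivable$^{+}$. Then I would instantiate $(\wCD)$ with the same $p$ and with $q:=\neg p$, namely $\square(p\vee\neg p)\supset\big((\Diamond p\supset\square\neg p)\supset\square\neg p\big)$, and detach $\square(p\vee\neg p)$ by $(\MP)$ to obtain $(\Diamond p\supset\square\neg p)\supset\square\neg p$. Over \textbf{CPL} a formula of shape $(A\supset B)\supset B$ is equivalent to $A\vee B$, so this gives $\Diamond p\vee\square\neg p$; then disjunctive syllogism with the hypothesis $\neg\square\neg p$ yields $\Diamond p$, hence $\neg\square\neg p\supset\Diamond p$ after discharging the hypothesis. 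There is no genuine obstacle here: the only care needed is the purely propositional bookkeeping (the classical equivalence $(A\supset B)\supset B\,\dashv\vdash\,A\vee B$ and disjunctive syllogism, both available in \Derivable$^{+}$) and the justification that substitution instances of $(\K_{\Diamond})$ and $(\wCD)$ may be used. The one idea that makes it work is the instance $q:=\neg p$ of $(\wCD)$ applied to the necessitated excluded middle $\square(p\vee\neg p)$.
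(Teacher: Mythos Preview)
Your proposal is correct and follows essentially the same approach as the paper: both directions use exactly the ingredients the paper uses, namely $(\K_{\Diamond})$ together with $(\NotPosBot)$ for $\Diamond p\supset\neg\square\neg p$, and $(\NEC)$ applied to $p\vee\neg p$ followed by the instance $q:=\neg p$ of $(\wCD)$ for $\neg\square\neg p\supset\Diamond p$. Your write-up of the final propositional step in the second direction (passing from $(\Diamond p\supset\square\neg p)\supset\square\neg p$ to $\neg\square\neg p\supset\Diamond p$ via the classical equivalence $(A\supset B)\supset B\;\dashv\vdash\;A\vee B$ and disjunctive syllogism) is more explicit than the paper's, which simply asserts the conclusion, but the underlying argument is identical.
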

The fact that \Derivable~satisfies feature $(C_{5})$ is a consequence of
\begin{lemma}\label{lemma:about:feature:5}
Let $p$ be an atomic proposition.
%
%
%
%
There exists no $\square$-free $A$ such that $\square p\equiv A$ is in \Derivable\ and
%
%
there exists no $\Diamond$-free $A$ such that $\Diamond p\equiv A$ is in \Derivable.
%
%
%
%
\end{lemma}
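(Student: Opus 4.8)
The plan is to establish both non-definability statements semantically, via Theorem~\ref{theorem:soundness:dfik}: for each claim I exhibit two bi-relational models with distinguished worlds that satisfy exactly the same formulas of the relevant fragment but disagree on $\Diamond p$ (resp.\ $\square p$). If an equivalence $\Diamond p\equiv A$ with $A$ $\Diamond$-free (resp.\ $\square p\equiv A$ with $A$ $\square$-free) were derivable it would be valid, and evaluating it at the two worlds would then give a contradiction.

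For the $\Diamond$-free case, take $\+M_1=(\{w,v\},\leq_1,R_1,V_1)$ with $\leq_1$ the identity, $R_1=\{(w,v)\}$ and $V_1(v)=\{p\}$, $V_1(w)=\emptyset$; and $\+M_2=(\{w',w'',v'\},\leq_2,R_2,V_2)$ with $\leq_2$ the reflexive closure of $\{(w',w'')\}$, $R_2=\{(w'',v')\}$ and $V_2(v')=\{p\}$, everything else empty. Both satisfy (FC) and the hereditary condition (straightforward), $w\Vdash\Diamond p$ (via $v$), while $w'\not\Vdash\Diamond p$ since $w'$ has no $R_2$-successor. Since $v$ and $v'$ generate isomorphic one-point submodels, $v\Vdash B$ iff $v'\Vdash B$ for every $B$. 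One then proves by induction on the size of the $\Diamond$-free formula $B$ that $w\Vdash B$ iff $w'\Vdash B$ iff $w''\Vdash B$: the atomic case and the $\bot,\top,\wedge,\vee$ cases are immediate; for $\square C$ the box reduces in each model to ``$C$ holds at the unique relevant successor ($v$, resp.\ $v'$)'', so we use $v\equiv v'$; for $C\supset D$ we use that $w$ and $w''$ are $\leq$-maximal while the $\leq_2$-extensions of $w'$ are exactly $w'$ and $w''$, together with the induction hypothesis. (Conceptually this is a ``$\Diamond$-free bisimulation'', matching worlds for atoms, for $\leq$, and for the composite relation ${\leq};{R}$ that drives $\square$, but not for $R$ itself — which is why the \emph{local} $\Diamond p$ remains a distinguishing formula.)

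For the $\square$-free case a bisimulation argument cannot work, since any relation preserving all $\square$-free formulas must preserve $R$ in both directions (because of $\Diamond$) and would therefore preserve $\square p$ too. The device is to break $\square p$ by an extra $R$-successor that is nevertheless invisible to $\Diamond$ because its whole theory is dominated by that of a genuine $p$-successor. Take $\+N_1=(\{a,b,c\},\leq_1,R_1,V_1)$ with $\leq_1$ the reflexive closure of $\{(a,b)\}$, $R_1=\{(b,c)\}$, $V_1(c)=\{p\}$, else empty; and $\+N_2=(\{a',b',d,c'\},\leq_2,R_2,V_2)$ with $\leq_2$ the reflexive closure of $\{(a',b'),(d,c')\}$, $R_2=\{(b',d),(b',c')\}$, $V_2(c')=\{p\}$, else empty. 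Both satisfy (FC) and heredity (routine; the edge $d\leq_2 c'$ is what makes (FC) hold at $b'$ effortlessly). Here $a\Vdash\square p$ (the only $R_1$-successor of an $\leq_1$-extension of $a$ is $c$, and $c\Vdash p$), whereas $a'\not\Vdash\square p$ (since $b'\geq a'$, $R_2\,b'd$ and $d\not\Vdash p$). The argument then runs through three claims: (i) $c\Vdash B$ iff $c'\Vdash B$ for every $B$ (isomorphic one-point submodels), and $d\Vdash B$ implies $c'\Vdash B$ for every $B$ (Proposition~\ref{proposition:monotonic} applied to $d\leq_2 c'$); (ii) $b\Vdash B$ iff $b'\Vdash B$ for every $\square$-free $B$ — for $\Diamond C$ one computes $b\Vdash\Diamond C$ iff $c\Vdash C$, while $b'\Vdash\Diamond C$ iff ($d\Vdash C$ or $c'\Vdash C$) iff $c'\Vdash C$ iff $c\Vdash C$ by (i), and for $C\supset D$ one uses that $b,b'$ are $\leq$-maximal plus the induction hypothesis; (iii) $a\Vdash B$ iff $a'\Vdash B$ for every $\square$-free $B$ — for $\Diamond C$ both sides are false ($a,a'$ have no $R$-successor), and for $C\supset D$ one uses that the $\leq$-extensions of $a$ are $\{a,b\}$ and those of $a'$ are $\{a',b'\}$, together with the induction hypothesis and (ii).

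To conclude, in each part: were the claimed equivalence derivable, Theorem~\ref{theorem:soundness:dfik} would make it valid, so from $w\Vdash\Diamond p$ (resp.\ $a\Vdash\square p$) we would get $w\Vdash A$ (resp.\ $a\Vdash A$) using $A\supset\Diamond p$ read at $w$ — sorry, $\Diamond p\supset A$ read at $w$ — hence $w'\Vdash A$ (resp.\ $a'\Vdash A$) by the invariant, hence $w'\Vdash\Diamond p$ (resp.\ $a'\Vdash\square p$) using $A\supset\Diamond p$ (resp.\ $A\supset\square p$) read at $w'$ (resp.\ $a'$), contradicting the construction. The step I expect to be the real obstacle is designing $\+N_2$: one must make $\square p$ fail through a new $R$-successor while keeping the two models $\square$-free-equivalent, and the only way is a world $d$ whose complete (in particular $\square$-free) theory lies below that of a true $p$-successor $c'$, so that $d$ contributes nothing under $\Diamond$ yet still refutes $\square p$; once that idea is in place, verifying the inductive invariants (i)--(iii) only requires bookkeeping about which worlds are $\leq$-maximal and which $R$-successors each world has.
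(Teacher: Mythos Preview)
Your proof is correct. Both halves go through as you describe; the verifications of (FC), of the claims (i)--(iii), and of the inductive invariants are all sound.

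Your organization differs from the paper's, though. For each half the paper uses a \emph{single} four-element model and compares two $\leq$-related worlds inside it. For the $\square$-free half it takes $W=\{a,b,c,d\}$ with $a\leq c$, $b\leq d$, $aRb$, $aRd$, $cRd$, $V(p)=\{d\}$, and shows by induction that $a\Vdash B\Leftrightarrow c\Vdash B$ for every $\square$-free $B$, while $a\not\Vdash\square p$ and $c\Vdash\square p$; the $\Diamond$-free half is the ``dual'' model $a\leq c$, $b\leq d$, $aRb$, $cRb$, $cRd$. The underlying device is exactly your ``dominated successor'' trick ($b\leq d$ makes $b$ invisible to $\Diamond$), but placing both witnesses in one model with $a\leq c$ collapses your three claims (i)--(iii) into a single two-world invariant and streamlines the $\supset$-case of the induction (the upper worlds of $a$ are just $\{a,c\}$, and the IH immediately identifies them). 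What your longer route buys is transparency: splitting the models makes it explicit which structural feature is being varied, and your remark that a genuine bisimulation cannot work for the $\square$-free side---because $\Diamond$ forces back-and-forth on $R$, so one must instead hide the extra successor under monotonicity---is a conceptual point the paper's proof leaves implicit.
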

Consequently, \Derivable~can be considered as a natural intuitionistic modal logic.
\section{A bi-nested sequent calculus}\label{section:disjunctive:property}

In this section, we present a bi-nested calculus for \logic. The calculus is two-sided and it makes use of two kinds of nested sequents, also called blocks $\langle\cdot\rangle$ and $[\cdot]$. The former is called an \textit{implication} block and the latter a \textit{modal} block. The intuition is that implication blocks correspond to upper worlds while modal blocks correspond to $R$-successors in a bi-relational model. The calculus we present is a conservative extension (with some notational change) of the nested sequent calculus for \textbf{IPL} presented in \cite{Fitting:2014}. 

\begin{definition}[Bi-nested sequent]\label{def-nested-sequent}
  A bi-nested sequent $S$ is defined as follows:
  \begin{itemize}
    \item  $\Rightarrow$ is a bi-nested sequent (the empty sequent);
    \item $\Gamma\Rightarrow B_1,\ldots, B_k,[S_1],\ldots,[S_m],\langle T_1\rangle,\ldots,\langle T_n\rangle$ is a bi-nested sequent 
    if $S_1,\ldots,S_m$, $T_1,\ldots,\\ T_n$ are bi-nested sequents where $m,n\geq 0$, and $\Gamma$ is a finite (possibly empty) multi-set of formulas and $B_1,\ldots, B_k$ are formulas.
  \end{itemize}
\end{definition}
We use $S,T$ to denote bi-nested sequents and to simplify wording we will call bi-nested sequents simply by sequents in the rest of this paper. We denote by $|S|$ {\em the size} of a sequent $S$ intended as the length of  $S$ as a string of symbols. 

As usual with nested calculi, we need the notion of context in order to specify the rules, as they can be applied to sequents occurring inside other sequents. 
A \textit{context} is of the form $G\{\}$, in which $G$ is a part of a sequent, $\{\cdot\}$ is regarded as a placeholder that needs to be filled by  another  sequent in order to complete $G$. $G\{S\}$ is the sequent obtained by replacing the occurrence of the symbol $\{\}$ in $G\{\}$ by   the sequent $S$. 

\begin{definition}[Context]\label{def-context}
  A context $G\{\}$ is inductively defined as follows:
  \begin{itemize}
    \item $\{\}$ is a context (the empty context). 
    \item if $\Gamma\Rightarrow\Delta$ is a sequent and  $G'\{\}$ is a context then $\Gamma\Rightarrow\Delta, \langle G'\{\}\rangle$ is a context.
    \item if $\Gamma\Rightarrow\Delta$ is a sequent and  $G'\{\}$ is a context then $\Gamma\Rightarrow\Delta, [G'\{\}]$ is a context.
    \end{itemize}
\end{definition}

For example, given a context $G\{\}=A\wedge B, \square C\Rightarrow \langle \square A\Rightarrow [B]\rangle,[\{\}]$ and a sequent $S=A\Rightarrow \Delta,[C\Rightarrow B]$, we have $G\{S\}$ $=$ $A\wedge B, \square C\Rightarrow \langle \square A\Rightarrow [B]\rangle,[A\Rightarrow \Delta,[C\Rightarrow B]]$. 

The two types of blocks interact by the (inter) rule. In order to define this rule, we need the following:

\begin{definition}[$*$-operator]\label{def:star-operator}
  Let $\Lambda\Rightarrow\Theta$ be a sequent, we define $\Theta^*$ as follows:
  \begin{itemize}
    \item $\Theta^*=\emptyset$ if $\Theta$ is $[\cdot]$-free;
    \item $\Theta^*= [\Phi_1\Rightarrow\Psi_1^*], \ldots, [\Phi_k\Rightarrow\Psi_k^*]$ if $\Theta=\Theta_0,[\Phi_1\Rightarrow\Psi_1], \ldots, [\Phi_k\Rightarrow\Psi_k]$ and $\Theta_0$ is $[\cdot]$-free.
  \end{itemize}
\end{definition}
By definition, given a sequent $\Lambda\Rightarrow\Theta$, $\Theta^*$ is a multi-set of modal blocks.

Now we can give a bi-nested sequent calculus for \logic~as follows.

\begin{definition}
The calculus \calculus~is given in Figure \ref{bi-calculus}.
\begin{figure}[!htb]
    \centering
    \noindent\fbox{
      \parbox{\linewidth}{
        {
  Axioms:
  \vspace{-0.2cm}
  {
      \[ 
    \AxiomC{}
    \RightLabel{\rm ($\bot_L$)}
    \UnaryInfC{$G\{\Gamma,\bot\Rightarrow\Delta\}$}
    \DisplayProof
    \quad
    \AxiomC{}
    \RightLabel{\rm ($\top_R$)}
    \UnaryInfC{$G\{\Gamma\Rightarrow\top,\Delta\}$}
    \DisplayProof
    \quad
    \AxiomC{}
    \RightLabel{\rm ($\text{id}$)}
    \UnaryInfC{$G\{\Gamma,A\Rightarrow\Delta,A\}$}
    \DisplayProof
    \]
  }
   
  Logical rules:
   \vspace{-0.2cm}
  {
    \[
    \AxiomC{$G\{A,B,\Gamma\Rightarrow\Delta\}$}
    \RightLabel{\rm ($\wedge_L$)}
    \UnaryInfC{$G\{A\wedge B,\Gamma\Rightarrow\Delta\}$}
    \DisplayProof
    \quad
    \AxiomC{$G\{\Gamma\Rightarrow\Delta,A\}$}
    \AxiomC{$G\{\Gamma\Rightarrow\Delta,B\}$}
    \RightLabel{\rm ($\wedge_R$)}
    \BinaryInfC{$G\{\Gamma\Rightarrow\Delta,A\wedge B\}$}
  \DisplayProof
  \]
  
  \[
    \AxiomC{$G\{\Gamma,A\Rightarrow\Delta\}$}
    \AxiomC{$G\{\Gamma,B\Rightarrow\Delta\}$}
    \RightLabel{\rm ($\vee_L$)}
    \BinaryInfC{$G\{\Gamma,A\vee B\Rightarrow\Delta\}$}
    \DisplayProof
    \quad
    \AxiomC{$G\{\Gamma\Rightarrow \Delta,A,B\}$}
    \RightLabel{\rm ($\vee_R$)}
    \UnaryInfC{$G\{\Gamma\Rightarrow \Delta,A\vee B\}$}
  \DisplayProof
  \]
  
  \[
  \AxiomC{$G\{\Gamma,A\supset B\Rightarrow A,\Delta\}$}
  \AxiomC{$G\{\Gamma,B\Rightarrow\Delta\}$}
  \RightLabel{\rm ($\supset_L$)}
  \BinaryInfC{$G\{\Gamma,A\supset B\Rightarrow\Delta\}$}
  \DisplayProof
  \quad
  \AxiomC{$G\{\Gamma\Rightarrow \Delta, \langle A\Rightarrow B\rangle\}$}
  \RightLabel{\rm ($\supset_R$)}
  \UnaryInfC{$G\{\Gamma\Rightarrow \Delta, A\supset B\}$}
  \DisplayProof
  \]
  }
   
  Modal rules:
   \vspace{-0.2cm}
  \[
    \AxiomC{$G\{\Gamma,\square A\Rightarrow\Delta,[\Sigma,A\Rightarrow \Pi]\}$}
  \RightLabel{\rm ($\square_L$)}
  \UnaryInfC{$G\{\Gamma,\square A\Rightarrow\Delta,[\Sigma\Rightarrow \Pi]\}$}
  \DisplayProof
  \quad
  \AxiomC{$G\{\Gamma\Rightarrow\Delta,\langle\Rightarrow [\Rightarrow A]\rangle\}$}
  \RightLabel{\rm ($\square_R$)}
  \UnaryInfC{$G\{\Gamma\Rightarrow\Delta,\square A\}$}
  \DisplayProof
  \]
  \[
    \AxiomC{$G\{\Gamma\Rightarrow\Delta,[A\Rightarrow]\}$}
    \RightLabel{\rm ($\Diamond_L$)}
    \UnaryInfC{$G\{\Gamma,\Diamond A\Rightarrow\Delta\}$}
    \DisplayProof
    \quad
    \AxiomC{$G\{\Gamma\Rightarrow\Delta,\Diamond A,[\Sigma\Rightarrow\Pi,A]\}$}
    \RightLabel{\rm ($\Diamond_R$)}
    \UnaryInfC{$G\{\Gamma\Rightarrow\Delta,\Diamond A,[\Sigma\Rightarrow\Pi]\}$}
    \DisplayProof
  \]
  
  Transferring and interactive rules:
   \vspace{-0.2cm}
  \[
    \AxiomC{$G\{\Gamma,\Gamma'\Rightarrow\Delta,\langle\Gamma',\Sigma\Rightarrow\Pi\rangle\}$}
    \RightLabel{\rm (\text{trans})}
    \UnaryInfC{$G\{\Gamma,\Gamma'\Rightarrow\Delta,\langle\Sigma\Rightarrow\Pi\rangle\}$}
    \DisplayProof
    \quad
    \AxiomC{$G\{\Gamma\Rightarrow\Delta,\langle\Sigma\Rightarrow\Pi,[\Lambda\Rightarrow\Theta^*]\rangle,[\Lambda\Rightarrow\Theta]\}$}
    \RightLabel{\rm (\textrm{inter})}
    \UnaryInfC{$G\{\Gamma\Rightarrow\Delta,\langle\Sigma\Rightarrow\Pi\rangle,[\Lambda\Rightarrow\Theta]\}$}
    \DisplayProof
  \]
  
  }
        }
  }
  \caption{\calculus}\label{bi-calculus}
  \end{figure}
\end{definition}

Here is a brief explanation of these rules. The logical rules, except $(\supset_R)$, are just the standard rules of intuitionistic logic in their nested version. 
The rule $(\supset_R)$ introduces an implication block, which corresponds to an upper world (in the pre-order). The modal rules create new modal blocks or propagate modal formulas into existing ones, which correspond to $R$-accessible worlds. The (trans) rule transfers formulas (forced by) lower worlds to upper worlds following the pre-order. Finally, (inter) rule encodes the (FC) frame condition: it partially transfers "accessible" modal blocks from lower worlds to upper ones and creates new accessible worlds from upper worlds fulfilling the (FC) condition.

We define the modal degree of a sequent, which will be useful when discussing termination.

\begin{definition}[Modal degree]
  Modal degree for a formula $F$, denoted as $\textit{md}(F)$, is defined as usual: $md(p) = md(\bot) = md(\top) = 0$, $md(A \circ B) = max(md(A), md(B))$, for $\circ = \land, \lor, \supset$, $md(\Box A) = md(\Diamond A) = md(A)+1$. Further, if  $\Gamma =\{A_1,\ldots A_n\}$ then  $md(\Gamma) =max (md(A_1), \ldots, md(A_n))$. For a  sequent $S=\Gamma\Rightarrow\Delta,[S_1],\ldots,[S_m],\langle T_1\rangle,\ldots,\langle T_n\rangle$ with  $m,n\geq 0$, let  $md(S)=\max(md(\Gamma),md(\Delta),md(S_1)+1,\ldots,md(S_m)+1,md(T_1),\ldots,md(T_n))$.
\end{definition}

\begin{example}
  Axiom (\textbf{wCD}) in $\mathbf{S}_\logic$ is provable in \calculus. To prove this, it suffices to prove $\Diamond p\supset\square q,\square(p\vee q)\Rightarrow\square q$.
 \[
  \AxiomC{$\Diamond p\supset\square q,\square(p\vee q)\Rightarrow\langle\Diamond p\supset\square q,\square(p\vee q)\Rightarrow[\Rightarrow q]\rangle$}
  \RightLabel{$(\text{trans})$}
  \UnaryInfC{$\Diamond p\supset\square q,\square(p\vee q)\Rightarrow\langle\Rightarrow[\Rightarrow q]\rangle$}
  \RightLabel{$(\square_R)$}
  \UnaryInfC{$\Diamond p\supset\square q,\square(p\vee q)\Rightarrow\square q$}
  \DisplayProof
  \]
  
   Let $G\{\}=\Diamond p\supset\square q,\square(p\vee q)\Rightarrow\langle\{\}\rangle$, so $G\{\Diamond p\supset\square q,\square(p\vee q)\Rightarrow[\Rightarrow q]\}$ is $\Diamond p\supset\square q,\square(p\vee q)\Rightarrow\langle\Diamond p\supset\square q,\square(p\vee q)\Rightarrow[\Rightarrow q]\rangle$. Then the derivation of the topmost sequent is as follows:
  \[
  \tiny
  \AxiomC{}
  \RightLabel{$(\text{id})$}
  \UnaryInfC{$G\{\Diamond p\supset\square q,\square(p\vee q)\Rightarrow[p\Rightarrow q,p]\}$}
  \RightLabel{$(\Diamond_R)$}
  \UnaryInfC{$G\{\Diamond p\supset\square q,\square(p\vee q)\Rightarrow\Diamond p,[p\Rightarrow q]\}$}
  \AxiomC{}
  \RightLabel{$(\text{id})$}
  \UnaryInfC{$G\{\square q,\square(p\vee q)\Rightarrow[q,p\Rightarrow q]\}$}
  \RightLabel{$(\square_L)$}
  \UnaryInfC{$G\{\square q,\square(p\vee q)\Rightarrow[p\Rightarrow q]\}$}
  \RightLabel{$(\supset_L)$}
  \BinaryInfC{$G\{\Diamond p\supset\square q,\square(p\vee q)\Rightarrow[p\Rightarrow q]\}$}
  \AxiomC{}
  \RightLabel{$(\text{id})$}
  \UnaryInfC{$G\{\Diamond p\supset\square q,\square(p\vee q)\Rightarrow[q\Rightarrow q]\}$}
  \RightLabel{$(\vee_L)$}
  \BinaryInfC{$G\{\Diamond p\supset\square q,\square(p\vee q)\Rightarrow[p\vee q\Rightarrow q]\}$}
  \RightLabel{$(\square_L)$}
  \UnaryInfC{$G\{\Diamond p\supset\square q,\square(p\vee q)\Rightarrow[\Rightarrow q]\}$}
  \DisplayProof
  \] 
\end{example}

\begin{example}\label{examplefik}
Consider the formula $\Rightarrow(\neg\square \bot\supset\square\bot)\supset\square\bot$. This $\Diamond$-free formula is provable in \calculus \ but unprovable in \textbf{CK} (whence the $\Diamond$-free fragments of these two logics are different, see \cite{anupam:2023}). 
\[
    \tiny
    \AxiomC{$G\{S_1\}$}
    \UnaryInfC{$\Rightarrow\langle\neg\square\bot\supset\square\bot\Rightarrow\langle\neg\square\bot\supset\square\bot\Rightarrow\neg\square\bot,[\Rightarrow\bot]\rangle\rangle$}
    \AxiomC{$G\{S_2\}$}
    \UnaryInfC{$\Rightarrow\langle\neg\square\bot\supset\square\bot\Rightarrow\langle\square\bot\Rightarrow[\Rightarrow\bot]\rangle\rangle$}
    \RightLabel{($\supset_L$)}
    \BinaryInfC{$\Rightarrow\langle\neg\square\bot\supset\square\bot\Rightarrow\langle\neg\square\bot\supset\square\bot\Rightarrow[\Rightarrow\bot]\rangle\rangle$}
    \RightLabel{(trans)}
    \UnaryInfC{$\Rightarrow\langle\neg\square\bot\supset\square\bot\Rightarrow\langle\Rightarrow[\Rightarrow\bot]\rangle\rangle$}
    \RightLabel{($\square_R$)}
    \UnaryInfC{$\Rightarrow\langle\neg\square\bot\supset\square\bot\Rightarrow\square\bot\rangle$}
    \RightLabel{($\supset_R$)}
    \UnaryInfC{$\Rightarrow(\neg\square\bot\supset\square\bot)\supset\square\bot$}
    \DisplayProof
  \]

Let $G\{\}=~\Rightarrow\langle\neg\square\bot\supset\square\bot\Rightarrow\langle\{\}\rangle\rangle$, $S_1=\neg\square\bot\supset\square\bot\Rightarrow\neg\square\bot,[\Rightarrow\bot]$ and $S_2=\square\bot\Rightarrow[\Rightarrow\bot]$. The two top sequents $G\{S_1\}$ and $G\{S_2\}$ are derived respectively as follows:
    \[
  \AxiomC{}
  \RightLabel{$(\bot_L)$}
  \UnaryInfC{$G\{\neg\square\bot\supset\square\bot\Rightarrow\langle\square\bot\Rightarrow\bot,[\bot\Rightarrow]\rangle,[\Rightarrow\bot]\}$}
  \RightLabel{$(\square_L)$}
  \UnaryInfC{$G\{\neg\square\bot\supset\square\bot\Rightarrow\langle\square\bot\Rightarrow\bot,[\Rightarrow]\rangle,[\Rightarrow\bot]\}$}
  \RightLabel{($\text{inter}$)}
  \UnaryInfC{$G\{\neg\square\bot\supset\square\bot\Rightarrow\langle\square\bot\Rightarrow\bot\rangle,[\Rightarrow\bot]\}$}
  \RightLabel{$(\supset_R)$}
  \UnaryInfC{$G\{\neg\square\bot\supset\square\bot\Rightarrow\neg\square\bot,[\Rightarrow\bot]\}$}
  \DisplayProof
  \quad
  \AxiomC{}
  \RightLabel{$(\bot_L)$}
  \UnaryInfC{$G\{\square\bot\Rightarrow[\bot\Rightarrow\bot]\}$}
  \RightLabel{$\square_L$}
  \UnaryInfC{$G\{\square\bot\Rightarrow[\Rightarrow\bot]\}$}
  \DisplayProof
  \]
\end{example}

We show that  the calculus \calculus~enjoys the disjunctive property, which means if $A\lor B$ is provable, then either $A$ or $B$ is provable. This fact is an immediate consequence of the following lemma.

\begin{lemma}\label{lemma-sec3-dp}
	Suppose that a sequent $S =~\Rightarrow A_1, \ldots, A_m, \langle G_1\rangle, \ldots, \langle G_n\rangle$ is provable in \calculus, where $A_i$'s are formulas. Then either for some $A_i$, $\Rightarrow A_i$ is provable in \calculus~or for some $G_j$, $\Rightarrow \langle G_j\rangle$ is provable in \calculus.  
\end{lemma}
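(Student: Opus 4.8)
The plan is to prove the statement by induction on the height of a \calculus-derivation of $S$, with a case analysis on the last rule $r$. Since the antecedent of $S$ is empty and the right-hand side of $S$ consists only of the formulas $A_1,\dots,A_m$ and the implication blocks $\langle G_1\rangle,\dots,\langle G_n\rangle$ --- in particular $S$ has \emph{no} modal block $[\,\cdot\,]$ at top level --- the hole of the context of $r$ is either at the top level of $S$, or strictly inside one of the $\langle G_j\rangle$; this dichotomy is exhaustive, and it is the place where the shape restriction in the statement is used.

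Consider first the case where $r$ acts inside a fixed block $\langle G_j\rangle$ (this also covers the base case in which $S$ is an axiom whose principal part lies inside some $\langle G_j\rangle$: then $\Rightarrow\langle G_j\rangle$ is an instance of the same axiom). Every premise of $r$ has the same top-level shape $\Rightarrow A_1,\dots,A_m,\langle G_1\rangle,\dots,\langle G_j^{(i)}\rangle,\dots,\langle G_n\rangle$, with only $G_j$ replaced by some context-instance $G_j^{(i)}$, and a shorter derivation. Applying the induction hypothesis to each premise yields, for each of them, provability of some $\Rightarrow A_k$, or of some $\Rightarrow\langle G_l\rangle$ with $l\neq j$, or of $\Rightarrow\langle G_j^{(i)}\rangle$. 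If one of the first two options holds for at least one premise we are done immediately, since those $A_k$ and $\langle G_l\rangle$ occur unchanged in $S$. Otherwise $\Rightarrow\langle G_j^{(i)}\rangle$ is provable for every premise, and we re-apply $r$ in the context $\Rightarrow\langle\,\cdot\,\rangle$ --- legitimate because the correctness of a rule instance is independent of its surrounding context --- to derive $\Rightarrow\langle G_j\rangle$.

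Consider now the case where $r$ acts at the top level of $S$. Because the antecedent of $S$ is empty and $S$ has no top-level modal block, none of the left rules, nor $(\square_L)$, $(\Diamond_R)$ or $(\mathrm{inter})$, can be the last rule here, while $(\mathrm{trans})$ can only be applied vacuously (the transferred multiset must be empty), so its premise equals $S$ and the induction hypothesis on that premise finishes the subcase. The remaining possibilities are $(\top_R)$, in which some $A_i=\top$ and $\Rightarrow A_i$ is itself an axiom, and one of $(\wedge_R)$, $(\vee_R)$, $(\supset_R)$, $(\square_R)$ applied to some $A_i$. In each of these we invoke the induction hypothesis on the premise(s): if it returns provability of a top-level $A_k$ with $k\neq i$ or of some $\langle G_j\rangle$ we are done, and otherwise it returns provability of exactly the immediate subformula(s) or block(s) that $r$ produced from $A_i$ (namely $\Rightarrow B$ and $\Rightarrow C$ for $(\wedge_R)$, $\Rightarrow\langle B\Rightarrow C\rangle$ for $(\supset_R)$, and $\Rightarrow\langle\,\Rightarrow[\Rightarrow B]\rangle$ for $(\square_R)$), whence $\Rightarrow A_i$ follows by a single application of $r$. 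The one subcase needing an extra ingredient is $(\vee_R)$ with $A_i=B\vee C$: the induction hypothesis on its premise $\Rightarrow\dots,B,C,\dots$ may return only $\Rightarrow B$ (or only $\Rightarrow C$), and we then use (height-preserving) admissibility of weakening to get $\Rightarrow B,C$ and conclude $\Rightarrow B\vee C$ by $(\vee_R)$.

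The main obstacle I anticipate is not any single hard case but making the case analysis watertight: one must check carefully that $(\Diamond_R)$, $(\mathrm{inter})$, $(\square_L)$ and all the left rules genuinely cannot fire at the top level of a sequent of this restricted form, and that every reconstructed derivation again lands on a sequent of exactly this form (empty antecedent, right-hand side a list of formulas and implication blocks), so that the induction hypothesis keeps applying; one must also handle the branching rules so that a ``good'' answer on a single premise already suffices while a ``bad'' answer on all premises still allows $r$ to be replayed. The only auxiliary result needed is admissibility of weakening in \calculus, which is routine; the disjunction property for \calculus\ then follows by applying the lemma with $m=2$, $n=0$, $A_1=A$, $A_2=B$, together with invertibility of $(\vee_R)$.
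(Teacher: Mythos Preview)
Your proof is correct and follows essentially the same strategy as the paper's own argument: induction on derivation height with a case analysis on the last rule, distinguishing whether it acts on a top-level formula $A_i$ or inside some block $\langle G_j\rangle$, and in the latter case replaying the rule in the smaller context $\Rightarrow\langle\,\cdot\,\rangle$. Your treatment is in fact more careful than the paper's sketch, in particular you explicitly note the need for admissible weakening in the $(\vee_R)$ subcase and you verify that no left rule, $(\Diamond_R)$, or $(\mathrm{inter})$ can fire at top level --- points the paper leaves implicit.
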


From the lemma we immediately obtain:

\begin{proposition}\label{prop:dp}
  For any formulas $A,B$, if $\Rightarrow A \lor B$ is provable in \calculus, then either $\Rightarrow A$ or $\Rightarrow  B$ is provable.
\end{proposition}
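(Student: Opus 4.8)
The plan is to derive Proposition~\ref{prop:dp} directly from Lemma~\ref{lemma-sec3-dp}. Suppose $\seq A\vee B$ is provable in \calculus, and fix a derivation $\+D$ of it. The first thing I would do is identify the last rule of $\+D$. Since the antecedent is empty and $A\vee B$ is the only formula in the succedent (in particular the sequent contains no nested blocks, so the surrounding context must be the empty one), no axiom can have this sequent as conclusion, no left rule can apply, and neither $(\text{trans})$ nor $(\text{inter})$ can apply; among the right rules, the only one whose principal formula can be $A\vee B$ is $(\vee_R)$. Hence $\+D$ ends with an application of $(\vee_R)$ whose premise $\seq A,B$ is itself provable in \calculus.

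The second step is to feed $\seq A,B$ into Lemma~\ref{lemma-sec3-dp}. This sequent has exactly the shape $\seq A_1,\ldots,A_m,\langle G_1\rangle,\ldots,\langle G_n\rangle$ required by the lemma, with $m=2$, $A_1=A$, $A_2=B$, and $n=0$ (no implication blocks at all). Because there is no block $\langle G_j\rangle$ available, the lemma can only deliver the first disjunct of its conclusion, namely that $\seq A_i$ is provable for some $i$; that is, $\seq A$ is provable or $\seq B$ is provable, which is precisely the statement.

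At the level of the Proposition I do not expect any real obstacle: the argument is immediate once Lemma~\ref{lemma-sec3-dp} is available, and the only mildly delicate point is the case analysis on the last rule of $\+D$ — which can alternatively be sidestepped by invoking invertibility of $(\vee_R)$ (a consequence of the general invertibility of the rules of \calculus), again yielding provability of $\seq A,B$ from provability of $\seq A\vee B$. The genuine work is all inside the proof of Lemma~\ref{lemma-sec3-dp} itself, i.e.\ the induction on proof height that must treat every rule of \calculus, in particular the modal rules and the interaction rule $(\text{inter})$, verifying that applying such a rule to one of the $A_i$ or to one of the $\langle G_j\rangle$ still lets the induction hypothesis on the premise(s) reconstruct the disjunctive conclusion.
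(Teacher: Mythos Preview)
Your proposal is correct and matches the paper's own argument exactly: the paper notes that $\Rightarrow A\lor B$ must be derived by $(\vee_R)$ from $\Rightarrow A,B$ and then invokes Lemma~\ref{lemma-sec3-dp}. Your additional remarks (the case analysis on the last rule, the instantiation $m=2$, $n=0$, and the alternative via invertibility) simply spell out what the paper leaves implicit.
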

\hide{
}
By the soundness and completeness of \calculus~with respect to \logic~proved in the following, we will conclude that the logic \logic~enjoys the disjunctive property.  


Next, we prove the soundness of the calculus \calculus. To achieve this aim, we need to define the semantic interpretation of sequents, whence their validity. We first extend the forcing relation $\Vdash$ to sequents and blocks therein. 

\begin{definition}
Let $\mathcal{M}=(W,\leq,R,V)$ be a bi-relational model and $x\in W$. The relation $\Vdash$ is extended to sequents as follows:
\begin{quote}
$\mathcal{M}, x\not\Vdash \emptyset$\\
$\mathcal{M}, x \Vdash [T]$ if for every $y$ with $Rxy$, $\mathcal{M}, y \Vdash T$\\
$\mathcal{M}, x \Vdash \langle T \rangle$ if for every $x'$ with $x\leq x'$, $\mathcal{M}, x' \Vdash T$\\
$\mathcal{M}, x \Vdash \Gamma \Rightarrow \Delta$ if either $\mathcal{M}, x \not \Vdash A$ for some $A\in \Gamma$ or $\mathcal{M}, x \Vdash {\cal O}$ for some ${\cal O} \in \Delta$
\end{quote}
We say $S$ is {\rm valid} in $\+M$ iff $\forall w\in W$, we have $\+M,w\Vdash S$. $S$ is {\rm valid} iff it is valid in every bi-relational model.
\end{definition}

Whenever the model $\mathcal{M}$ is clear, we omit it and write simply $x \Vdash {\cal O}$ for any object ${\cal O}$, which can be a formula, a sequent or a block. 
Moreover, given a sequent $S = \Gamma \Rightarrow \Delta$, we write $x \Vdash \Delta$ if there is ${\cal O} \in \Delta$ s.t. $x \Vdash {\cal O}$ and write $x \not\Vdash \Delta$ if the previous condition does not hold. 

The following lemma gives a semantic meaning to the $*$-operation used in (inter). 

\begin{lemma}\label{lemma:star-operator}
  Let $\mathcal{M}=(W,\leq,R,V)$ be a bi-relational model and $x, x'\in W$ with $x\leq x'$. Let $S = \Gamma \Rightarrow \Delta$ be any sequent, if $x \not\Vdash \Delta$ then $x' \not\Vdash \Delta^*$.
\end{lemma}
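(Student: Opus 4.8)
The plan is to proceed by induction on the size (or the nesting depth) of the sequent $S = \Gamma\Rightarrow\Delta$, since the $*$-operation is defined recursively on the structure of the right-hand side $\Delta$. Recall that $\Delta^*$ retains only the modal blocks of $\Delta$, and recursively $*$-prunes their right-hand sides; all formulas, all implication blocks, and the modal-block-free part $\Theta_0$ are discarded. So if $x\not\Vdash\Delta$, we must show $x'\not\Vdash\Delta^*$, i.e. that $x'$ forces none of the surviving (pruned) modal blocks.

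First I would unpack the hypothesis: $x\not\Vdash\Delta$ means $x\not\Vdash\mathcal{O}$ for every $\mathcal{O}\in\Delta$. In particular, for each modal block $[\Phi_i\Rightarrow\Psi_i]$ occurring in $\Delta$ we have $x\not\Vdash[\Phi_i\Rightarrow\Psi_i]$, i.e. there exists a successor $y$ with $Rxy$ and $y\not\Vdash\Phi_i\Rightarrow\Psi_i$ — concretely, $y\Vdash A$ for all $A\in\Phi_i$ and $y\not\Vdash\Psi_i$. Now the key move: I want to produce a successor $y'$ of $x'$ witnessing $x'\not\Vdash[\Phi_i\Rightarrow\Psi_i^*]$. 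Since $x\leq x'$ and $Rxy$, the frame condition {\bf (FC)} gives some $t$ with $Rx't$ and $y\leq t$. By Proposition~\ref{proposition:monotonic} (monotonicity of forcing along $\leq$), $t\Vdash A$ for all $A\in\Phi_i$, so $t$ satisfies the antecedent part of the block. It remains to check $t\not\Vdash\Psi_i^*$. This is exactly the induction hypothesis applied to the smaller sequent $\Phi_i\Rightarrow\Psi_i$ with the pair $y\leq t$: from $y\not\Vdash\Psi_i$ we conclude $t\not\Vdash\Psi_i^*$. Hence $t\not\Vdash\Phi_i\Rightarrow\Psi_i^*$, so $x'\not\Vdash[\Phi_i\Rightarrow\Psi_i^*]$. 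Since this holds for every $i$, and $\Delta^*$ consists precisely of these pruned modal blocks $[\Phi_i\Rightarrow\Psi_i^*]$, we get $x'\not\Vdash\Delta^*$. The base case is when $\Delta$ is $[\cdot]$-free: then $\Delta^*=\emptyset$ and $x'\not\Vdash\emptyset$ holds vacuously by definition.

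The main obstacle — or rather the point requiring care — is getting the interplay of the three ingredients right at the inductive step: {\bf (FC)} is what lets us pass from a successor of $x$ to a successor of $x'$, monotonicity (Proposition~\ref{proposition:monotonic}) is what preserves forcing of the left-hand-side formulas $\Phi_i$ along $y\leq t$, and the induction hypothesis handles the recursively pruned right-hand side $\Psi_i^*$. One should also be mindful that the induction must be on a measure that genuinely decreases when passing from $\Delta$ to the $\Psi_i$ inside its blocks (e.g. the total number of symbols, or the modal-block nesting depth), which is clearly the case. No subtlety arises from the discarded components ($\Theta_0$, formulas, implication blocks), since dropping conjuncts from a "some object is forced" disjunction only makes non-forcing easier to maintain.
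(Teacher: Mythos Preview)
Your proof is correct and follows essentially the same route as the paper: induction on the structure of $\Delta$ (equivalently, on nesting depth), with the base case $\Delta^*=\emptyset$ trivial and the inductive step combining {\bf (FC)} to lift the $R$-successor, monotonicity to preserve the antecedent $\Phi_i$, and the induction hypothesis for $\Psi_i^*$. The only cosmetic difference is that the paper phrases the induction as being ``on the structure of $\Delta^*$'' and leaves the appeal to Proposition~\ref{proposition:monotonic} implicit.
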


In order to prove soundness we first show that the all rules are {\em forcing-preserving}.

\begin{lemma}\label{forcing-preserving}
  Given a model $\mathcal{M} = (W,  \leq, R, V)$ and $x\in W$, for any rule ($r$) of the form $\frac{G\{S_1\} \quad G\{S_2\}}{G\{S\}}$ or $\frac{G\{S_1\}}{G\{S\}}$, if $x\Vdash G\{S_i\}$, then $x\Vdash G\{S\}$. 
\end{lemma}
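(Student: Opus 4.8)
The plan is to prove Lemma~\ref{forcing-preserving} by a case analysis on the rule $(r)$, in each case reducing the claim about the compound sequents $G\{S\}$, $G\{S_i\}$ to a purely local claim about the "active" sequents $S$, $S_i$ themselves. The first step is a general observation: since a context $G\{\}$ is built by wrapping the placeholder inside implication blocks $\langle\cdot\rangle$ and modal blocks $[\cdot]$, and the forcing conditions for $\langle T\rangle$ and $[T]$ quantify over $\leq$-upper worlds and $R$-successors respectively, one proves by induction on the structure of $G\{\}$ that if for \emph{every} world $z$ in every model we have ($z\Vdash S_i$ implies $z\Vdash S$), then for every $x$, $x\Vdash G\{S_i\}$ implies $x\Vdash G\{S\}$; and similarly for two-premise rules (if $z\Vdash S_1$ and $z\Vdash S_2$ jointly imply $z\Vdash S$). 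Actually a little care is needed because some rules (notably (trans) and (inter)) move formulas \emph{across} a block boundary, so the "local" statement is not literally about a fixed world; I would phrase the context-induction lemma so that it carries along the side formulas $\Gamma$ appearing to the left of the block in which the placeholder sits — i.e. the inductive statement is parametrized by the multiset of formulas forced at the relevant world. The cleanest route is: prove the claim "for all $x$, if $x\Vdash S_i$ then $x\Vdash S$" (for the rules whose active part stays at one world) together with the analogous two-world versions for (trans) and (inter), and then invoke monotonicity (Proposition~\ref{proposition:monotonic}) to handle the block wrappers uniformly.

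Next I would dispatch the routine cases. For the axioms there is nothing to prove since they have no premise in the sense of the lemma — or rather they are vacuous. For the propositional rules $(\wedge_L),(\wedge_R),(\vee_L),(\vee_R),(\supset_L)$, the standard two-sided intuitionistic arguments apply at the world $x$ (using the interpretation $x\Vdash\Gamma\Rightarrow\Delta$ as "$x\not\Vdash$ some antecedent or $x\Vdash$ some succedent"), and for $(\supset_R)$ one uses the semantics of the $\langle\cdot\rangle$ block together with the forcing clause for $\supset$: $x\Vdash A\supset B$ iff for all $x'\geq x$, $x'\Vdash A$ implies $x'\Vdash B$, which is exactly $x\Vdash\langle A\Rightarrow B\rangle$. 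For the modal rules: $(\square_L)$ uses that if $x\Vdash\square A$ then every $R$-successor $y$ of $x$ (and indeed of every $\leq$-upper world, but in particular the $y$ witnessing the block) forces $A$; $(\square_R)$ uses $x\Vdash\square A$ iff for all $x'\geq x$ and all $y$ with $Rx'y$, $y\Vdash A$, which unpacks precisely as $x\Vdash\langle\Rightarrow[\Rightarrow A]\rangle$; $(\Diamond_L)$ uses that if $x\Vdash\Diamond A$ then some $R$-successor forces $A$, contradicting the block $[A\Rightarrow]$; $(\Diamond_R)$ is the dual bookkeeping with the witness $R$-successor. The (trans) rule is immediate from Proposition~\ref{proposition:monotonic}: formulas in $\Gamma'$ forced at $x$ are still forced at any $x'\geq x$, so adding $\Gamma'$ to the antecedent of the $\langle\cdot\rangle$-block changes nothing semantically.

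The main obstacle, and the case deserving genuine care, is the (inter) rule, since it is the one encoding (FC) and it is the only rule where Lemma~\ref{lemma:star-operator} is needed. Here I would argue contrapositively: suppose $x\not\Vdash G\{\ldots\langle\Sigma\Rightarrow\Pi\rangle,[\Lambda\Rightarrow\Theta]\ldots\}$; by the context-induction step it suffices to find a world refuting the active part, so assume at the relevant world $w$ (a $\leq$-upper-or-$R$-descendant of $x$ reached inside $G$) we have $w\Vdash\Gamma$ (the relevant left context), $w\not\Vdash\Delta$, $w\not\Vdash\langle\Sigma\Rightarrow\Pi\rangle$ and $w\not\Vdash[\Lambda\Rightarrow\Theta]$. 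From $w\not\Vdash[\Lambda\Rightarrow\Theta]$ there is $v$ with $Rwv$, $v\Vdash\Lambda$, $v\not\Vdash\Theta$. From $w\not\Vdash\langle\Sigma\Rightarrow\Pi\rangle$ there is $w'\geq w$ with $w'\Vdash\Sigma$, $w'\not\Vdash\Pi$. Now (FC) applied to $w'\geq w$ and $Rwv$ yields $v'$ with $Rw'v'$ and $v'\geq v$; by monotonicity $v'\Vdash\Lambda$, and by Lemma~\ref{lemma:star-operator} (from $v\not\Vdash\Theta$) we get $v'\not\Vdash\Theta^*$, so $v'\not\Vdash[\Lambda\Rightarrow\Theta^*]$ fails at $w'$, i.e.\ $w'\not\Vdash\Pi,[\Lambda\Rightarrow\Theta^*]$, witnessing $w\not\Vdash\langle\Sigma\Rightarrow\Pi,[\Lambda\Rightarrow\Theta^*]\rangle$. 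Together with the unchanged $w\Vdash\Gamma$, $w\not\Vdash\Delta$, $w\not\Vdash[\Lambda\Rightarrow\Theta]$ this refutes the premise's active part at $w$, hence $x\not\Vdash G\{S_1\}$, as required. Assembling these cases with the context-induction lemma gives the result; the delicate points to get right are the exact formulation of the context-induction lemma so that it accommodates the left-context formulas that (trans) and (inter) duplicate, and the correct pairing of the (FC) witness with Lemma~\ref{lemma:star-operator}.
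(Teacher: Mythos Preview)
Your proposal is correct and follows essentially the same route as the paper: induction on the structure of the context $G\{\ \}$, with the base case a rule-by-rule verification and the crucial (inter) case handled contrapositively via (FC) together with Lemma~\ref{lemma:star-operator}. One small simplification: your worry that (trans) and (inter) require a special ``two-world'' formulation of the context-induction lemma is unfounded, since in both rules the entire active sequent---including the $\langle\cdot\rangle$- or $[\cdot]$-block into which material is transferred---already sits inside the placeholder of $G\{\ \}$, so the local claim is still of the form ``for all $z$, $z\Vdash S_1$ implies $z\Vdash S$'' at a single world, exactly as for the other rules.
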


Proof of this lemma proceeds by induction on the structure of the context $G\{  \ \}$. The the base of the induction (that is $G=\emptyset$) is the important one, we check rule by rule and in the case of (inter) we make use of Lemma \ref{lemma:star-operator}. 

By Lemma \ref{forcing-preserving}, the soundness of \calculus~is proved as usual by a straightforward induction on the length of derivations.

\begin{theorem}[Soundness]
  If a sequent $S$ is provable in \calculus, then it is valid.
\end{theorem}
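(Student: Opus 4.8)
The plan is to establish soundness of \calculus\ by reducing it to the two lemmas that precede the theorem: Lemma~\ref{forcing-preserving} (every rule is forcing-preserving in an arbitrary context) and, underneath it, Lemma~\ref{lemma:star-operator} (the $*$-operator is sound). Concretely, I would argue by induction on the height of a derivation of $S$ in \calculus. For the base case, $S$ is an instance of one of the axioms $(\bot_L)$, $(\top_R)$, $(\text{id})$; in each case one checks directly from the extended forcing definition that $\+M,x\Vdash S$ for every model $\+M$ and world $x$. For $(\text{id})$, say $S = G\{\Gamma,A\Rightarrow\Delta,A\}$: proceeding by a subsidiary induction on the context $G\{\cdot\}$ exactly as in Lemma~\ref{forcing-preserving}, it suffices to observe that $\Gamma,A\Rightarrow\Delta,A$ is forced at every world, since at any world $x$ either $x\not\Vdash A$ (so the antecedent fails) or $x\Vdash A$ (so the succedent is satisfied). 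The cases $(\bot_L)$ and $(\top_R)$ are even simpler since $\bot$ is never forced and $\top$ always is.

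For the induction step, suppose $S = G\{S_0\}$ is obtained by a one-premise rule from $G\{S_1\}$, or $S = G\{S_0\}$ is obtained by a two-premise rule from $G\{S_1\}$ and $G\{S_2\}$. By the induction hypothesis, the premise(s) are valid, i.e.\ forced at every world of every model. Fix a model $\+M=(W,\le,R,V)$ and $x\in W$. In the two-premise case we have $x\Vdash G\{S_1\}$ and $x\Vdash G\{S_2\}$, and Lemma~\ref{forcing-preserving} applied with $i=1$ (or, for two-premise rules, using both premises as the statement of that lemma provides) yields $x\Vdash G\{S_0\}$; in the one-premise case, $x\Vdash G\{S_1\}$ gives $x\Vdash G\{S_0\}$ directly. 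Since $x$ and $\+M$ were arbitrary, $S$ is valid, which closes the induction.

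There is, in fact, no real obstacle left at the level of the theorem itself: all the work has been front-loaded into Lemma~\ref{forcing-preserving}, whose delicate case is the (inter) rule, where one uses the (FC) frame condition to produce, from a lower-world $R$-successor $y$ with $y\not\Vdash \Lambda\Rightarrow\Theta$, an upper-world $R$-successor $y_1\ge y$, and then invokes Lemma~\ref{lemma:star-operator} to derive the contradiction with $y_1\Vdash\Theta^*$. The only point demanding a little care in writing the theorem's proof is the bookkeeping for two-premise rules: Lemma~\ref{forcing-preserving} is phrased so that from the forcing of both premises one concludes the forcing of the conclusion, so the induction must feed it both $G\{S_1\}$ and $G\{S_2\}$ simultaneously rather than one at a time. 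With that understood, the proof is a routine induction on derivation length.

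\begin{proof}
By induction on the height of a derivation of $S$ in \calculus. If $S$ is an axiom, it is an instance of $(\bot_L)$, $(\top_R)$ or $(\text{id})$. In each case one shows, by a straightforward induction on the structure of the context $G\{\cdot\}$ (as in the proof of Lemma~\ref{forcing-preserving}), that $\+M,x\Vdash S$ for every bi-relational model $\+M$ and every world $x$: it suffices to treat the empty context, where for $(\bot_L)$ the antecedent contains $\bot$ which is never forced, for $(\top_R)$ the succedent contains $\top$ which is always forced, and for $(\text{id})$ at any world $x$ either $x\not\Vdash A$, so the antecedent fails, or $x\Vdash A$, so the succedent is satisfied. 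Hence every axiom is valid.

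If $S$ is obtained by a rule, then $S = G\{S_0\}$ is derived either from a single premise $G\{S_1\}$ or from two premises $G\{S_1\}$ and $G\{S_2\}$. By the induction hypothesis the premises are valid. Fix a model $\+M=(W,\leq,R,V)$ and $x\in W$; then $x\Vdash G\{S_1\}$ (and, in the two-premise case, $x\Vdash G\{S_2\}$). By Lemma~\ref{forcing-preserving} we obtain $x\Vdash G\{S_0\}$, that is $x\Vdash S$. Since $\+M$ and $x$ were arbitrary, $S$ is valid.
\end{proof}
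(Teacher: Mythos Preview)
Your proposal is correct and follows exactly the approach the paper indicates: a straightforward induction on the length (height) of the derivation, with the inductive step delegated entirely to Lemma~\ref{forcing-preserving}. Your treatment of the axiom base cases via a subsidiary induction on the context is a natural way to spell out what the paper leaves implicit, and your remarks on how Lemma~\ref{lemma:star-operator} underlies the (inter) case of Lemma~\ref{forcing-preserving} are accurate.
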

\section{Termination and completeness for \calculus}

In this section, we provide a terminating proof-search procedure based on \calculus, whence a decision procedure for \logic; it will then be used to prove that \calculus~is complete with respect to \logic~bi-relational semantics. Here is a roadmap: first we introduce a set-based variant of the calculus where all rules are cumulative (or kleen'ed), in the sense that principal formulas are kept in the premises. With this variant, we formulate saturation conditions on a sequent associated to each rule. Saturation conditions are needed for both termination and completeness: they are used to prevent "redundant" application of the rules, source of non-termination. In the meantime saturation conditions also ensure that a saturated sequent satisfies the truth conditions specified by the semantics (which is presented in truth lemma), so it can be seen as a countermodel. 

First, we present \textbf{C}\calculus, a variant of \calculus~where sequents are set-based rather than multi-set based and the rules are cumulative. 

\begin{definition}
  {\rm \textbf{C}\calculus}~acts on set-based sequents, where a set-based sequent $S = \Gamma \Rightarrow \Delta$ is defined  as in definition \ref{def-nested-sequent}, but $\Gamma$ is a {\em set} of formulas and $\Delta$ is a {\em set} of formulas and/or blocks (containing set-based sequents).
  The rules are as follows: 
  \begin{itemize}
    \item It contains the rules $(\bot_L),\ (\text{id}),\ (\square_L),\ (\Diamond_R)$, (trans) and (inter) of \calculus. 
    \item  $(\supset_R)$ is replaced by the two  rules:
    {
      \tiny
    \[
      \text{if~$A\in\Gamma$~~~~}
    \AxiomC{$G\{\Gamma\Rightarrow\Delta,A\supset B,B\}$}
    \RightLabel{\rm ($\supset_{R_1}'$)}
    \UnaryInfC{$G\{\Gamma\Rightarrow\Delta,A\supset B\}$}
    \DisplayProof
    \quad
    \text{~~~~~if~$A\notin\Gamma$~~~~}
    \AxiomC{$G\{\Gamma\Rightarrow \Delta,A\supset B,\langle A\Rightarrow B\rangle\}$}
    \RightLabel{\rm ($\supset_{R_2}'$)}
    \UnaryInfC{$G\{\Gamma\Rightarrow\Delta,A\supset B\}$}
    \DisplayProof
    \]
    }
    \item The other rules are modified in order to keep  the principal formula in the premises. For example, the cumulative versions of $(\wedge_L),\ (\supset_L),\ (\square_R)$ and $(\Diamond_L)$  are:
    {
      \tiny
    \[
      \AxiomC{$G\{A,B,A\wedge B,\Gamma\Rightarrow\Delta\}$}
    \RightLabel{\rm ($\wedge_L'$)}
    \UnaryInfC{$G\{A\wedge B,\Gamma\Rightarrow\Delta\}$}
    \DisplayProof
    \quad
    \AxiomC{$G\{\Gamma,A\supset B\Rightarrow A,\Delta\}$}
    \AxiomC{$G\{\Gamma,A\supset B,B\Rightarrow\Delta\}$}
    \RightLabel{\rm ($\supset_L'$)}
    \BinaryInfC{$G\{\Gamma,A\supset B\Rightarrow\Delta\}$}
    \DisplayProof
    \]
  
    \[
    \AxiomC{$G\{\Gamma\Rightarrow\Delta,\square A,\langle\Rightarrow [\Rightarrow A]\rangle\}$}
    \RightLabel{\rm ($\square_R'$)}
    \UnaryInfC{$G\{\Gamma\Rightarrow\Delta,\square A\}$}
    \DisplayProof
    \quad
    \AxiomC{$G\{\Gamma,\Diamond A\Rightarrow\Delta,[A\Rightarrow]\}$}
    \RightLabel{\rm ($\Diamond_L'$)}
    \UnaryInfC{$G\{\Gamma,\Diamond A\Rightarrow\Delta\}$}
    \DisplayProof
    \]
    }
  \end{itemize}
\end{definition} 

\begin{proposition}
  A sequent $S$ is provable in \calculus~if and only if $S$ is provable in {\rm \textbf{C}\calculus}.
\end{proposition}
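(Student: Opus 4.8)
The plan is to prove both directions of the equivalence between provability in \calculus\ and in \textbf{C}\calculus\ by relating derivations in the two systems.

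For the direction from \calculus\ to \textbf{C}\calculus, I would argue that every derivation in \calculus\ can be simulated in \textbf{C}\calculus. The idea is that the cumulative rules of \textbf{C}\calculus\ are strictly more permissive: any premise available in \calculus\ is still derivable (indeed available, up to the extra copies of principal formulas) in \textbf{C}\calculus. More precisely, since sets absorb duplicates, a multi-set sequent and the set it determines have the same provability status, and keeping principal formulas around can only help; I would make this precise by showing admissibility of weakening in \textbf{C}\calculus\ (adding formulas/blocks to the left or right of any sequent in a context preserves provability, proved by induction on derivation height). Then, by induction on the height of the \calculus-derivation, each rule application is matched by the corresponding cumulative rule (for $(\supset_R)$, one of $(\supset_{R_1}')$ or $(\supset_{R_2}')$ applies depending on whether $A\in\Gamma$, and the extra $A\supset B$ or $B$ in the premise is removed by weakening-admissibility if needed), obtaining a \textbf{C}\calculus-derivation of the same set-based sequent.

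For the converse direction, from \textbf{C}\calculus\ to \calculus, the point is that the cumulative rules are derivable (not just admissible) in \calculus\ once we have contraction available. I would first establish height-preserving admissibility of contraction for \calculus\ — or more simply, since \calculus\ is a standard nested calculus with explicit structural behaviour, observe that contraction is admissible — and then, again by induction on the height of the \textbf{C}\calculus-derivation, replace each cumulative rule application by its non-cumulative counterpart followed by contractions to delete the retained principal formula(s); the rules $(\supset_{R_1}')$ and $(\supset_{R_2}')$ are handled by $(\supset_R)$ (after noting $A\supset B$ on the right is disposable by contraction, and in the first case $A\in\Gamma$ lets us also derive the conclusion) together with contraction. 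This yields a \calculus-derivation of the multi-set sequent underlying $S$, and since $S$ as stated is the same object viewed set-theoretically, we are done.

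The main obstacle is the contraction-admissibility argument for \calculus, since contraction on the right must deal with nested blocks $\langle\cdot\rangle$ and $[\cdot]$ and with the interaction rule (inter), whose premise duplicates a modal block $[\Lambda\Rightarrow\Theta]$ inside an implication block with its starred variant; contracting two such blocks requires care about how the $*$-operator interacts with contraction, and one typically needs a generalized induction hypothesis allowing simultaneous contraction of a block and its $*$-image. A cleaner route, which I would prefer if the bookkeeping for contraction gets heavy, is to prove both inclusions semantically-free but via a direct translation of derivations that never needs full contraction: for the harder direction one shows by induction that whatever is provable with cumulative rules is provable with the original rules by tracking, for each retained formula, that it was already "used" — i.e. that the non-cumulative derivation can be built so that retained copies are never needed. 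Either way, the essential work is confined to structural lemmas (weakening and contraction admissibility) for \calculus; the rule-by-rule simulation is routine.
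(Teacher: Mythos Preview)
The paper does not actually supply a proof of this proposition: it is stated and immediately used, with the equivalence between the cumulative, set-based variant and the original multi-set calculus treated as routine. Your outline---weakening admissibility in \textbf{C}\calculus\ to absorb the ``missing'' principal formulas when going from \calculus\ to \textbf{C}\calculus, and contraction admissibility in \calculus\ to discard the retained principal formulas when going back---is exactly the standard argument one would expect here, and it is correct.

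Your caution about contraction interacting with (inter) is well placed but not an obstacle. The key observation is that the $*$-operator is idempotent and monotone with respect to the block structure: if two copies of $[\Lambda\Rightarrow\Theta]$ occur in $\Delta$, contracting them commutes with any later application of (inter), since $(\Theta)^* = (\Theta^*)^*$ and the starred block produced is the same in either case. So the usual height-preserving contraction lemma goes through by the standard induction on derivation height with a case analysis on the last rule; (inter) does not require the generalized hypothesis you mention. The handling of $(\supset_{R_1}')$ in the \textbf{C}\calculus-to-\calculus\ direction is also fine: with $A\in\Gamma$ and $G\{\Gamma\Rightarrow\Delta,A\supset B,B\}$ provable in \calculus, weaken into a fresh block $\langle A\Rightarrow B\rangle$, use (trans) to import $\Gamma$ (hence $A$), and then the inner sequent is an instance of what you already have up to weakening; apply $(\supset_R)$ and contract the two copies of $A\supset B$.
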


From now on we consider \textbf{C}\calculus. 
We introduce the notion of {\em structural inclusion} between sequents.
It is used in the definition of saturation conditions as well as the model construction presented at the end of the section. 

\begin{definition}[Structural inclusion $\subseteq^\-S$]\label{def:structural-inclusion}
  Let $\Gamma_1\Rightarrow\Delta_1,\Gamma_2\Rightarrow\Delta_2$ be two sequents. $\Gamma_1\Rightarrow\Delta_1$ is said to be \textit{structurally included} in $\Gamma_2\Rightarrow\Delta_2$, denoted as $\Gamma_1\Rightarrow\Delta_1\subseteq^\-S \Gamma_2\Rightarrow\Delta_2$, if:
  \begin{itemize}
    \item  $\Gamma_1\subseteq\Gamma_2$ and ;
    \item for each $[\Lambda_1\Rightarrow\Theta_1]\in \Delta_1$,  there exists $[\Lambda_2\Rightarrow\Theta_2]\in\Delta_2$ such that
  $\Lambda_1\Rightarrow\Theta_1\subseteq^\-S \Lambda_2\Rightarrow\Theta_2$.
  \end{itemize}
\end{definition}

It is easy to see that $\subseteq^\-S$ is reflexive and transitive; moreover if $\Gamma_1\Rightarrow\Delta_1\subseteq^\-S \Gamma_2\Rightarrow\Delta_2$, then $\Gamma_1\subseteq\Gamma_2$.

We define now  the saturation conditions associated to each rule of \textbf{C}\calculus.

\begin{definition}[Saturation conditions]\label{saturation-condition}
  Let $\Gamma\Rightarrow\Delta$ be a sequent where $\Gamma$ is a set of formulas and $\Delta$ is a set of formulas and blocks. Saturation conditions associated to a rule in the calculus are given as below.
  \scriptsize
\begin{description}
  \item[($\bot_L$)] $\bot\notin\Gamma$.
  \item[($\top_R$)] $\top\notin\Delta$. 
  \item[(id)] $\mathsf{At}\cap(\Gamma\cap\Delta)$ is empty.
  \item[($\wedge_R$)] If $A\wedge B\in \Delta$, then $A\in\Delta$ or $B\in \Delta$.
  \item[($\wedge_L$)] If $A\wedge B\in \Gamma$, then $A\in \Gamma$ and $B\in\Gamma$.
  \item[($\vee_R$)] If $A\vee B\in\Delta$, then $A\in\Delta$ and $B\in\Delta$.
  \item[($\vee_L$)] If $A\vee B\in \Gamma$, then $A\in\Gamma$ or $B\in\Gamma$.
  \item[($\supset_R$)] If $A\supset B\in\Delta$, then either $A\in\Gamma$ and $B\in\Delta$, or there is $\langle\Sigma\Rightarrow\Pi\rangle\in\Delta$ with $A\in\Sigma$ and $B\in\Pi$.
  \item[($\supset_L$)] If $A\supset B\in\Gamma$, then $A\in\Delta$ or $B\in\Gamma$.
  \item[($\square_R$)] If $\square A\in \Delta$, then either there is $[\Lambda\Rightarrow\Theta]\in\Delta$ with $A\in\Theta$, or there is $\langle\Sigma\Rightarrow[\Lambda\Rightarrow\Theta],\Pi\rangle\in\Delta$ with $A\in\Theta$.
  \item[($\square_L$)] If $\square A\in\Gamma$ and $[\Sigma\Rightarrow\Pi]\in\Delta$, then $A\in \Sigma$.
  \item[($\Diamond_R$)] If $\Diamond A\in \Delta$ and $[\Sigma\Rightarrow\Pi]\in\Delta$, then $A\in\Pi$.
  \item[($\Diamond_L$)] If $\Diamond A\in\Gamma$, then there is $[\Sigma\Rightarrow\Pi]\in\Delta$ with $A\in\Sigma$.
  \item[(trans)] If $\Delta$ is of form $\Delta',\langle\Sigma\Rightarrow\Pi\rangle$, then $\Gamma\subseteq\Sigma$.
  \item[(inter)] If $\Delta$ is of form $\Delta',\langle\Sigma\Rightarrow\Pi\rangle,[\Lambda\Rightarrow\Theta]$, then there is $[\Phi\Rightarrow\Psi]\in\Pi$ with $\Lambda\Rightarrow\Theta\subseteq^\-S\Phi\Rightarrow\Psi$.
\end{description}
\end{definition}
Concerning (inter)-saturation, observe that $\Lambda\Rightarrow\Theta\subseteq^\-S \Lambda\Rightarrow\Theta^*$, thus this condition generalizes the expansion produced by the (inter)-rule. 

\begin{proposition}\label{sequent-inclusion}
  Let $\Gamma\Rightarrow\Delta$ be a sequent saturated with respect to both (trans) and (inter). If $\Delta$ is of form $\Delta',\langle\Sigma\Rightarrow\Pi\rangle$, then $\Gamma\Rightarrow\Delta\subseteq^\-S\Sigma\Rightarrow\Pi$.
\end{proposition}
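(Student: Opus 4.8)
The plan is to unwind Definition~\ref{def:structural-inclusion} and verify its two clauses using exactly the two saturation conditions granted by the hypothesis. Write $\Delta=\Delta',\langle\Sigma\Rightarrow\Pi\rangle$. By the definition of $\subseteq^\-S$, establishing $\Gamma\Rightarrow\Delta\subseteq^\-S\Sigma\Rightarrow\Pi$ reduces to proving: (a) $\Gamma\subseteq\Sigma$; and (b) for every modal block $[\Lambda\Rightarrow\Theta]\in\Delta$ there is a modal block $[\Phi\Rightarrow\Psi]\in\Pi$ with $\Lambda\Rightarrow\Theta\subseteq^\-S\Phi\Rightarrow\Psi$.

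Clause (a) is immediate: since $\Delta$ has the form $\Delta',\langle\Sigma\Rightarrow\Pi\rangle$, the (trans)-saturation condition of Definition~\ref{saturation-condition} directly gives $\Gamma\subseteq\Sigma$.

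For clause (b), the key observation is that every modal block occurring in $\Delta$ already occurs in $\Delta'$, because $\langle\Sigma\Rightarrow\Pi\rangle$ is an implication block, not a modal one. Hence, fixing any $[\Lambda\Rightarrow\Theta]\in\Delta$, we may rewrite $\Delta$ in the form $\Delta'',\langle\Sigma\Rightarrow\Pi\rangle,[\Lambda\Rightarrow\Theta]$ for a suitable $\Delta''$. Now the (inter)-saturation condition applies to $\Gamma\Rightarrow\Delta$ and yields a block $[\Phi\Rightarrow\Psi]\in\Pi$ with $\Lambda\Rightarrow\Theta\subseteq^\-S\Phi\Rightarrow\Psi$, which is precisely what clause (b) demands. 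Combining (a) and (b) gives $\Gamma\Rightarrow\Delta\subseteq^\-S\Sigma\Rightarrow\Pi$.

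There is essentially no obstacle here: the statement is a direct repackaging of the (trans)- and (inter)-saturation conditions. The only point worth spelling out is the remark in the previous paragraph, namely that the implication block $\langle\Sigma\Rightarrow\Pi\rangle$ contributes no modal block to $\Delta$, so that clause (b) of structural inclusion ranges over exactly those modal blocks to which (inter)-saturation can be applied.
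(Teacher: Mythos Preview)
Your proof is correct and uses the same core idea as the paper's: verify clause (a) of Definition~\ref{def:structural-inclusion} via the (trans)-saturation condition and clause (b) via the (inter)-saturation condition. The paper packages the argument as an induction on the structure of $\Delta'$, peeling off one modal block at a time and invoking the induction hypothesis on the remaining sequent; your direct verification---applying (inter) independently to each modal block in $\Delta$---dispenses with that scaffolding, since the (inter)-condition already hands you the required $\subseteq^{\mathbf S}$-witness in $\Pi$ for each block separately.
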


In order to define a terminating  proof-search procedure based on \textbf{C}\calculus\ (like for any calculus with cumulative rules),  as usual we say that the backward application of a rule (R) to a sequent $S$ is {\em redundant} if $S$ satisfies the corresponding saturation condition for that application of (R) and we impose the following constraints: 

\indent (i)  {\em No rule is  applied to an axiom} and 

\indent (ii) {\em No rule is  applied redundantly.}

However the above restrictions are not sufficient to ensure the termination of the procedure as the following example shows.

\begin{example}\label{loop}
Let us consider the sequent $S = \Box a \supset \bot, \Box b \supset \bot  \Rightarrow p$, where we abbreviate by $\Gamma$ the antecedent of $S$. Consider the following derivation, we  only show the leftmost branch (the others succeed), we collapse some steps:
{
    \scriptsize
    \[
	\AxiomC{$\vdots$}
	\UnaryInfC{(3) $\Gamma \Rightarrow p, \Box a, \Box b, \langle \Gamma \Rightarrow \Box a, \Box b, [\Rightarrow a], \langle \Gamma \Rightarrow \Box a, \Box b, [\Rightarrow b]\rangle \rangle, \langle \Gamma \Rightarrow \Box a, \Box b, [\Rightarrow b]\rangle$}
	\UnaryInfC{$\vdots$}
	\UnaryInfC{(2) $\Gamma \Rightarrow p, \Box a, \Box b, \langle \Gamma \Rightarrow \Box a, \Box b, [\Rightarrow a], \langle \Rightarrow [\Rightarrow b]\rangle \rangle, \langle \Gamma \Rightarrow \Box a, \Box b, [\Rightarrow b]\rangle$}
	\RightLabel{$(\Box_R)$}
	\UnaryInfC{(1) $\Gamma \Rightarrow p, \Box a, \Box b, \langle \Gamma \Rightarrow \Box a, \Box b,  [\Rightarrow a]\rangle, \langle \Gamma \Rightarrow \Box a, \Box b,  [\Rightarrow b]\rangle$}
	\RightLabel{$(\supset_L)\times 4$}
	\UnaryInfC{$\Gamma \Rightarrow p, \Box a, \Box b, \langle \Gamma \Rightarrow [\Rightarrow a]\rangle, \langle \Gamma\Rightarrow [\Rightarrow b]\rangle$}
	\RightLabel{$(trans)\times 2$}
	\UnaryInfC{$\Gamma \Rightarrow p, \Box a, \Box b, \langle \Rightarrow [\Rightarrow a]\rangle, \langle \Rightarrow [\Rightarrow b]\rangle$}
	\RightLabel{$(\Box_R)\times 2$}
	\UnaryInfC{$\Gamma \Rightarrow p, \Box a, \Box b$}
	\RightLabel{$(\supset_L)\times 2$}
	\UnaryInfC{$\Gamma \Rightarrow p$}
	\DisplayProof
	\]
}
	
Observe that in sequent (1) $(\Box_R)$ can only be applied to $\Box b$, creating the  nested block $\langle \Rightarrow [\Rightarrow b]\rangle$ in (2), as it satisfies the saturation condition for $\Box a$. 
This block will be further expanded to $\langle \Gamma \Rightarrow \Box a, \Box b, [\Rightarrow b]\rangle$ in (3) that satisfies the saturation condition for $\Box b$, but not for $\Box a$, whence it will be further expanded, and so on. Thus the branch does not terminate. 

\end{example}
In order to deal with this situation, intuitively we need to block the expansion of a  sequent that occurs nested in another sequent whenever the former has already been expanded and the latter is "equivalent" to the former, in a sense that we will define. To accomplish this purpose we need to introduce a few notions. 

\begin{definition}[$\in^{\langle\cdot\rangle}, \in^{[\cdot]}, \in^+$-relation]
  Let $\Gamma_1\Rightarrow\Delta_1,\Gamma_2\Rightarrow\Delta_2$ be two sequents. We denote $\Gamma_1\Rightarrow\Delta_1 \in^{\langle\cdot\rangle}_0\Gamma_2\Rightarrow\Delta_2$ if $\langle \Gamma_1\Rightarrow\Delta_1 \rangle\in\Delta_2$. Let $\in^{\langle\cdot\rangle}$  be the transitive closure of $\in^{\langle\cdot\rangle}_0$. Relations $\in^{[\cdot]}_0$ and $\in^{[\cdot]}$ for modal blocks are  defined similarly. 
  Let $\in^+_0 =~\in^{\langle\cdot\rangle}_0 \cup \in^{[\cdot]}_0$ and finally let $\in^+$ be the reflexive-transitive closure of $\in^+_0$ . 
\end{definition}
Observe that $S'\in^+ S$ is the same as: for some context $G$, $S = G\{S'\}$.

\hide{
In order to trace the $\in^+$-components of a sequent more precisely, we assign an implication depth to each of them. 

\begin{definition}[Implication depth]
  Let $S$ be a sequent. For each $T\in^+S$, its implication depth $d_S(T)$ in $S$ is set as below.
  \begin{itemize}
    \item $d_S(S)=0$;
    \item $d_S(T)=d_S(T')$, if $T\in^{[\cdot]}_0T'$ and $T'\in^+S$;
    \item $d_S(T)=d_S(T')+1$, if $T\in^{\langle\cdot\rangle}_0T'$ and $T'\in^+S$.
  \end{itemize}
\end{definition}
}
We introduce the operator  $\sharp$ (to be compared with $*$ of Definition \ref{def:star-operator}). Its purpose is to remove implication blocks from a sequent and retain all other formulas.   

\begin{definition}[$\sharp$-operator]
  Let $\Lambda\Rightarrow\Theta$ be a sequent. We define  $\Theta^\sharp$ as follows: 
  (i) $\Theta^\sharp=\Theta$ if $\Theta$ is block-free; (ii)  $\Theta^\sharp=\Theta_0^\sharp,[\Phi\Rightarrow\Psi^\sharp]$ if $\Theta=\Theta_0,[\Phi\Rightarrow\Psi]$; (iii)
  $\Theta^\sharp=\Theta_0^\sharp$ if $\Theta=\Theta_0,\langle\Phi\Rightarrow\Psi\rangle$.
\end{definition}
As an example let $\Delta = b, [c\Rightarrow d, [e\Rightarrow f], \langle g \Rightarrow h \rangle],  \langle t \Rightarrow [p \Rightarrow q]\rangle, [m \Rightarrow n]$, then $\Delta^\sharp = b, [c\Rightarrow d, [e\Rightarrow f]], [m \Rightarrow n]$, while  $\Delta^* = [c\Rightarrow  [e\Rightarrow ]], [m \Rightarrow ]$.

Intuitively, if a sequent  $S= \Lambda \Rightarrow \Theta$ describes a model rooted in $S$ and specifies formulas forced and not forced in $S$, then   $\Lambda \Rightarrow \Theta^\sharp$, describes  the chains of R-related worlds to $S$  by specifying all formulas forced and not forced in each one of them, but ignores upper worlds in  the pre-order, the latter being represented by implication blocks.

We  use the $\sharp$-operator to define an equivalence relation  between sequents. The equivalence relation will be used to detect loops in a derivation as in the example above. 

\begin{definition}[Block-equivalence]\label{def:block-eq}
  Let $S_1, S_2$ be two sequents where $S_1=\Gamma_1\Rightarrow\Delta_1,S_2=\Gamma_2\Rightarrow\Delta_2$. We say $S_1$ is block-equivalent to $S_2$, denoted as $S_1\simeq S_2$, if $\Gamma_1=\Gamma_2$ and $\Delta_1^\sharp=\Delta_2^\sharp$.
\end{definition}

In order to define a proof-search procedure, we  divide rules of \textbf{C}\calculus~into three groups and define correspondingly  three levels of saturation.
\begin{enumerate}
  \item[(R1)] basic rules: all propositional and modal rules except $(\supset_R)$ and $(\square_R)$;
  \item[(R2)] rules that transfer formulas and blocks into implication blocks: (trans) and (inter);
  \item[(R3)] rules that create implication blocks: $(\square_R)$ and $(\supset_R)$.
\end{enumerate}

\begin{definition}[Saturation]\label{3-saturation}
  Let $S =\Gamma\Rightarrow\Delta$ be a sequent and not an axiom. $S$ is called: 
  \begin{itemize}
    \item {\rm R1-saturated} if $\Gamma\Rightarrow\Delta^\sharp$ satisfies all the saturation conditions  of  R1 rules;
    \item {\rm R2-saturated} if $S$ is R1-saturated and $S$ satisfies saturation conditions of R2 rules for blocks $S_1 \in^{\langle\cdot\rangle}_0 S$ and $S_2 \in^{[\cdot]}_0 S$.
    \item {\rm R3-saturated} if $S$ is R2-saturated and $S$ satisfies saturation conditions of  R3 rules for formulas $\square A,  B\supset C \in \Delta$.
  \end{itemize}
\end{definition}

We can finally define when a sequent is blocked, the intention is that it will not be expanded anymore by the proof-search procedure. 

\begin{definition}[Blocked sequent]
  Given a sequent $S$ and $S_1, S2 \in^+ S$, with   $S_1=\Gamma_1\Rightarrow\Delta_1,S_2=\Gamma_2\Rightarrow\Delta_2$. We say $S_2$ is blocked by $S_1$ in $S$, if $S_1$ is R3-saturated, $S_2\in^{\langle\cdot\rangle}S_1$ and $S_1\simeq S_2$. 
  We say that a sequent $S'$ is  {\rm blocked} in $S$ if there exists $S_1\in^+ S$ such that $S'$ is blocked by $S_1$ in $S$.
\end{definition}
Observe that if $S$ is finite, then for any $S'\in^+S$ checking whether $S'$ is  {\rm blocked} in $S$ can be effectively decided. We will say just that $S'$ is blocked when $S$ is clear.

\begin{example}
We reconsider the example \ref{loop}. The sequent (3) will be further expanded to 
{
  \scriptsize
  $$(4) \ \Gamma \Rightarrow p, \Box a, \Box b, \langle \Gamma \Rightarrow \Box a, \Box b, [\Rightarrow a], \langle \Gamma \Rightarrow \Box a, \Box b, [\Rightarrow b],\langle \Gamma \Rightarrow \Box a, \Box b, [\Rightarrow a]\rangle^{(ii)} \rangle \rangle^{(i)}, \langle \Gamma \Rightarrow \Box a, \Box b, [\Rightarrow b]\rangle$$ 
}
We have marked by (i) and (ii) the relevant blocks. Observe that the sequent $S_2 = \Gamma \Rightarrow \Box a, \Box b, [\Rightarrow a]$ in the block marked (ii)  is blocked by the sequent $S_1 = \Gamma \Rightarrow \Box a, \Box b, [\Rightarrow a], \langle \Gamma \Rightarrow \Box a, \Box b, [\Rightarrow b],\langle \Gamma \Rightarrow \Box a, \Box b, [\Rightarrow a]\rangle \rangle$ marked (i),  since $S_1$ is R3-saturated, $S_2\in^{\langle\cdot\rangle}S_1$ and $S_1\simeq S_2$, as in particular  $(\Box a, \Box b, [\Rightarrow a], \langle \Gamma \Rightarrow \Box a, \Box b, [\Rightarrow b],\langle \Gamma \Rightarrow \Box a, \Box b, [\Rightarrow a]\rangle \rangle )^\sharp = (\Gamma \Rightarrow \Box a, \Box b, [\Rightarrow a])^\sharp$.
\end{example}

We finally define three global saturation conditions.

\begin{definition}[Global saturation]
  Let $S$ be a sequent and not an axiom. $S$ is called :
  \begin{itemize}
  \item {\rm global-R1-saturated} if for each $T\in^+S$, $T$ is either R1-saturated or blocked;
    \item {\rm global-R2-saturated} if for each $T\in^+S$, $T$ is either R2-saturated or blocked;
    \item {\rm global-saturated} if for each $T\in^+S$, $T$ is either R3-saturated or blocked.
  \end{itemize}
\end{definition}

In order to specify the proof-search procedure, we make use of three sub-procedures that extend a given derivation $\+D$  by expanding a leaf $S$, each procedure applies rules {\em non-redundantly} to some $T:=\Gamma\Rightarrow\Delta\in^+S$, that we recall it means that $S = G\{T\}$, for some context $G$ . We define : 
\begin{enumerate}
  \item $\textbf{EXP1}(\+D, S, T)= \+D'$ where $\+D'$ is the extension of $\+D$  obtained by applying  R1 rules to every formula in $\Gamma\Rightarrow\Delta^\sharp$. 
  \item $\textbf{EXP2}(\+D, S, T)= \+D'$ where $\+D'$ is the extension of $\+D$  obtained by applying R2-rules  to blocks $\langle T_i\rangle,[T_j]\in\Delta$. 
  \item $\textbf{EXP3}(\+D, S, T)= \+D'$ where $\+D'$ is the extension of $\+D$  obtained by applying  R3-rules to formulas $\square A, A\supset B\in\Delta$. 
\end{enumerate}

The three procedures are used as macro-steps in the proof search procedure defined next. 

\begin{proposition}
Given a finite derivation $\+D$, a finite leaf $S$ of $\+D$ and $T\in^+S$,  then each  $\textbf{EXP1}(\+D, S, T)$, $\textbf{EXP2}(\+D, S, T)$,$\textbf{EXP3}(\+D, S, T)$  terminates by producing a finite  expansion of $\+D$ where all sequents in it are finite. 
\end{proposition}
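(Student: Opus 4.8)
The plan is to handle the three sub-procedures \textbf{EXP1}, \textbf{EXP2}, \textbf{EXP3} one at a time, in each case bounding the number of rule applications performed and checking that every sequent produced is finite. Throughout I fix the finite leaf $S$ and $T = \Gamma \Rightarrow \Delta \in^+ S$ (so $T$ is finite, since $S$ is), write $\mathrm{sub}(T)$ for the finite set of all subformulas of formulas occurring anywhere in $T$, and recall $md(T)<\omega$.

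For \textbf{EXP1} --- which is the hard case --- I would first record two structural facts about the R1 rules: (a) no R1 rule introduces an implication block, and $(\Diamond_L')$ is the only R1 rule introducing a fresh modal block; (b) every formula an R1 rule inserts into some antecedent or succedent lies in $\mathrm{sub}(T)$, and (a routine but crucial check) $md$ does not increase under R1 rules. From (a)--(b) it follows that in any sequent reachable from $T$ by R1 rules the modal blocks form a tree of depth at most $md(T)$ with at most $|\mathrm{sub}(T)|$ children at each node --- by the $(\Diamond_L')$-saturation condition at most one new modal block is created per $\Diamond$-formula sitting in an antecedent --- and each modal block carries an antecedent and a succedent that are subsets of $\mathrm{sub}(T)$; hence all such sequents belong to one fixed finite set $\mathcal{C}$. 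Since the R1 rules are cumulative, a \emph{non-redundant} R1 application strictly enlarges the current sequent in the componentwise-inclusion order on $\mathcal{C}$: in every premise at least one new formula or one new modal block is added, redundancy of the application being exactly what the associated saturation condition forbids. Consequently every branch of the derivation built by \textbf{EXP1} is a strictly increasing chain in the finite poset $\mathcal{C}$, hence of bounded length; as each R1 rule has at most two premises, this derivation is a finitely branching tree of bounded depth, so it is finite, and all its sequents lie in $\mathcal{C}$ and are finite.

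For \textbf{EXP2} and \textbf{EXP3} the argument is short, since the rules involved --- (trans), (inter), $(\square_R')$, $(\supset_{R_1}')$, $(\supset_{R_2}')$ --- are all unary and, by the respective procedures (matching the definitions of R2- and R3-saturation), are applied only to the blocks and formulas occurring immediately in $\Delta$. For \textbf{EXP2}: a single application of (trans) copying the whole antecedent makes the (trans)-saturation condition hold for a given immediate implication block $\langle\Sigma\Rightarrow\Pi\rangle$ of $\Delta$; and for each pair of an immediate implication block $\langle\Sigma\Rightarrow\Pi\rangle$ and an immediate modal block $[\Lambda\Rightarrow\Theta]$ of $\Delta$, a single application of (inter), adding $[\Lambda\Rightarrow\Theta^*]$ to $\Pi$, makes the (inter)-saturation condition hold for that pair, since $\Lambda\Rightarrow\Theta\subseteq^\-S\Lambda\Rightarrow\Theta^*$; neither rule alters the immediate blocks of $\Delta$ or creates new ones there, so nothing further is triggered, and $\Delta$ has only finitely many immediate blocks. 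For \textbf{EXP3}: a single application of the appropriate R3 rule to each $\square A\in\Delta$ (resp.\ $A\supset B\in\Delta$) adds the block $\langle\Rightarrow[\Rightarrow A]\rangle$ (resp.\ the formula $B$ or the block $\langle A\Rightarrow B\rangle$) and thereby fulfils the matching $(\square_R)$- resp.\ $(\supset_R)$-saturation condition, while the newly added material carries no $\square$- or $\supset$-formula at the top level of $\Delta$; since $\Delta$ has finitely many such formulas, \textbf{EXP3} halts after finitely many steps. In both cases all inserted material is built from parts of $T$, so the resulting derivations are finite with finite sequents.

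The main obstacle is thus the termination of \textbf{EXP1}: unlike \textbf{EXP2} and \textbf{EXP3}, it recurses through all modal blocks (as $\cdot^\sharp$ retains them), $(\Diamond_L')$ keeps manufacturing fresh modal blocks, and $(\vee_L)$, $(\supset_L')$ branch. The crux making it terminate is the uniform bound $md(T)$ on the depth of modal nesting together with the subformula property; these confine the whole R1-expansion to the finite configuration space $\mathcal{C}$, on which cumulativity forces strict progress along every branch.
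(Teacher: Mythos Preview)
Your argument is correct and follows essentially the same route as the paper: for \textbf{EXP2} and \textbf{EXP3} the paper simply says the claim is ``obvious, as only finitely many blocks or formulas in $T$ are processed,'' while for \textbf{EXP1} it argues, exactly as you do, that $md$ is preserved under R1-rules, that the modal-block tree of any reachable sequent therefore has height $\leq md(T)$ with branching and node-labels bounded by subformulas of $T$, and that non-redundancy (i.e.\ strict progress of cumulative rules) then bounds the length of every branch. One small inaccuracy: in your \textbf{EXP3} paragraph the claim that ``the newly added material carries no $\square$- or $\supset$-formula at the top level of $\Delta$'' can fail when $(\supset_{R_1}')$ adds $B$ and $B$ itself is a $\square$- or $\supset$-formula; this is harmless, however, since termination follows directly from the fact that only finitely many (sub)formulas are ever processed.
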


Proof of this claim for $\textbf{EXP2}(\+D, S, T)$, $\textbf{EXP3}(\+D, S, T)$ is obvious, as only finitely many blocks or formulas in $T$ are processed. For $\textbf{EXP1}(\+D, S, T)$, the claim is less obvious,  since the rules are applied also deeply in $\Gamma\Rightarrow\Delta^\sharp$. However, notice that $\textbf{EXP1}$ only applies the rules (both L and R)  for  $\land, \lor, \Diamond$ and $\supset_L, \Box_L$ and ignores implication blocks, thus  $\textbf{EXP1}(\+D, S, T)$ produces exactly the same expansion of  $\+D$ that we would obtain  by the same rules of a nested sequent calculus for classical modal logic \textbf{K} \ \cite{brunner:2009}, and we know that it  terminates. 

Anyway, the claim for $\textbf{EXP1}(\+D, S, T)$ can be  proved by proving that any  derivation $\+Do$, with root  $\Gamma\Rightarrow\Delta^\sharp$ and generated by R1-rules, is finite.  Observe  that $\textbf{EXP1}(\+D, S, T)$ is obtained simply by  "appending" $\+Do$ to $\+D$,  where we   replace every sequent $T'$ in $\+Do$  by $G\{T'\}$, as  $S = G\{T\}$. In order to prove that $\+Do$ is finite, notice that (i) all R1-rules are at most binary, (ii) the length of a  branch of $\+Do$  is bounded by the size of the maximal sequent that  can occur in it  because of non-redundancy restriction. Thus we only need to estimate the size of any sequent in $\+Do$. In order to do so we introduce the following definition. 

\begin{definition}
  Given a sequent $S$, the tree $\+T_S$ is defined as follows:
  (i) the root of $\+T_S$ is $S$;
  (ii) if $S_1\in^{[\cdot]}_0S_2$, then $S_1$ is a child of $S_2$.
\end{definition}
We denote the height of $\+T_S$ as $h(T_S)$.  It is easy to verify that $h(\+T_S)\leq md(S)$. Moreover,  we have $|S| = \Sigma_{N\in \+T_S}|N|$, so that trivially $|S| \leq |N^{x}| \times Card(\+T_S)$, where $N^{x}$ is a a node of $\+T_S$ of maximal size. 
Moreover we denote  by $Sub(A)$  the set of subformulas of a formula $A$ and for a sequent $S= \Gamma \Rightarrow \Delta$ we use the corresponding notations  $Sub(\Gamma)$, $Sub(\Delta)$, $Sub(S)$. 
\hide{
Moreover, by $Sub(A)$ we denote the set of subformulas of a formula $A$, and 
 given a sequent $S=\Gamma\Rightarrow \Delta$, 
 where $\Delta = B_1,\ldots, B_k,[S_1],\ldots,[S_m],\langle T_1\rangle,\ldots,\langle T_n\rangle$,
we define   $Sub(\Delta) = \bigcup \{Sub(C) \mid C\in \Gamma\}  \cup  \bigcup \{ Sub(C) \mid C \in  \{B_1,\ldots, B_k\}\} \cup \bigcup_i Sub(Si) \cup \bigcup_j Sub(T_j)$ and then $Sub(S) = Sub(\Gamma) \cup Sub(\Delta)$.
}
Finally,  we recall  that $Card(Sub(S)) = O(|S|)$.

We get the following {\em rough} bound of the size of any sequent occurring in a derivation by R1-rules. 

\begin{proposition}\label{bound}
  Let $\+Do$ be a derivation with root a non-axiomatic sequent $T = \Gamma \Rightarrow \Delta$ obtained by applying R1-rules to  $\Gamma \Rightarrow \Delta^\sharp$, then any $T'$ occurring in $\+Do$  has size $O(|T|^{|T|+1})$. 
\end{proposition}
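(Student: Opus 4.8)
The goal is a rough size bound $O(|T|^{|T|+1})$ for every sequent $T'$ occurring in a derivation $\+Do$ generated by R1-rules from $\Gamma\Rightarrow\Delta^\sharp$. The key observation, already flagged in the surrounding text, is that R1-rules do not touch implication blocks and never create modal blocks from scratch; they only propagate formulas into existing modal blocks or add subformulas of already-present formulas. Hence the tree $\+T_{T'}$ has the same shape (same underlying set of nodes, modulo which formulas they carry) as $\+T_T$ throughout the derivation: no node is ever added, because $(\square_R)$ and $(\supset_R)$ are excluded and the only block-creating R1-rule is $(\Diamond_L)$ acting on the $[\cdot]$-free part $\Delta^\sharp$ — but since we start from $\Delta^\sharp$ and $(\Diamond_L')$ is cumulative, each $\Diamond A$ in an antecedent triggers at most one new modal block and this happens at most once per occurrence. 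So I would first argue $Card(\+T_{T'}) \le Card(\+T_T) + (\text{number of }\Diamond\text{-formulas in }Sub(T))$, which is $O(|T|)$; more crudely, $Card(\+T_{T'}) = O(|T|)$.

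Next I would bound the size $|N|$ of a single node $N$ of $\+T_{T'}$, i.e. a local sequent $\Lambda\Rightarrow\Theta$ stripped of its modal-block children. Every formula that R1-rules ever insert into $\Lambda$ or into the formula-part of $\Theta$ is a subformula of some formula already occurring in $T$: $(\wedge_L'),(\vee_L),(\wedge_R),(\vee_R),(\supset_L'),(\supset_{R_1}'),(\square_L),(\Diamond_R)$ all introduce only subformulas of their principal formula, and $(\supset_{R_2}')$ is an R3-rule, hence excluded. Since the calculus is set-based, no formula is duplicated, so the formula-content of each node $N$ is a subset of $Sub(T)$, giving $|N^x| = O(Card(Sub(T))) = O(|T|)$ for the maximal node $N^x$. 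Combining with $|T'| \le |N^x|\times Card(\+T_{T'})$ yields the quadratic bound $|T'| = O(|T|^2)$, which is actually stronger than $O(|T|^{|T|+1})$; the paper's statement is deliberately a rough over-estimate, and I would simply note that $O(|T|^2) \subseteq O(|T|^{|T|+1})$ so the claimed bound holds a fortiori.

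The main subtlety — and the step I expect to need the most care — is the interaction of $(\Diamond_L')$ with nesting depth: an application of $(\Diamond_L')$ inside a modal block $[\Sigma\Rightarrow\Pi]$ adds a child $[A\Rightarrow]$ to that block, and then $(\square_L)$ may push boxed formulas into it, and a further $(\Diamond_L')$ inside it may add yet another level. So in principle the height $h(\+T_{T'})$ could grow beyond $h(\+T_T)$. Here I would invoke the bound $h(\+T_S)\le md(S)$ together with the fact that R1-rules are modal-degree non-increasing: every formula placed inside a block of depth $d$ has modal degree strictly smaller than the degree of the formula that spawned the block, exactly as in nested calculi for classical \textbf{K}. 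Thus $h(\+T_{T'}) \le md(T)$ stays fixed, the branching of $\+T_{T'}$ at each node is bounded by the number of $\Diamond$- and $\square$-subformulas, again $O(|T|)$, and therefore $Card(\+T_{T'}) = O(|T|^{md(T)}) = O(|T|^{|T|})$ in the worst case — which is where the exponent $|T|$ in the statement comes from, since $md(T)\le |T|$. Multiplying by the per-node bound $O(|T|)$ gives $O(|T|^{|T|+1})$, matching the proposition exactly. So the clean write-up is: (1) $h(\+T_{T'})\le md(T)\le |T|$ by modal-degree non-increase of R1-rules; (2) branching of $\+T_{T'}$ is $O(|T|)$; (3) hence $Card(\+T_{T'}) = O(|T|^{|T|})$; (4) each node has size $O(|T|)$ since its formula-content lies in $Sub(T)$ and sequents are sets; (5) $|T'| \le Card(\+T_{T'})\times |N^x| = O(|T|^{|T|+1})$.
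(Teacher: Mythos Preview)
Your final five-step argument is correct and is exactly the paper's proof: bound the height of $\+T_{T'}$ by $md(T)\le |T|$, bound the branching at each node by $O(|T|)$ (children are either inherited $[\cdot]$-blocks or created by $\Diamond$-subformulas), bound each node's size by $O(|T|)$ via the subformula property, and multiply. The one refinement in the paper is that it establishes the \emph{equality} $md(T')=md(T)$ by a direct rule-by-rule induction (the cumulative format means modal degree cannot drop either), which makes the height bound $h(\+T_{T'})\le md(T')=md(T)$ immediate; your phrasing ``R1-rules are modal-degree non-increasing'' is enough for the bound but slightly understates what actually holds. Your first two paragraphs contain a false start (the claim $Card(\+T_{T'})=O(|T|)$ and the resulting $O(|T|^2)$ bound are incorrect precisely for the reason you yourself identify), so in a clean write-up you should discard them and go straight to the third paragraph.
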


\hide{
\begin{proposition}\label{R1-terminates}
  Let $\+Do$ be a  derivation with a non-axiomatic leaf $S$ and let $T\in^+S$, the procedure $\textbf{EXP1}(\+D, S, T)$ terminates in a finite number of steps by producing a finite expansion  $\+D'$ of  $\+D$ in which any non-axiomatic leaf  has size $O(|S|^{|S|+1})$. 
\end{proposition}

\begin{proof}
Let  $S$ and $T$ as in the claim,  $T\in^+ S$ means that $S = G\{T\}$ for some context $G$, so that $|T|\leq |S|$. Let  $T = \Gamma \Rightarrow \Delta$,   it suffices to prove that any derivation $\+Do$ with root $T = \Gamma \Rightarrow \Delta$  generated  by the application of  R1-rules to  $\Gamma \Rightarrow \Delta^\sharp$ is finite and every leaf of it have size $O(|T|^{2|T|})$.
Notice that (i) all R1-rules are at most binary, (ii) the length of a derivation branch is bounded by the size of the maximal sequent that can occur in it because of non-redundancy restriction. Thus we need only to estimate the maximal size of any sequent that can occur in $\+Do$. To this purpose let $T'$ be any sequent in $\+Do$. We first prove that $md(T') = md(T)$. This is proved by induction on the depth of $T'$ in $\+Do$: the base is $T' = T$, whence trivial; for the inductive step let the claim holds for the conclusion of a rule $(R)$, then it holds for its premise(s), one of which is $T'$, as an example we show the case of $(\Box_L)$: let $T_1 = \Sigma, \Box A \Rightarrow \Pi, [\Phi \Rightarrow \Psi]$, so that $T' = \Sigma, \Box A \Rightarrow \Pi, [\Phi, A \Rightarrow \Psi]$. We have that $md(T_1) = max(max(md(\Sigma), md(A) +1), md(\Pi), max(md(\Phi), md(\Psi))+1) = max(md(\Sigma), md(A) +1, md(\Pi), md(\Phi) +1, md(\Psi)+1) = =  max(max(md(\Sigma), md(A) +1), md(\Pi), max(md(\Phi), md(\Psi), A)+1)= md(T')$. 

Let  $\+T_{T'}$  be the tree associate to  $T'$. By the previous claim, we have $h(\+T_{T'}) = md(T') = md(T) \leq |T|$. Moreover, each node $N$ of  $\+T_{T'}$,  is  a pair of sets formulas belonging to $Sub(T') = Sub(T)$ whence $|N| \leq 2\times |T|$. Finally each node $N$ has as children either $[\cdot]$-blocks inherited from the root $T$, whose number is $\leq |T|$, or created by subformulas $\Diamond B\in Sub(T') = Sub(T)$, whose number is again $\leq |T|$. In conclusion we have that $\+T_{T'}$ is a tree of height $md(T)$ where each node has size  $O(|T|)$ and has $O(|T|)$ children, whence $Card(\+T_{T'}) = O(|T|^{|T|})$. We can conclude that $|T'| = O(|T|) \times  O(|T|^{|T|}) = O(|T|^{|T|+1})$
}

\hide{
  Note that $S\in^+S$, it suffices to prove $\textbf{EXP1}(\+D, S, S)$ terminates with a finite derivation $\+D'$ and each non-axiomatic leaf $S'$ produced by $\textbf{EXP1}(\+D, S, S)$ is finite. 

  We observe that when the procedure $\textbf{EXP1}(\+D, S, S)$ is accomplished, a sub-derivation $\+D_0$ is attached to $\+D$ and leaves of $\+D'$ produced by $\textbf{EXP1}(\+D, S, S)$ are just leaves of the sub-derivation $\+D_0$. We claim that for any sequent $S'$ occurring in $\+D_0$, $|S'|=\+O(|S|^{|S|})$. Assume $S:=\Gamma\Rightarrow\Delta$. By definition, $\textbf{EXP1}(\+D, S, S)$ is applied to $\Gamma\Rightarrow\Delta^\sharp$ which can be seen as a tree $\+T_S$. Since applying R1 rules will not create implication blocks, $S'$ can also be regarded as a tree $\+T_{S'}$. It suffices to show that (i) the height of $\+T_{S'}$ is $\+O(|S|)$ and (ii) for an arbitrary node $N$ in $\+T_{S'}$, $N$ has $\+O(|S|)$-many children.

  For (i), by definition, we have $h(\+T_{S'})\leq md(S')$.
  It can verified rule by rule to see that $md(S')=md(S)$, which is just $\+O(|S|)$. For (ii), by definition, the number of children of $N$ is equal to the number of modal blocks directly belonging to the succedent of it. Note that modal blocks is either inherited from $S$, or produced by rule applications of $(\Diamond_L)$. Since $S$ has no more than $|S|$-many children, the number of blocks $N$ can inherit from $S$ is no more than $|S|$ as well. Also, $(\Diamond_L)$ is essentially applied to part of $S$, so the number of modal blocks of $N$ created by these rules is also bounded by $|S|$. Therefore, the number of children of $N$ is $\+O(|S|)$.

  Now we prove $\+D'$ itself is finite. It suffices to show $\+D_0$ is finite. Due to the forms of R1 rules, we can see that each node in $\+D_0$ has no more than two children. By non-redundancy of rule application, the length of each branch in $\+D_0$ is restricted by the maximal size of a sequent occurring in it. According to what we have shown above, sequents occurring in $\+D_0$ is bounded by $\+O(|S|^{|S|})$. As a result, $\+D_0$ has a finite height and is finitely-branching, which implies $\+D_0$ and hence $\+D'$is finite.
\end{proof}
}

We  present below  the  proof-search procedure  PROCEDURE($A$), that given an input formula $A$ it returns either a proof of $A$ or a finite derivation tree in which all non-axiomatic leaves are global-saturated.

\begin{algorithm}[!htb]
  \KwIn{$\+D_0:=~\Rightarrow A$}
  initialization $\+D:=\+D_0$\;
    \Repeat{FALSE}{
      \uIf{all the leaves of $\+D$ are axiomatic}{
        return ``PROVABLE" and $\+D$ }
      \uElseIf{all the non-axiomatic leaves of $\+D$ are global-saturated}{return  ``UNPROVABLE" and $\+D$} 
    \Else{{\bf for} all non-axiomatic leaves $S$ of $\+D$ that are not global-saturated\\
         { \uIf{ $S$ is global-R2-saturated}{{\bf for} all $T\in^+S$ such that $T$ is a $\in^{\langle\cdot\rangle}$-minimal and not R3-saturated, check whether $T$ is blocked in $S$, if not, let  $\+D = \textbf{EXP3}(\+D, S, T)$}
        \uElseIf{ if  $S$ is global-R1-saturated}
            {{\bf for} all  $T\in^+ S$ that is not R2-saturated, let $\+D = \textbf{EXP2}(\+D, S, T)$}
         \uElse{{\bf for} all  $T\in^+ S$ that is not R1-saturated, let $\+D = \textbf{EXP1}(\+D, S, T)$}
          }
        }
    }
    \caption{PROCEDURE($A$)}\label{PROCEDURE}
\end{algorithm}

\hide{
Here we give some remarks for blocked sequent in this algorithm. As we can see, no rules are allowed to apply to a blocked sequent specified by $\textbf{CHECK}(\cdot)$, and once a sequent is marked as blocked, it cannot be activated any more, which means it will remain blocked even though the current sequent may be transformed into another one in the following steps. Besides, since $\textbf{CHECK}(\cdot)$ is applied to global-R2-saturated sequents, the blocked part of it is already R2-saturated.

Furthermore, we propose the following two lemmas in preparation for the termination.
}

An important property of the proof-search procedure is that saturation and blocking are preserved through sequent expansion, in other words they are {\em invariant} of the repeat loop of the procedure.

\begin{lemma}[Invariant]\label{lemma:invariant}
Let  $S$ be a leaf of a derivation $\+D$ with root $\Rightarrow A$: 
\begin{enumerate}
    \item Let $T\in^+ S$, where $T = \Gamma \Rightarrow \Delta$, for every rule (R) if $T$ satisfies the R-saturation condition on  some formulas $A_i$ and/or blocks $\langle T_j\rangle, [T_k]$ {\em before} the execution of (the body of) the repeat loop (lines 3-14), then $T$ satisfies the R-condition on the involved $A_i,\langle T_j\rangle, [T_k]$ {\em after} the execution of it.
    \item Let $T\in^+ S$, if $T$ is blocked in $S$ {\em before} the execution of (the body of) the repeat loop, then it is still so {\em after} it.
\end{enumerate}
\end{lemma}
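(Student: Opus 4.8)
The plan is to analyze, by a careful case distinction on which sub-procedure is invoked in the body of the repeat loop (lines 3–14), how each rule application can possibly affect a fixed component $T \in^+ S$ and the component(s) it references in its saturation conditions. The essential observation is that all rules of $\mathbf{C}\mathbf{C}_{\logic}$ are \emph{cumulative}: principal formulas and blocks are retained in the premises, and no rule ever deletes a formula from an antecedent $\Gamma$, nor deletes a formula or a block from a succedent $\Delta$; rules only \emph{add} material. Hence, writing $T = \Gamma \Rightarrow \Delta$ before a loop iteration and $T' = \Gamma' \Rightarrow \Delta'$ for the corresponding component afterwards (where $T'$ may equal $T$ if no rule touched it), we always have $\Gamma \subseteq \Gamma'$ and, modulo the obvious structural embedding of blocks, $\Delta$ embeds into $\Delta'$ (in fact $T \subseteq^{\-S} T'$). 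I would isolate this monotonicity as a preliminary sublemma, proved by inspecting the finite list of rules in the definition of $\mathbf{C}\mathbf{C}_{\logic}$ together with $\textbf{EXP1}$, $\textbf{EXP2}$, $\textbf{EXP3}$.

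For part (1), I would go through the saturation conditions of Definition~\ref{saturation-condition} one family at a time. Each such condition has the shape: ``if [some formula/block pattern occurs in $\Gamma$ or $\Delta$] then [some witness formula/block occurs in $\Gamma$ or $\Delta$].'' Suppose $T$ satisfies the $R$-condition on the indicated $A_i, \langle T_j\rangle, [T_k]$ before the iteration. By monotonicity, the triggering pattern still occurs in $T'$ (if it occurred in $T$, it is still there, possibly embedded in a larger block), and — crucially — the \emph{witness} that $T$ provided is also still present in $T'$: an antecedent formula stays in the antecedent, a succedent formula stays in the succedent, a required block $\langle\Sigma\Rightarrow\Pi\rangle$ or $[\Lambda\Rightarrow\Theta]$ is only ever enlarged, never removed. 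The only conditions requiring slightly more care are the ``positive'' ones that demand a formula sit \emph{inside} a specific nested block — e.g. $(\square_L)$: ``if $\square A \in \Gamma$ and $[\Sigma\Rightarrow\Pi]\in\Delta$ then $A\in\Sigma$'', or (inter) with its $\subseteq^{\-S}$ clause. Here I would note that block expansions only add formulas to the relevant $\Sigma$/$\Pi$/$\Theta$, and that $\subseteq^{\-S}$ is preserved under such expansions (monotonicity of $\subseteq^{\-S}$ in both arguments), so the witness condition survives. For the conditions phrased in terms of $\Delta^\sharp$ (the R1-conditions applied to $\Gamma\Rightarrow\Delta^\sharp$ in Definition~\ref{3-saturation}), I would observe that $(\cdot)^\sharp$ commutes with the relevant expansions in the sense that it can only grow, so the argument transfers.

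For part (2), recall that $T$ being blocked in $S$ means there is $S_1 \in^+ S$ with $S_1$ R3-saturated, $T \in^{\langle\cdot\rangle} S_1$, and $S_1 \simeq T$. The procedure explicitly never applies any rule to a blocked sequent (the test ``check whether $T$ is blocked in $S$, if not \ldots'' in the $\textbf{EXP3}$ branch, and the fact that the other branches process $\in^+$-components that are precisely the non-saturated ones), so neither $T$ nor $S_1$ nor any component strictly between them on the $\in^{\langle\cdot\rangle}$-chain is modified during the iteration; hence $T \in^{\langle\cdot\rangle} S_1$ and the equality $\Gamma_{S_1} = \Gamma_T$, $\Delta_{S_1}^\sharp = \Delta_T^\sharp$ persist verbatim, and $S_1$ remains R3-saturated by part~(1). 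I would also check the one subtlety: could the \emph{witness} $S_1$ itself be the component selected for expansion? No — $S_1$ is R3-saturated, and $\textbf{EXP3}$ is invoked only on $T'\in^+S$ that are \emph{not} R3-saturated (and $\textbf{EXP1}/\textbf{EXP2}$ only in the global-R1/R2-unsaturated case, on non-R1/R2-saturated components, which $S_1$ is not); furthermore an R3-saturated sequent is not itself a candidate for having rules applied to it as $T$ in any branch. I expect the main obstacle to be precisely this bookkeeping about \emph{which} components the procedure can touch in a given iteration versus which are protected by being saturated or blocked — ensuring the monotonicity sublemma is stated strongly enough (in particular that $\subseteq^{\-S}$ and $(\cdot)^\sharp$ behave monotonically) so that every branch of the case analysis closes uniformly rather than requiring an ad hoc argument per rule.
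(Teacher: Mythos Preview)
Your overall shape for part~(1) is right for almost all rules, and the cumulative-calculus observation is indeed the engine. The gap is precisely in the cases you flag as needing ``slightly more care'', namely (trans) and (inter). Your proposed fix --- that $\subseteq^{\-S}$ is monotone in \emph{both} arguments --- is false: if $\Lambda\Rightarrow\Theta\subseteq^{\-S}\Phi\Rightarrow\Psi$ and later $\Lambda$ acquires a new formula or $\Theta$ a new $[\cdot]$-block, the inclusion can fail while $\Phi\Rightarrow\Psi$ stays put. The same issue hits (trans) directly: the condition $\Gamma\subseteq\Sigma$ is anti-monotone in $\Gamma$, so growth of $\Gamma$ (by $(\wedge_L)$, $(\vee_L)$, $(\supset_L)$, or by (trans) from a parent) would break it, and nothing in your monotonicity sublemma rules that out. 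The paper's argument is different in kind: it does not try to show the inclusion is preserved under arbitrary growth, but rather argues that the \emph{left-hand side is frozen}. Namely, the (trans)/(inter) saturation condition for $T$ can only have become true via an earlier call to \textbf{EXP2}, which is invoked only when $T$ is already R1-saturated; from that point on no R1-rule will be applied to $T$, and R2-/R3-rules do not enlarge $\Gamma$ (resp.\ do not enlarge the antecedent or $[\cdot]$-structure of $\Lambda\Rightarrow\Theta$). You need this ``left side cannot grow'' argument, not a two-sided monotonicity claim.

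For part~(2) your idea is essentially correct but the justification is incomplete. You assert that ``neither $T$ nor $S_1$ nor any component strictly between them on the $\in^{\langle\cdot\rangle}$-chain is modified'', appealing loosely to the blocking check and to saturation of $S_1$. What actually secures this is the $\in^{\langle\cdot\rangle}$-\emph{minimality} clause in the \textbf{EXP3} branch of the procedure: when $T$ is examined for blocking, every $S'$ with $T\in^{\langle\cdot\rangle}S'$ (in particular $S_1$ and all intermediate nodes) is already R3-saturated, hence none of them is a candidate for rule application in that iteration, and $T$ itself is skipped because the check succeeds. Without invoking minimality you have not explained why the intermediate $\langle\cdot\rangle$-components are untouched; once you do, the persistence of $S_1\simeq T$ and of $T\in^{\langle\cdot\rangle}S_1$ is immediate.
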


The last ingredient in order to prove termination is that in a derivation of a formula $A$ there can be only finitely many non-blocked sequents.

\begin{lemma}\label{lem:eq-class-finite}
  Given  a formula $A$,  let $\textbf{Seq}(A)$ be the set of sequents that may occur in any possible derivation with root  $\Rightarrow A$. Let   $\textbf{Seq}(A)/_{\simeq}$ be the quotient of $\textbf{Seq}(A)$ with respect to block-equivalence $\simeq$ as defined in Definition \ref{def:block-eq}. Then $\textbf{Seq}(A)/_{\simeq}$ is finite.
\end{lemma}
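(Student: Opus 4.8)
The plan is to exhibit an injection from $\textbf{Seq}(A)/_{\simeq}$ into a finite set. By Definition~\ref{def:block-eq}, the $\simeq$-class of a sequent $S=\Gamma\Rightarrow\Delta$ is uniquely determined by the pair $(\Gamma,\Delta^\sharp)$ (and conversely each such pair determines at most one class), so it is enough to bound the number of pairs $(\Gamma,\Delta^\sharp)$ that can arise from a sequent of $\textbf{Seq}(A)$.

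First I would record two structural facts about $\textbf{Seq}(A)$, each proved by a routine induction on the construction of a derivation with root $\Rightarrow A$, inspecting the rules of \textbf{C}\calculus\ read from conclusion to premises: (a) the \emph{subformula property} --- every formula occurring anywhere inside a sequent of $\textbf{Seq}(A)$ belongs to the finite set $Sub(A)$, since every formula introduced upwards by a rule is an immediate subformula of a formula already present (for (trans) and (inter) the copied formulas are literally already there); and (b) the \emph{modal-degree bound} --- $md(S)\le md(A)$ for every $S\in\textbf{Seq}(A)$, because no rule increases the modal degree from conclusion to premises. For the R1-rules, (b) is exactly the computation in the proof of Proposition~\ref{bound}; for (trans) it holds because the formulas copied into an implication block already occur in the enclosing antecedent; for $(\supset_{R_1}')$, $(\supset_{R_2}')$ and $(\square_R')$ because each newly introduced formula or block ($B$, $\langle A\Rightarrow B\rangle$, $\langle\Rightarrow[\Rightarrow A]\rangle$) has modal degree not exceeding that of $A\supset B$, resp.\ $\square A$, already present in the conclusion; and for (inter) because $md(\Theta^*)\le md(\Theta)$ and the block $[\Lambda\Rightarrow\Theta]$ being copied already occurs in the conclusion.

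Next I would unfold what $\Delta^\sharp$ looks like: by the definition of the $\sharp$-operator it is obtained from $\Delta$ by deleting implication blocks and recursively $\sharp$-reducing the succedents of the remaining modal blocks, so $\Delta^\sharp$ is precisely the data of the tree $\+T_S$ of nested modal blocks, in which every node carries an antecedent and a block-free part of its succedent, both subsets of $Sub(A)$ by (a), and whose height is $h(\+T_S)\le md(S)\le md(A)$ by (b). The combinatorial heart is then the claim, proved by induction on $h\ge 0$, that the set $\+B_h$ of all objects of the form $[\Lambda\Rightarrow\Theta]$ with $\Lambda\subseteq Sub(A)$ and $\Theta\subseteq Sub(A)\cup\+B_{h-1}$ (with $\+B_{-1}:=\emptyset$) is finite: indeed $|\+B_h|\le 2^{|Sub(A)|}\cdot 2^{|Sub(A)|+|\+B_{h-1}|}$. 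Combining this with the description of $\Delta^\sharp$ above and with (a) and (b), every $\sharp$-reduced modal block occurring in the succedent of a sequent of $\textbf{Seq}(A)$ lies in $\+B_{md(A)}$, hence $\Delta^\sharp\subseteq Sub(A)\cup\+B_{md(A)}$ while $\Gamma\subseteq Sub(A)$; therefore the number of pairs $(\Gamma,\Delta^\sharp)$, and with it $Card(\textbf{Seq}(A)/_{\simeq})$, is at most $2^{|Sub(A)|}\cdot 2^{|Sub(A)|+|\+B_{md(A)}|}$, which is finite.

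The main obstacle, and the part needing genuine care rather than routine checking, will be establishing the modal-degree bound (b) over \emph{all} of $\textbf{Seq}(A)$: one must verify that the interaction rules (trans) and (inter) and the block-creating rules $(\square_R')$ and $(\supset_{R_2}')$ never push modal nesting beyond $md(A)$, since otherwise the tree $\+T_S$, and hence $\Delta^\sharp$, could have unbounded depth and the count above would diverge. Once (a) and (b) are in place, the remainder is the standard observation that there are only finitely many finite trees of bounded height labelled from a fixed finite alphabet.
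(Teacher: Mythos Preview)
Your proof is correct and follows the same strategy as the paper's: reduce to finiteness of the set of $\sharp$-images $\{\Gamma\Rightarrow\Delta^\sharp : \Gamma\Rightarrow\Delta\in\textbf{Seq}(A)\}$, then use the subformula property together with a bound on the modal nesting depth to conclude that only finitely many such objects exist. The paper's version is terser---it simply invokes Proposition~\ref{bound} for the size bound on $\Gamma\Rightarrow\Delta^\sharp$---whereas you explicitly verify the modal-degree invariant (b) across \emph{all} rules (including (trans), (inter), $(\square_R')$ and $(\supset_{R_2}')$), a point Proposition~\ref{bound} only establishes for R1-rules; so your argument is more self-contained precisely where the paper is sketchiest.
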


Intuitively, the termination of the procedure is based on the following argument:   the procedure cannot run forever by building an infinite derivation. The reason is that the built derivation cannot  contain any infinite branch, because (i) once that a sequent satisfies a saturation condition for a rule R, further expansions of it will still satisfy that condition (whence not reconsidered for the application of R), (ii) if a sequent is blocked, further application or rules cannot "unblock" it, (iii) the number of non-equivalent, whence unblocked sequents is finite.

\begin{theorem}[Termination]\label{termination}
  Let $A$ be a formula. Proof-search for the sequent $\Rightarrow A$ terminates with a finite derivation in which any leaf is either an axiom or global-saturated.
\end{theorem}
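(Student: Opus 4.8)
The plan is to establish termination by showing that the \texttt{PROCEDURE($A$)} loop can only run finitely many times, because it cannot construct a derivation containing an infinite branch. First I would argue that the \emph{depth} of nesting of implication blocks along any branch is bounded. The only rules that create new implication blocks $\langle\cdot\rangle$ are the R3-rules $(\supset_{R_2}')$ and $(\square_R')$, and these are applied by \textbf{EXP3} only to a $\in^{\langle\cdot\rangle}$-minimal, non-R3-saturated $T\in^+ S$ that is \emph{not blocked} in $S$. So I would combine this with Lemma~\ref{lem:eq-class-finite}: along any chain $S_0\in^{\langle\cdot\rangle}_0 S_1\in^{\langle\cdot\rangle}_0 \cdots$ of implication blocks nested inside one another, the block-equivalence classes of the $S_i$'s (computed via $\sharp$) must all be distinct on any chain whose members all got R3-expanded — for if $S_i\simeq S_j$ with $S_i\in^{\langle\cdot\rangle}S_j$ and $S_j$ R3-saturated, then $S_i$ would be blocked by $S_j$ and hence never R3-expanded. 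Since $\textbf{Seq}(A)/_{\simeq}$ is finite, this bounds the implication-nesting depth by a constant $N = \mathrm{Card}(\textbf{Seq}(A)/_{\simeq})$.

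Next I would bound the total size of the sequents appearing on a branch. Between two successive \textbf{EXP3} steps the procedure runs \textbf{EXP1} and \textbf{EXP2} to global saturation; by the Proposition preceding Proposition~\ref{bound}, each \textbf{EXP1}, \textbf{EXP2}, \textbf{EXP3} call terminates and produces a finite expansion, and by Proposition~\ref{bound} the size blow-up of an \textbf{EXP1}-phase on a sequent of size $n$ is only polynomial-exponential, $O(n^{n+1})$; \textbf{EXP2} adds copies of existing blocks into $\langle\cdot\rangle$-blocks (bounded by the number of such blocks) and \textbf{EXP3} creates at most one new block per $\square A$ or $A\supset B$ in a succedent. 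Crucially, the Invariant Lemma~\ref{lemma:invariant} guarantees that once a sequent (or one of its components) satisfies a saturation condition, it keeps satisfying it, so no rule is ever re-applied redundantly to the same principal formula/block, and once a sequent is blocked it stays blocked. Consequently, along a branch, each of the $\le N$ "layers" of implication nesting can host only boundedly many R3-expansions (at most one per subformula of $A$ of the form $\square B$ or $B\supset C$, of which there are $O(|A|)$), between which only finitely-bounded R1/R2 work happens. This yields a computable bound on the size of every sequent occurring in the derivation produced by \texttt{PROCEDURE($A$)}, and a computable bound on the length of every branch.

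Then termination follows from König's Lemma: the derivation tree is finitely branching (every rule of \textbf{C}\calculus\ is at most binary) and, by the previous paragraph, has no infinite branch, hence is finite. When the procedure halts at the top of the \texttt{Repeat} loop, it does so either because all leaves are axioms — returning a proof of $\Rightarrow A$ — or because all non-axiomatic leaves are global-saturated, i.e.\ every $T\in^+ S$ is R3-saturated or blocked, which is exactly the condition "any leaf is either an axiom or global-saturated" asserted in the theorem.

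I expect the main obstacle to be the bookkeeping in the second paragraph: carefully tracking how \textbf{EXP1}, \textbf{EXP2}, \textbf{EXP3} interact so that the bound on sequent size is genuinely finite and does not grow unboundedly as one alternates the three expansion phases across nesting layers. In particular, one must check that an \textbf{EXP2}-phase cannot re-trigger unboundedly many new R1-saturation obligations, and that the block-equivalence classes controlling the implication-nesting depth are stable under the R1/R2 work that happens after an R3-step — this is where the Invariant Lemma and the fact that $\sharp$ ignores implication blocks (so R1/R2 expansion inside already-expanded $\langle\cdot\rangle$-blocks does not change $\sharp$-images relevant to blocking) do the real work. The argument that the $\in^{\langle\cdot\rangle}$-chain length is bounded by $\mathrm{Card}(\textbf{Seq}(A)/_{\simeq})$, combined with $\in^{[\cdot]}$-depth being bounded by $md(A)$ via the tree $\+T_S$, should then give a global finiteness bound.
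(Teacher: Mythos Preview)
Your proposal is correct and follows essentially the same strategy as the paper: both rely on the Invariant Lemma~\ref{lemma:invariant} to ensure that saturation conditions and blocking persist along a branch, on Lemma~\ref{lem:eq-class-finite} to cap the number of non-$\simeq$-equivalent sequents (you phrase this as a bound $N=\mathrm{Card}(\textbf{Seq}(A)/_{\simeq})$ on the $\in^{\langle\cdot\rangle}$-nesting depth, the paper phrases it as a contradiction from infinitely many \textbf{EXP3}-phases on unblocked targets), and on finite branching plus K\"onig's Lemma to conclude. Your write-up is more explicit than the paper's sketch---you actually track size bounds layer by layer and flag the bookkeeping issues around \textbf{EXP2} re-triggering R1-obligations---but the underlying mechanism is identical.
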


Next, we prove the completeness of \textbf{C}\calculus. We show that  given a finite  global-saturated leaf $S$ of the derivation $\+D$  produced by PROCEDURE($A$), then we can define a countermodel $\+M_S$ for $A$ as follows:

\begin{definition}
  The model $\+M_S=(W_S,\leq_S,R_S,V_S)$ determined by $S$ is defined as follows:
  \begin{itemize}
    \item $W_S=\{x_{\Phi\Rightarrow\Psi}~|~\Phi\Rightarrow\Psi\in^+S\}$.
    \item the relation $\leq_S$, for $x_{S_1},x_{S_2}\in W_S$ is defined by $x_{S_1}\leq_S x_{S_2}$ if $S_1\subseteq^\-S S_2$.
    \item The accessibility relation $R_S$, for $x_{S_1},x_{S_2}\in W_S$, is defined by $R_S x_{S_1}x_{S_2}$ if $S_2\in^{[\cdot]}_0S_1$.
    \item For the valuation $V_S$, for each $x_{\Phi\Rightarrow\Psi}\in W_S$, let $V_S(x_{\Phi\Rightarrow\Psi})=\{p~|~p\in \Phi\}$.
  \end{itemize}
\end{definition}
Obviously $\+M_S$  is finite; each world in $W_S$ corresponds to either a R3-saturated or a blocked sequent, that is nonetheless saturated with respect to (inter) and (trans).  Moreover, 
if  $x_{\Gamma\Rightarrow\Delta',\langle\Sigma\Rightarrow\Pi\rangle} \in W_S$ then $x_{\Sigma\Rightarrow\Pi}\in W_S$, and   $x_{\Gamma\Rightarrow\Delta',\langle\Sigma\Rightarrow\Pi\rangle} \leq_S x_{\Sigma\Rightarrow\Pi}$.  By the property of  structural inclusion $\subseteq^\-S$, we have that $\leq_S$  is a pre-order.

\begin{proposition}\label{hp-fc-property}
   $\+M_S$ satisfies the hereditary property (HP) and forward confluence (FC).
\end{proposition}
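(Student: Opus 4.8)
The plan is to read off both properties directly from the definitions of $\leq_S$, $R_S$, $V_S$, using the two clauses in the definition of structural inclusion (Definition~\ref{def:structural-inclusion}); no induction or extra lemma is needed, since $\subseteq^\-S$ was designed exactly so that the matching of modal blocks demanded by (FC) is built in.

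\emph{Hereditary property.} Suppose $x_{S_1}\leq_S x_{S_2}$, which by definition means $S_1\subseteq^\-S S_2$. Writing $S_i=\Gamma_i\Rightarrow\Delta_i$, the first clause of $\subseteq^\-S$ (equivalently, the remark following Definition~\ref{def:structural-inclusion}) yields $\Gamma_1\subseteq\Gamma_2$, hence $V_S(x_{S_1})=\{p\mid p\in\Gamma_1\}\subseteq\{p\mid p\in\Gamma_2\}=V_S(x_{S_2})$. Together with the fact, already recorded before the statement, that $\leq_S$ is a pre-order, this establishes (HP).

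\emph{Forward confluence.} Suppose $x_{S_1},x_{S_2}\in W_S$ and there is $x_{S_0}\in W_S$ with $x_{S_0}\leq_S x_{S_1}$ and $R_S x_{S_0}x_{S_2}$; by definition this means $S_0\subseteq^\-S S_1$ and $S_2\in^{[\cdot]}_0 S_0$. Writing $S_0=\Lambda_0\Rightarrow\Theta_0$ and $S_1=\Lambda_1\Rightarrow\Theta_1$, the latter says $[\,S_2\,]\in\Theta_0$. Applying the second clause of $\subseteq^\-S$ to this particular modal block, there is some $[\,S_3\,]\in\Theta_1$ with $S_2\subseteq^\-S S_3$. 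Then $S_3\in^{[\cdot]}_0 S_1$, so $R_S x_{S_1}x_{S_3}$, and $S_2\subseteq^\-S S_3$ gives $x_{S_2}\leq_S x_{S_3}$; finally $x_{S_3}\in W_S$, since $x_{S_1}\in W_S$ gives $S_1\in^+ S$ and $S_3\in^{[\cdot]}_0 S_1$ gives $S_3\in^+ S$ by transitivity of $\in^+$. Hence $t:=x_{S_3}$ witnesses (FC).

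I do not expect a genuine obstacle here: the argument is a routine unfolding of the definitions, and the whole content lies in having set up $\subseteq^\-S$ so that structural inclusion of the lower world in the upper world transports $R$-successors in the way (FC) requires. The only points deserving an explicit word are the closure of $W_S$ under passing to the matched modal block (which is just transitivity of $\in^+$) and the appeal to $\leq_S$ being a pre-order, both of which are immediate from what precedes the proposition.
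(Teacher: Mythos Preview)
Your proof is correct and follows essentially the same approach as the paper's own argument: both unfold the definitions of $\leq_S$, $R_S$, and $V_S$, using the first clause of $\subseteq^\-S$ for (HP) and the second clause (matching of modal blocks) for (FC), together with transitivity of $\in^+$ to ensure the witness lies in $W_S$. Apart from variable naming, the two proofs are the same.
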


\begin{lemma}[Truth Lemma]\label{truth-lemma-formula}
  Let $S$ be a global-saturated sequent and $\+M_{S}$ be defined as above. (a). If $A\in\Phi$, then $\+M_{S},x_{\Phi\Rightarrow\Psi}\Vdash A$; (b). If $A\in\Psi$, then $M_{S},x_{\Phi\Rightarrow\Psi}\nVdash A$.
\end{lemma}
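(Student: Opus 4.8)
The plan is to prove both parts (a) and (b) simultaneously by induction on the size $|A|$ of the formula $A$, since the cases for $\supset$, $\square$ and $\Diamond$ cross between antecedent and succedent. Throughout I fix a global-saturated $S$ and write $\+M$ for $\+M_S$, and $\leq, R$ for $\leq_S, R_S$. The key preliminary observation, to be stated first, is that every world $x_{\Phi\Rightarrow\Psi} \in W_S$ corresponds to a sequent that is \emph{R1-saturated}: indeed each such sequent is either R3-saturated (hence R1-saturated) or blocked, and a blocked sequent $S_2$ is by definition block-equivalent to an R3-saturated sequent $S_1$ with the same antecedent and same $\sharp$-image, so $S_2$ satisfies the R1-saturation conditions because they are phrased in terms of $\Gamma\Rightarrow\Delta^\sharp$; similarly every world-sequent is saturated w.r.t.\ (trans) and (inter). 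I will also record that $R$-successors of $x_{\Phi\Rightarrow\Psi}$ are exactly the $x_{\Lambda\Rightarrow\Theta}$ with $[\Lambda\Rightarrow\Theta]\in\Psi$, and that $\leq$-successors include all $x_{\Sigma\Rightarrow\Pi}$ with $\langle\Sigma\Rightarrow\Pi\rangle\in\Psi$ (using Proposition~\ref{sequent-inclusion}).

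\textbf{The routine cases.} The atomic case is by definition of $V_S$ together with the (id)- and $(\bot_L)$/$(\top_R)$-saturation conditions (which prevent $p$ from being in both $\Phi$ and $\Psi$, and $\bot\notin\Phi$, $\top\notin\Psi$). The cases $\wedge, \vee$ are immediate from the corresponding saturation conditions and the induction hypothesis, in the standard intuitionistic-style way. For $A = B\supset C$ in part (a): given $x_{\Phi\Rightarrow\Psi}\leq x_{\Sigma'\Rightarrow\Pi'}$ with $x_{\Sigma'\Rightarrow\Pi'}\Vdash B$, since $B\supset C \in \Phi \subseteq \Sigma'$ and $\Sigma'\Rightarrow\Pi'$ is R1-saturated, the $(\supset_L)$-condition gives $B\in\Pi'$ or $C\in\Sigma'$; the first contradicts $x_{\Sigma'\Rightarrow\Pi'}\Vdash B$ by induction hypothesis (b), so $C\in\Sigma'$, and induction hypothesis (a) yields $x_{\Sigma'\Rightarrow\Pi'}\Vdash C$. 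For $B\supset C$ in part (b): by the $(\supset_R)$-condition either $B\in\Phi, C\in\Psi$ — then $x_{\Phi\Rightarrow\Psi}$ itself witnesses the failure using the induction hypothesis — or there is $\langle\Sigma\Rightarrow\Pi\rangle\in\Psi$ with $B\in\Sigma$, $C\in\Pi$, and then $x_{\Sigma\Rightarrow\Pi}$ is a $\leq$-successor forcing $B$ but not $C$.

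\textbf{The modal cases — the main obstacle.} For $A = \square B$ in part (a): suppose $x_{\Phi\Rightarrow\Psi}\leq x_{\Sigma\Rightarrow\Pi}$ and $R\,x_{\Sigma\Rightarrow\Pi}\,x_{\Lambda\Rightarrow\Theta}$, so $[\Lambda\Rightarrow\Theta]\in\Pi$; I need $B\in\Lambda$. Since $\square B\in\Phi\subseteq\Sigma$ and $\Sigma\Rightarrow\Pi$ is R1-saturated, the $(\square_L)$-condition applied to the block $[\Lambda\Rightarrow\Theta]\in\Pi$ gives $B\in\Lambda$, and induction hypothesis (a) finishes. The delicate point here is ensuring that $[\Lambda\Rightarrow\Theta]$ really occurs \emph{directly} in $\Pi$ (not merely nested deeper), which is exactly how $R$ is defined via $\in^{[\cdot]}_0$. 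For $\square B$ in part (b): the $(\square_R)$-condition gives either a block $[\Lambda\Rightarrow\Theta]\in\Psi$ with $B\in\Theta$ — then $x_{\Lambda\Rightarrow\Theta}$ is an $R$-successor of $x_{\Phi\Rightarrow\Psi}$ with $x_{\Lambda\Rightarrow\Theta}\nVdash B$ — or a block $\langle\Sigma\Rightarrow\Pi,[\Lambda\Rightarrow\Theta]\rangle \in \Psi$ with $B\in\Theta$, in which case $x_{\Sigma\Rightarrow\Pi,[\Lambda\Rightarrow\Theta]}$ is a $\leq$-successor that has the $R$-successor $x_{\Lambda\Rightarrow\Theta}$ refuting $B$. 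For $A = \Diamond B$: in part (a), the $(\Diamond_L)$-condition gives $[\Sigma\Rightarrow\Pi]\in\Psi$ with $B\in\Sigma$, and $x_{\Sigma\Rightarrow\Pi}$ is the required $R$-successor forcing $B$ by induction hypothesis (a); in part (b), I must show $x_{\Phi\Rightarrow\Psi}\nVdash\Diamond B$, i.e.\ no $R$-successor forces $B$ — an $R$-successor is some $x_{\Lambda\Rightarrow\Theta}$ with $[\Lambda\Rightarrow\Theta]\in\Psi$, and the $(\Diamond_R)$-condition forces $B\in\Theta$, so induction hypothesis (b) gives $x_{\Lambda\Rightarrow\Theta}\nVdash B$. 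I expect the genuinely subtle bookkeeping to be (i) in the $\square B$ part (a) argument, propagating the $(\square_L)$-condition through a $\leq$-step — one must use that $\Phi \subseteq \Sigma$ follows from structural inclusion, and that $\Sigma\Rightarrow\Pi$ is itself R1-saturated, which is where the preliminary observation about blocked sequents being R1-saturated is essential — and (ii) making sure in the $(\square_R)$/$(\supset_R)$ part (b) cases that the witnessing nested blocks actually give rise to worlds in $W_S$ (they do, since $\in^+$ is reflexive-transitive and the blocks occur inside $S$).
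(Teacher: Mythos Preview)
Your argument has a genuine gap in the part~(b) cases for $B\supset C$ and $\square B$. You invoke the $(\supset_R)$- and $(\square_R)$-saturation conditions as if every world-sequent $\Phi\Rightarrow\Psi$ satisfies them, but these are R3 rules. Your own preliminary observation only establishes that each world-sequent (including blocked ones) is R1- and R2-saturated; blocked sequents are \emph{not} in general R3-saturated. So if $\Phi\Rightarrow\Psi$ is blocked and $B\supset C\in\Psi$, there is no guarantee that either $B\in\Phi,\ C\in\Psi$, or that some $\langle\Sigma\Rightarrow\Pi\rangle\in\Psi$ has $B\in\Sigma,\ C\in\Pi$; your case split is therefore not exhaustive. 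The same issue arises verbatim for $\square B\in\Psi$.

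The paper closes this gap by an explicit case distinction on whether $\Phi\Rightarrow\Psi$ is blocked. If it is blocked, one passes to the blocking sequent $\Sigma\Rightarrow\Pi$: by definition of blocking this sequent is R3-saturated and $\Sigma\Rightarrow\Pi\simeq\Phi\Rightarrow\Psi$, hence $\Pi^\sharp=\Psi^\sharp$, so $B\supset C\in\Pi$ (respectively $\square B\in\Pi$). Block-equivalence also gives $\Phi\Rightarrow\Psi\subseteq^{\mathbf S}\Sigma\Rightarrow\Pi$, whence $x_{\Phi\Rightarrow\Psi}\leq x_{\Sigma\Rightarrow\Pi}$. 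One then applies the R3-saturation condition \emph{at} $\Sigma\Rightarrow\Pi$ to obtain a witnessing implication block $\langle\Lambda\Rightarrow\Theta\rangle\in\Pi$, and transitivity of $\leq$ yields $x_{\Phi\Rightarrow\Psi}\leq x_{\Lambda\Rightarrow\Theta}$, from which the refutation of $B\supset C$ (respectively $\square B$) at $x_{\Phi\Rightarrow\Psi}$ follows exactly as in your unblocked case. Apart from this missing redirection through the blocking sequent, the remainder of your proof matches the paper's.
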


From the truth lemma we immediately obtain the completeness of \textbf{C}\calculus.  

\begin{theorem}\label{completeness-cfik}
  For any formula $A\in \+L$, if $\Vdash A$, then $\Rightarrow A$ is provable in {\rm \textbf{C}\calculus}.
\end{theorem}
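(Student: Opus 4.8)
We argue by contraposition: assuming $\Rightarrow A$ is \emph{not} provable in \textbf{C}\calculus, the plan is to extract from a failed proof search a bi-relational model refuting $A$, thereby showing $\not\Vdash A$. First I would feed $\Rightarrow A$ to PROCEDURE($A$) and invoke the Termination theorem (Theorem~\ref{termination}): it yields a finite derivation $\+D$ with root $\Rightarrow A$ in which every leaf is either an axiom or global-saturated; moreover, since the initial sequent is finite and each $\textbf{EXP}$-step preserves finiteness, every sequent occurring in $\+D$ is finite. As $\Rightarrow A$ is by assumption not provable, not all leaves of $\+D$ can be axioms, so I may fix a non-axiomatic — hence global-saturated — leaf $S = \Gamma_S \Rightarrow \Delta_S$, a finite sequent.

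Next I would pass to the model $\+M_S = (W_S, \leq_S, R_S, V_S)$ determined by $S$. The facts to record are that $\leq_S$ is a pre-order (because structural inclusion $\subseteq^\-S$ is reflexive and transitive) and that, by Proposition~\ref{hp-fc-property}, $\+M_S$ satisfies the hereditary property and forward confluence (FC). Hence $\+M_S$ is a genuine (and finite) bi-relational model in the sense of the semantics, so anything it falsifies at some world witnesses non-validity.

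It remains to pinpoint a world of $\+M_S$ that refutes $A$. Since the root of $\+D$ is $\Rightarrow A$ and every rule of \textbf{C}\calculus\ is cumulative — it keeps its principal formula in all premises and in particular never removes a formula from the outermost succedent along a branch — the formula $A$ still belongs to $\Delta_S$. Because $\in^+$ is reflexive, $S \in^+ S$, so the world $x_S := x_{\Gamma_S\Rightarrow\Delta_S}$ lies in $W_S$. Applying the Truth Lemma (Lemma~\ref{truth-lemma-formula}, clause (b)) to $A \in \Delta_S$ gives $\+M_S, x_S \nVdash A$. Therefore $\not\Vdash A$, which is the contrapositive of the statement; note moreover that $\+M_S$ is finite, so the argument simultaneously yields the finite model property.

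I expect no deep obstacle here: the heavy lifting has already been carried out in the Termination theorem, in Proposition~\ref{hp-fc-property}, and above all in the Truth Lemma. The only points needing fresh (but routine) attention in this theorem are the bookkeeping observation that $A$ survives in the outermost succedent of the saturated leaf $S$ — immediate from cumulativity of the rules — and the trivial yet essential remark that $S$ itself, via $S \in^+ S$, is one of the worlds of the constructed model, so that the Truth Lemma can be applied at $x_S$.
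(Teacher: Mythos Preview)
Your proposal is correct and follows essentially the same contrapositive route as the paper: run PROCEDURE($A$), pick a global-saturated non-axiomatic leaf $S$, build $\+M_S$, and apply the Truth Lemma at $x_S$ using that $A$ persists in the outermost succedent by cumulativity. You have simply spelled out a few bookkeeping points (finiteness, $S\in^+ S$, survival of $A$) that the paper leaves implicit.
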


\begin{example}
	We show how to build a countermodel of the formula $(\Diamond p \supset \Box q) \supset \Box (p \supset q)$ by \textbf{C}\calculus \ (because of space limit, we omit the steps of the derivation). Ignoring the first step, a  derivation is initialized with $\Diamond p \supset \Box q \Rightarrow \Box (p \supset q)$. By  backward application of rules, one branch of the derivation ends up with the the saturated sequent $S_0$ : 
	$$S_0 = \ \Diamond p \supset \Box q \Rightarrow \Diamond p,  \Box (p \supset q), \langle \Diamond p \supset \Box q \Rightarrow \Diamond p,  [\Rightarrow p \supset q, \langle p \Rightarrow q\rangle, p]\rangle \quad \mbox{and let:}$$
	\begin{quote}$S_1 = \Diamond p \supset \Box q \Rightarrow \Diamond p,  [\Rightarrow p \supset q, \langle p \Rightarrow q \rangle, p]$ \quad
		$S_2 = \Rightarrow p \supset q, \langle p \Rightarrow q\rangle, p$ \quad 
		$S_3 = p \Rightarrow q$
	\end{quote}
	We then get the model $M_{S_0} = (W, \leq, R, V)$ where $W=\{x_{S_0}, x_{S_1}, x_{S_2}, x_{S_3} \}$ $x_{S_0} \leq x_{S_1}$, $x_{S_2} \leq x_{S_0}$, $x_{S_2} \leq x_{S_3}$, $R x_{S_1} x_{S_2}$, and $V(x_{S_0}) = V(x_{S_1}) = V(x_{S_2}) =\emptyset$ and 
$V(x_{S_3}) = \{p\}$. It is easy to see that $x_{S_0} \not\Vdash (\Diamond p \supset \Box q) \supset \Box (p \supset q)$.	
\end{example}

\begin{example}\label{exampleBoxIk}
This example shows that the $\Diamond$-free fragment of  \textbf{FIK} is weaker than the same fragment of \textbf{IK}. Let us consider the  formula
$\neg\neg \Box \neg p\supset \Box\neg p$ presented in \cite{anupam:2023},  which is provable in \textbf{IK}. On the other hand if we build a derivation with root $\Rightarrow ((\Box (p \supset \bot)\supset \bot)\supset\bot) \supset \Box (p \supset \bot)$, we generate the  saturated sequent $S_0 = F \Rightarrow \Box (p \supset \bot), G, \langle S_1\rangle, \langle S_6\rangle$, where $F= (\Box (p \supset \bot)\supset \bot)\supset\bot$ and $G=  \Box (p \supset \bot)\supset \bot$, and  \\
$S_1 =  F \Rightarrow G, [\Rightarrow \langle p \Rightarrow \bot\rangle ], \langle S_4\rangle$, \quad $S_4 = F, \Box (p \supset \bot) \Rightarrow \bot, G, [p \supset\bot \Rightarrow p]$, \\
$S_6 =  F, \Box (p \supset \bot) \Rightarrow \bot, G$. \\
Further let $S_2 = \Rightarrow \langle p \Rightarrow \bot\rangle$, $S_3 = p \Rightarrow \bot$, $S_5 =  p \supset\bot \Rightarrow p$.

We get the model $M_{S_0} = (W, \leq, R, V)$ where $W=\{x_{S_0},\ldots,x_{S_6} \}$, $x_{S_0} \leq x_{S_1}, x_{S_0} \leq x_{S_6},  x_{S_1} \leq x_{S_4},  x_{S_6} \leq x_{S_4}$, $x_{S_2} \leq x_{S_3}$,  $x_{S_2} \leq x_{S_5}$  $x_{S_2} \leq x_{S_0}$ , $R x_{S_1} x_{S_2}$,  $R x_{S_4} x_{S_5}$,  $V(x_{S_i})  =\emptyset$ for $i\not = 3$ and	$V(x_{S_3}) = \{p\}$. It is easy to see that $x_{S_0} \not\Vdash \Box (p \supset \bot)$, as  $x_{S_0} \leq x_{S_1} R  x_{S_3}$ and $ x_{S_3} \Vdash p$; moreover $x_{S_0} \Vdash F$ since $x_{S_5} \Vdash p\supset \bot$, whence $x_{S_4} \Vdash \Box (p \supset \bot)$ and  $\forall y \geq x_{S_0}. y \leq x_{S_4}$. Observe that $M$ satisfies (FC), the only worlds which are concerned are  $x_{S_1}, x_{S_2}, x_{S_4}, x_{S_5}$.
\end{example}

\section{Conclusion and future work}
We have proposed \logic,  a natural variant of Intuitionistic modal logic characterized by forward confluent bi-relational models.
\logic~is intermediate between Constructive Modal logic \textbf{CK} and Intuitionistic Modal Logic \textbf{IK} and it satisfies all the expected criteria for \textbf{IML}.
We have presented a sound and complete axiomatization of it and  a bi-nested calculus \calculus~which provides a decision procedure together with a finite countermodel extraction. 

There are many topics for further research. First we may study extensions of \logic~with the standard axioms from the modal cube. Moreover we can consider other bi-relational frame conditions relating the pre-order and the accessible  (including the one for \textbf{IK}) and see how they can be captured uniformly in Bi-nested calculi with suitable "interaction rules". 

\section*{Acknowledgement}

This paper is originated from a discussion started by Anupam Das and Sonia Marin in the proof theory blog (see the link \url{https://prooftheory.blog/2022/08/19/}), we are grateful to them, as well as to all other contributors to the discussion. In particular Example \ref{examplefik} was reported in the  blog by Alex Simpson, who had learnt it  in 1996 by Carsten Grefe in private communication. Example \ref{exampleBoxIk} was suggested first by Anupam Das and Sonia Marin in the blog. Special thanks to Marianna Girlando for fruitful discussions.

\bibliography{}
\bibliographystyle{plainurl}

\appendix

\section*{Appendix}

This Appendix includes the proofs of some of our results. 
Some of these proofs are relatively simple and we have included them here just for the sake of the completeness.

~

{\noindent \sf \textbf{Lemma~\ref{fc:canonical:frame:is:forward:confluent}.}} 
$(W_{c},\leq_{c},R_{c},V_{c})$ satisfies the frame condition {\bf (FC)}.

\begin{proof}
Let $\Gamma,\Delta,\Lambda\in W_{c}$ be such that $\Gamma\geq_{c}\Delta$ and $\Delta R_{c}\Lambda$. Hence, $\Gamma\supseteq\Delta$ and $\Delta\bowtie\Lambda$. Let $A_{1},A_{2},\ldots$ be an enumeration of $\square\Gamma$ and $B_{1},B_{2},\ldots$ be an enumeration of $\Lambda$. Obviously, for all $n\in\N$, $\square(A_{1}\wedge\ldots\wedge A_{n})\in\Gamma$ and $B_{1}\wedge\ldots\wedge B_{n}\in\Lambda$. Since $\Delta\bowtie\Lambda$, for all $n\in\N$, $\Diamond(B_{1}\wedge\ldots\wedge B_{n})\in\Delta$. For all $n\in\N$, let $\Theta_{n}=$\Derivable$+A_{1}\wedge\ldots\wedge A_{n}\wedge B_{1}\wedge\ldots\wedge B_{n}$. Obviously, $(\Theta_{n})_{n\in\N}$ is a chain of theories such that $\bigcup\{\Theta_{n}:\ n\in\N\}\supseteq\Lambda$.

We claim that for all formulas $C$, if $\square C\in\Gamma$ then $C\in\bigcup\{\Theta_{n}:\ n\in\N\}$.
If not, there exists a formula $C$ such that $\square C\in\Gamma$ and $C\not\in\bigcup\{\Theta_{n}:\ n\in\N\}$.
Thus, $C\in\square\Gamma$.
Consequently, let $n\in\N$ be such that $A_{n}=C$.
Hence, $A_{1}\wedge\ldots\wedge A_{n}\wedge B_{1}\wedge\ldots\wedge B_{n}\rightarrow C$ is in \Derivable.
Thus, $C\in\Theta_{n}$.
Consequently, $C\in\bigcup\{\Theta_{n}:\ n\in\N\}$: a contradiction.
Hence, for all formulas $C$, if $\square C\in\Gamma$ then $C\in\bigcup\{\Theta_{n}:\ n\in\N\}$.

We claim that for all formulas $C$, if $C\in\bigcup\{\Theta_{n}:\ n\in\N\}$ then $\Diamond C\in\Gamma$.
If not, there exists $n\in\N$ and there exists a formula $C$ such that $C\in\Theta_{n}$ and $\Diamond C\not\in\Gamma$.
Thus, $A_{1}\wedge\ldots\wedge A_{n}\wedge B_{1}\wedge\ldots\wedge B_{n}\rightarrow C$ is in \Derivable.
Consequently, $B_{1}\wedge\ldots\wedge B_{n}\rightarrow(A_{1}\wedge\ldots\wedge A_{n}\rightarrow C)$ is in \Derivable.
Hence, $\Diamond(B_{1}\wedge\ldots\wedge B_{n})\supset\Diamond(A_{1}\wedge\ldots\wedge A_{n}\supset C)$ is in \Derivable.
Since $\Diamond(B_{1}\wedge\ldots\wedge B_{n})\in\Delta$, $\Diamond(A_{1}\wedge\ldots\wedge A_{n}\supset C)\in\Delta$.
Since $\Gamma\supseteq\Delta$, $\Diamond(A_{1}\wedge\ldots\wedge A_{n}\supset C)\in\Gamma$.
Thus, $\square(A_{1}\wedge\ldots\wedge A_{n})\supset\Diamond C\in\Gamma$.
Since $\square(A_{1}\wedge\ldots\wedge A_{n})\in\Gamma$, $\Diamond C\in\Gamma$: a contradiction.
Consequently, for all formulas $C$, if $C\in\bigcup\{\Theta_{n}:\ n\in\N\}$ then $\Diamond C\in\Gamma$.

Let ${\mathcal S}=\{\Theta:\ \Theta$ is a theory such that {\bf (1)}~$\Gamma\bowtie\Theta$ and {\bf (2)}~$\Theta\supseteq\Lambda\}$. Obviously, $\bigcup\{\Theta_{n}:\ n\in\N\}\in{\mathcal S}$.
Hence, ${\mathcal S}$ is nonempty.
Moreover, for all nonempty chains $(\Pi_{i})_{i\in I}$ of elements of ${\mathcal S}$, $\bigcup\{\Pi_{i}:\ i\in I\}$ is an element of ${\mathcal S}$.
Thus, by Zorn's Lemma, ${\mathcal S}$ possesses a maximal element $\Theta$.
Consequently, $\Theta$ is a theory such that $\Gamma\bowtie\Theta$ and $\Theta\supseteq\Lambda$. 
Hence, it only remains to be proved that $\Theta$ is proper and prime.

We claim that $\Theta$ is proper.
If not, $\bot\in\Theta$.
Since $\Gamma\bowtie\Theta$, $\Diamond\bot\in\Gamma$: a contradiction.
Thus, $\Theta$ is proper.

We claim that $\Theta$ is prime.
If not, there exists formulas $C,D$ such that $C\vee D\in\Theta$, $C\not\in\Theta$ and $D\not\in\Theta$.
Consequently, by the maximality of $\Theta$ in ${\mathcal S}$, $\Theta+C\not\in{\mathcal S}$ and $\Theta+D\not\in{\mathcal S}$.
Hence, there exists a formula $E$ such that $E\in\Theta+C$ and $\Diamond E\not\in\Gamma$ and there exists a formula $F$ such that $F\in\Theta+D$ and $\Diamond F\not\in\Gamma$.
Thus, $C\supset E\in\Theta$ and $D\supset F\in\Theta$.
Consequently, $C\vee D\supset E\vee F\in\Theta$.
Since $C\vee D\in\Theta$, $E\vee F\in\Theta$.
Since $\Gamma\bowtie\Theta$, $\Diamond(E\vee F)\in\Gamma$.
Hence, either $\Diamond E\in\Gamma$, or $\Diamond F\in\Gamma$: a contradiction.
Thus, $\Theta$ is prime.
\end{proof}

{\noindent \sf \textbf{Lemma~\ref{lemma:prime:proper:for:implication}.}} 
Let $\Gamma$ be a prime theory. Let $B,C$ be formulas.
\begin{enumerate}
  \item If $B\supset C\not\in\Gamma$ then there exists a prime theory $\Delta$ such that $\Gamma\subseteq\Delta$, $B\in\Delta$ and $C\not\in\Delta$,
  \item if $\square B\not\in\Gamma$ then there exists prime theories $\Delta,\Lambda$ such that $\Gamma\subseteq\Delta$, $\Delta\bowtie\Lambda$ and $B\not\in\Lambda$,
  \item if $\Diamond B\in\Gamma$ then there exists a prime theory $\Delta$ such that $\Gamma\bowtie\Delta$ and $B\in\Delta$.
\end{enumerate}

\begin{proof}
{\bf Case of $\supset$.}
Suppose $B\supset C\not\in\Gamma$.
Let ${\mathcal S}=\{\Delta:\ \Delta$ is a theory such that {\bf (1)}~$\Gamma\subseteq\Delta$, {\bf (2)}~$B\in\Delta$ and {\bf (3)}~$C\not\in\Delta\}$.

Since $B\supset C\not\in\Gamma$, $C\not\in\Gamma+B$.
Hence, $\Gamma+B\in{\mathcal S}$.
Thus, ${\mathcal S}$ is nonempty.
Moreover, for all nonempty chains $(\Delta_{i})_{i\in I}$ of elements of ${\mathcal S}$, $\bigcup\{\Delta_{i}:\ i\in I\}$ is an element of ${\mathcal S}$.
Consequently, by Zorn's Lemma, ${\mathcal S}$ possesses a maximal element $\Delta$.
Hence, $\Delta$ is a theory such that $\Gamma\subseteq\Delta$, $B\in\Delta$ and $C\not\in\Delta$.

Thus, it only remains to be proved that $\Delta$ is proper and prime.

We claim that $\Delta$ is proper.
If not, $\Delta=\+L$.
Consequently, $C\in\Delta$: a contradiction.
Hence, $\Delta$ is proper.

We claim that $\Delta$ is prime.
If not, there exists formulas $D,E$ such that $D\vee E\in\Delta$, $D\not\in\Delta$ and $E\not\in\Delta$.
Thus, by the maximality of $\Delta$ in ${\mathcal S}$, $\Delta+D\not\in{\mathcal S}$ and $\Delta+E\not\in{\mathcal S}$.
Consequently, $C\in\Delta+D$ and $C\in\Delta+E$.
Hence, $D\supset C\in\Delta$ and $E\supset C\in\Delta$.
Thus, $D\vee E\supset C\in\Delta$.
Since $D\vee E\in\Delta$, $C\in\Delta$: a contradiction.
Consequently, $\Delta$ is prime.

{\bf Case of $\square$.}
Suppose $\square B\not\in\Gamma$.
Let ${\mathcal S}=\{\Delta:\ \Delta$ is a theory such that {\bf (1)}~$\Gamma\subseteq\Delta$ and {\bf (2)}~$\square B\not\in\Delta\}$.

Since $\square B\not\in\Gamma$, $\Gamma\in{\mathcal S}$.
Hence, ${\mathcal S}$ is nonempty.
Moreover, for all nonempty chains $(\Delta_{i})_{i\in I}$ of elements of ${\mathcal S}$, $\bigcup\{\Delta_{i}:\ i\in I\}$ is an element of ${\mathcal S}$.
Thus, by Zorn's Lemma, ${\mathcal S}$ possesses a maximal element $\Delta$.
Consequently, $\Delta$ is a theory such that $\Gamma\subseteq\Delta$ and $\square B\not\in\Delta$.

We claim that $\Delta$ is proper.
If not, $\Delta=\+L$.
Hence, $\square B\in\Delta$: a contradiction.
Thus, $\Delta$ is proper.

We claim that $\Delta$ is prime.
If not, there exists formulas $C,D$ such that $C\vee D\in\Delta$, $C\not\in\Delta$ and $D\not\in\Delta$.
Consequently, by the maximality of $\Delta$ in ${\mathcal S}$, $\Delta+C\not\in{\mathcal S}$ and $\Delta+D\not\in{\mathcal S}$.
Hence, $\square B\in\Delta+C$ and $\square B\in\Delta+D$.
Thus, $C\supset\square B\in\Delta$ and $D\supset\square B\in\Delta$.
Consequently, $C\vee D\supset\square B\in\Delta$.
Since $C\vee D\in\Delta$, $\square B\in\Delta$: a contradiction.
Hence, $\Delta$ is prime.

We claim that for all formulas $C$, if $C\vee B\in\square\Delta$ then $\Diamond C\in\Delta$.
If not, there exists a formula $C$ such that $C\vee B\in\square\Delta$ and $\Diamond C\not\in\Delta$.
Thus, by the maximality of $\Delta$ in ${\mathcal S}$, $\Delta+\Diamond C\not\in{\mathcal S}$.
Consequently, $\square B\in\Delta+\Diamond C$.
Hence, $\Diamond C\supset\square B\in\Delta$.
Since $C\vee B\in\square\Delta$, $\square(C\vee B)\in\Delta$.
Since $\Diamond C\supset\square B\in\Delta$, $\square B\in\Delta$: a contradiction.
Thus, for all formulas $C$, if $C\vee B\in\square\Delta$ then $\Diamond C\in\Delta$.

Let ${\mathcal T}=\{\Lambda:\ \Lambda$ is a theory such that {\bf (1)}~$\square\Delta\subseteq\Lambda$, {\bf (2)}~for all formulas $C$, if $C\vee B\in\Lambda$ then $\Diamond C\in\Delta$ and {\bf (3)}~$B\not\in\Lambda\}$.

Since $\square B\not\in\Delta$, $B\not\in\square\Delta$.
Consequently, $\square\Delta\in{\mathcal T}$.
Hence, ${\mathcal T}$ is nonempty.
Moreover, for all nonempty chains $(\Lambda_{i})_{i\in I}$ of elements of ${\mathcal T}$, $\bigcup\{\Lambda_{i}:\ i\in I\}$ is an element of ${\mathcal T}$.
Thus, by Zorn's Lemma, ${\mathcal T}$ possesses a maximal element $\Lambda$.
Consequently, $\Lambda$ is a theory such that $\square\Delta\subseteq\Lambda$, for all formulas $C$, if $C\vee B\in\Lambda$ then $\Diamond C\in\Delta$ and $B\not\in\Lambda$.

Hence, it only remains to be proved that $\Lambda$ is proper and prime and $\Delta\bowtie\Lambda$.

We claim that $\Lambda$ is proper. If not, $\Lambda=\+L$. Thus, $B\in\Lambda$: a contradiction. Consequently, $\Lambda$ is proper.

We claim that $\Lambda$ is prime.
If not, there exists formulas $C,D$ such that $C\vee D\in\Lambda$, $C\not\in\Lambda$ and $D\not\in\Lambda$.
Hence, by the maximality of $\Lambda$ in ${\mathcal T}$, $\Lambda+C\not\in{\mathcal T}$ and $\Lambda+D\not\in{\mathcal T}$.
Thus, either there exists a formula $E$ such that $E\vee B\in\Lambda+C$ and $\Diamond E\not\in\Delta$, or $B\in\Lambda+C$ and either there exists a formula $F$ such that $F\vee E\in\Lambda+D$ and $\Diamond F\not\in\Delta$, or $B\in\Lambda+D$.
Consequently, we have to consider the following 4 cases.

$\mathbf{(1)}$ Case ``there exists a formula $E$ such that $E\vee B\in\Lambda+C$ and $\Diamond E\not\in\Delta$ and there exists a formula $F$ such that $F\vee B\in\Lambda+D$ and $\Diamond F\not\in\Delta$'':
Hence, $C\supset E\vee B\in\Lambda$ and $D\supset F\vee B\in\Lambda$.
Thus, $C\vee D\supset E\vee F\vee B\in\Lambda$.
Since $C\vee D\in\Lambda$, $E\vee F\vee B\in\Lambda$.
Consequently, $\Diamond(E\vee F)\in\Delta$.
Hence, either $\Diamond E\in\Delta$, or $\Diamond F\in\Delta$: a contradiction.

$\mathbf{(2)}$ Case ``there exists a formula $E$ such that $E\vee F\in\Lambda+C$ and $\Diamond E\not\in\Delta$ and $B\in\Lambda+D$'':
Thus, $C\supset E\vee B\in\Lambda$ and $D\supset B\in\Lambda$.
Consequently, $C\vee D\supset E\vee B\in\Lambda$.
Since $C\vee D\in\Lambda$, $E\vee B\in\Lambda$.
Hence, $\Diamond E\in\Delta$: a contradiction.

$\mathbf{(3)}$ Case ``$B\in\Lambda+C$ and there exists a formula $F$ such that $F\vee B\in\Lambda+D$ and $\Diamond F\not\in\Delta$'':
Thus, $C\supset B\in\Lambda$ and $D\supset F\vee B\in\Lambda$.
Consequently, $C\vee D\supset F\vee B\in\Lambda$.
Since $C\vee D\in\Lambda$, $F\vee B\in\Lambda$.
Hence, $\Diamond F\in\Delta$: a contradiction.

$\mathbf{(4)}$ Case ``$B\in\Lambda+C$ and $B\in\Lambda+D$'':
Thus, $C\supset B\in\Lambda$ and $D\supset B\in\Lambda$.
Consequently, $C\vee D\supset B\in\Lambda$.
Since $C\vee D\in\Lambda$, $B\in\Lambda$: a contradiction. 

Hence, $\Lambda$ is prime.

We claim that $\Delta\bowtie\Lambda$.
If not, there exists a formula $C$ such that $C\in\Lambda$ and $\Diamond C\not\in\Delta$.
Thus, $C\vee B\in\Lambda$.
Consequently $\Diamond C\in\Delta$: a contradiction.
Hence, $\Delta\bowtie\Lambda$.

{\bf Case of $\Diamond$.} 
Suppose $\Diamond B\in\Gamma$. Let ${\mathcal S}=\{\Delta:\ \Delta$ is a theory such that {\bf (1)}~for all formulas $C$, if $C\in\Delta$ then $\Diamond C\in\Gamma$ and {\bf (2)}~$B\in\Delta\}$.

We claim that $\square\Gamma+B\in{\mathcal S}$.
If not, there exists a formula $C$ such that $C\in\square\Gamma+B$ and $\Diamond C\not\in\Gamma$.
Hence, $B\supset C\in\square\Gamma$.
Thus, $\square(B\supset C)\in\Gamma$.
Consequently, $\Diamond B\supset\Diamond C\in\Gamma$.
Since $\Diamond B\in\Gamma$, $\Diamond C\in\Gamma$: a contradiction.
Hence, $\square\Gamma+B\in{\mathcal S}$.
Thus, ${\mathcal S}$ is nonempty.
Moreover, for all nonempty chains $(\Delta_{i})_{i\in I}$ of elements of ${\mathcal S}$, $\bigcup\{\Delta_{i}:\ i\in I\}$ is an element of ${\mathcal S}$.
Consequently, by Zorn's Lemma, ${\mathcal S}$ possesses a maximal element $\Delta$.
Hence, $\Delta$ is a theory such that for all formulas $C$, if $C\in\Delta$ then $\Diamond C\in\Gamma$ and $B\in\Delta$.

Thus, it only remains to be proved that $\Delta$ is proper and prime and $\Gamma\bowtie\Delta$.

We claim that $\Delta$ is proper.
If not, $\bot\in\Delta$.
Consequently, $\Diamond\bot\in\Gamma$: a contradiction.
Hence, $\Delta$ is proper.

We claim that $\Delta$ is prime.
If not, there exists formulas $C,D$ such that $C\vee D\in\Delta$, $C\not\in\Delta$ and $D\not\in\Delta$.
Thus, by the maximality of $\Delta$ in ${\mathcal S}$, $\Delta+C\not\in{\mathcal S}$ and $\Delta+D\not\in{\mathcal S}$.
Consequently, there exists a formula $E$ such that $E\in\Delta+C$ and $\Diamond E\not\in\Gamma$ and there exists a formula $F$ such that $F\in\Delta+D$ and $\Diamond F\not\in\Gamma$.
Hence, $C\supset E\in\Delta$ and $D\supset F\in\Delta$.
Thus, $C\vee D\supset E\vee F\in\Delta$.
Since $C\vee D\in\Delta$, $E\vee F\in\Delta$.
Consequently, $\Diamond(E\vee F)\in\Gamma$.
Hence, either $\Diamond E\in\Gamma$, or $\Diamond F\in\Gamma$: a contradiction.
Thus, $\Delta$ is prime.

We claim that $\Gamma\bowtie\Delta$.
If not, there exists a formula $C$ such that $\square C\in\Gamma$ and $C\not\in\Delta$.
Consequently, by the maximality of $\Delta$ in ${\mathcal S}$, $\Delta+C\not\in{\mathcal S}$.
Hence, there exists a formula $D$ such that $D\in\Delta+C$ and $\Diamond D\not\in\Gamma$.
Thus, $C\supset D\in\Delta$.
Consequently, $\Diamond(C\supset D)\in\Gamma$.
Since $\square C\in\Gamma$, $\Diamond D\in\Gamma$: a contradiction.
Hence, $\Gamma\bowtie\Delta$.
\end{proof}

{\noindent \sf \textbf{Lemma~\ref{lemma:truth:lemma}.}} 
For all formulas $A$ and for all $\Gamma\in W_{c}$, $A\in\Gamma$ if and only if $\Gamma\models A$.

\begin{proof}
By induction on $A$. We only consider the following 3 cases.

$\mathbf{(1)}$ Case ``there exists formulas $B,C$ such that $A=B\supset C$'':
Let $\Gamma\in W_{c}$.
From left to right, suppose $B\supset C\in\Gamma$ and $\Gamma\not\models B\supset C$.
Hence, there exists $\Delta\in W_{c}$ such that $\Gamma\leq_{c}\Delta$, $\Delta\models B$ and $\Delta\not\models C$.
Thus, $\Gamma\subseteq\Delta$.
Moreover, by induction hypothesis, $B\in\Delta$ and $C\not\in\Delta$.
Since $B\supset C\in\Gamma$, $B\supset C\in\Delta$.
Since $B\in\Delta$, $C\in\Delta$: a contradiction.
From right to left, suppose $\Gamma\models B\supset C$ and $B\supset C\not\in\Gamma$.
Consequently, by Lemma~\ref{lemma:prime:proper:for:implication}, there exists a prime theory $\Delta$ such that $\Gamma\subseteq\Delta$, $B\in\Delta$ and $C\not\in\Delta$.
Hence, $\Gamma\leq_{c}\Delta$.
Moreover, by induction hypothesis, $\Delta\models B$ and $\Delta\not\models C$.
Thus, $\Gamma\not\models B\supset C$: a contradiction.

$\mathbf{(2)}$ Case ``there exists a formula $B$ such that $A=\square B$'':
Let $\Gamma\in W_{c}$.
From left to right, suppose $\square B\in\Gamma$ and $\Gamma\not\models\square B$.
Thus, there exists $\Delta,\Lambda\in W_{c}$ such that $\Gamma\leq_{c}\Delta$, $\Delta R_{c}\Lambda$ and $\Lambda\not\models B$.
Consequently, $\Gamma\subseteq\Delta$ and $\Delta\bowtie\Lambda$.
Moreover, by induction hypothesis, $B\not\in\Lambda$.
Since $\square B\in\Gamma$, $B\in\Lambda$: a contradiction.
From right to left, suppose $\Gamma\models\square B$ and $\square B\not\in\Gamma$.
Hence, by Lemma~\ref{lemma:prime:proper:for:implication}, there exists prime theories $\Delta,\Lambda$ such that $\Gamma\subseteq\Delta$, $\Delta\bowtie\Lambda$ and $B\not\in\Lambda$.
Thus, $\Gamma\leq_{c}\Delta$ and $\Delta R_{c}\Lambda$.
Moreover, by induction hypothesis, $\Lambda\not\models B$.
Consequently, $\Gamma\not\models\square B$: a contradiction.

$\mathbf{(3)}$ Case ``there exists a formula $B$ such that $A=\Diamond B$'':
Let $\Gamma\in W_{c}$.
From left to right, suppose $\Diamond B\in\Gamma$ and $\Gamma\not\models\Diamond B$.
Consequently, by Lemma~\ref{lemma:prime:proper:for:implication}, there exists a prime theory $\Delta$ such that $\Gamma\bowtie\Lambda$ and $B\in\Delta$.
Hence, $\Gamma R_{c}\Delta$.
Moreover, by induction hypothesis, $\Delta\models B$.
Thus, $\Gamma\models\Diamond B$: a contradiction.
From right to left, suppose $\Gamma\models\Diamond B$ and $\Diamond B\not\in\Gamma$.
Consequently, there exists $\Delta\in W_{c}$ such that $\Gamma R_{c}\Delta$ and $\Delta\models B$.
Hence, $\Gamma\bowtie\Delta$.
Moreover, by induction hypothesis, $B\in\Delta$.
Since $\Diamond B\not\in\Gamma$, $B\not\in\Delta$: a contradiction.
\end{proof}

{\noindent \sf \textbf{Lemma~\ref{lemma:about:excluded:middle:consequences}.}} 
$\Diamond p\equiv\neg\square\neg p$ is in \Derivable$^{+}$.

\begin{proof}
$(1)$~Obviously, $\neg p\supset(p\supset\bot)$ is in \Derivable$^{+}$.
Hence, using $(\NEC)$ and $(\K_{\square})$, $\square\neg p\supset\square(p\supset\bot)$ is in \Derivable$^{+}$.
Thus, $\Diamond p\supset\neg\square\neg p\vee\square(p\supset\bot)$ is in \Derivable$^{+}$.
Consequently, using $(\K_{\Diamond})$, $\Diamond p\supset\neg\square\neg p\vee\Diamond\bot$ is in \Derivable$^{+}$.
Since $\neg\Diamond\bot$ is in \Derivable$^{+}$, $\Diamond p\supset\neg\square\neg p$ is in \Derivable$^{+}$.

$(2)$~Obviously, $p\vee\neg p$ is in \Derivable$^{+}$.
Hence, using $(\NEC)$, $\square(p\vee\neg p)$ is in \Derivable$^{+}$.
Since using $(\wCD)$, $\square(p\vee\neg p)\supset((\Diamond p\supset\square\neg p)\supset\square\neg p)$ is in \Derivable$^{+}$, $(\Diamond p\supset\square\neg p)\supset\square\neg p$ is in \Derivable$^{+}$.
Thus, $\neg\square\neg p\supset\Diamond p$ is in \Derivable$^{+}$.
\end{proof}

{\noindent \sf \textbf{Lemma~\ref{lemma:about:feature:5}.}} 
Let $p$ be an atomic proposition. There exists no $\square$-free $A$ such that $\square p\equiv A$ is in \Derivable\ and there exists no $\Diamond$-free $A$ such that $\Diamond p\equiv A$ is in \Derivable.

\begin{proof}
$(1)$~For the sake of the contradiction, suppose there exists a $\square$-free formula $A$ such that $\square p\equiv A$ is in \Derivable.
Without loss of generality, we may assume that $p$ is the only atomic proposition that may occur in $A$.
Since $\square p\equiv A$ is in \Derivable, by Theorem~\ref{theorem:soundness:dfik}, $\Vdash\square p\equiv A$.
Let $(W,\leq,R,V)$ be the bi-relational model defined by $W=\{a,b,c,d\}$, $a\leq c$, $b\leq d$, $aRb$, $aRd$, $cRd$ and $V(p)=\{d\}$.
By induction on the $\square$-free formula $B$, the reader may easily verify that $\+M,a\Vdash B$ if and only if $\+M,c\Vdash B$.
Since $A$ is $\square$-free, $\+M,a\Vdash A$ if and only if $\+M,c\Vdash A$.
Since $\Vdash\square p\equiv A$, $\+M,a\Vdash\square p$ if and only if $\+M,c\Vdash\square p$.
This contradicts the facts that $\+M,a\not\Vdash\square p$ and $\+M,c\Vdash\square p$.

$(2)$~For the sake of the contradiction, suppose there exists a $\Diamond$-free formula $A$ such that $\Diamond p\equiv A$ is in \Derivable.
Without loss of generality, we may assume that $p$ is the only atomic proposition that may occur in $A$.
Since $\Diamond p\equiv A$ is in \Derivable, by Theorem~\ref{theorem:soundness:dfik}, $\Vdash\Diamond p\equiv A$.
Let $(W,\leq,R,V)$ be the bi-relational model defined by $W=\{a,b,c,d\}$, $a\leq c$, $b\leq d$, $aRb$, $cRb$, $cRd$ and $V(p)=\{d\}$.
By induction on the $\Diamond$-free formula $B$, the reader may easily verify that $\+M,a\Vdash B$ if and only if $\+M,c\Vdash B$.
Since $A$ is $\Diamond$-free, $\+M,a\Vdash A$ if and only if $\+M,c\Vdash A$.
Since $\Vdash\Diamond p\equiv A$, $\+M,a\Vdash\Diamond p$ if and only if $\+M,c\Vdash\Diamond p$.
This contradicts the facts that $\+M,a\not\Vdash\Diamond p$ and $\+M,c\Vdash\Diamond p$.
\end{proof}

{\noindent \sf \textbf{Lemma~\ref{lemma-sec3-dp}.}} 
Suppose that a sequent $S =~\Rightarrow A_1, \ldots, A_m, \langle G_1\rangle, \ldots, \langle G_n\rangle$ is provable in \calculus, where $A_i$'s are formulas. Then either for some $A_i$, $\Rightarrow A_i$ is provable in \calculus~or for some $G_j$, $\Rightarrow \langle G_j\rangle$ is provable in \calculus.  

\begin{proof}
  By induction on the height of a proof of $S$. If $S$ is an axiom, then some $\Rightarrow\langle G_j\rangle$ must be an axiom. 
  Otherwise $S$ it is obtained by applying a rule to some $A_i$ or to some $\langle G_j\rangle$. In the first case, suppose that $S$ is derived by applying a rule to $A_1$ (to simplifying indexing). We only illustrate two cases: let $A_1 = B\land C$, then we have
  $$\frac{\Rightarrow B, A_2, \ldots, A_m, \langle G_1\rangle, \ldots, \langle G_n\rangle \quad \Rightarrow C, A_2, \ldots, A_m, \langle G_1\rangle, \ldots, \langle G_n\rangle}{\Rightarrow B\land C, A_2, \ldots, A_m, \langle G_1\rangle, \ldots, \langle G_n\rangle}$$
  By induction hypothesis on the first premise either form some $A_i$ ($i=2,\ldots, m$) $\Rightarrow A_i$ is derivable or  some $\langle G_j\rangle$ is derivable and we are done: otherwise  $\Rightarrow B$ must be derivable; in this case by induction hypothesis on the second premise  $\Rightarrow C$ must be derivable; then we conclude by an application of $(\land_R)$. Suppose that $A_1 = \Box B$ and is derived by 
  $$\frac{\Rightarrow A_2, \ldots, A_m, \langle \Rightarrow [\Rightarrow B]\rangle,\langle G_1\rangle, \ldots, \langle G_n\rangle}{\Rightarrow \Box B, A_2, \ldots, A_m, \langle G_1\rangle, \ldots, \langle G_n\rangle}$$
  By induction hypothesis, as before either form some $A_i$ ($i=2,\ldots, m$), $\Rightarrow A_i$ is derivable or some $\Rightarrow\langle G_j\rangle$ is derivable and we are done; otherwise $\Rightarrow \langle \Rightarrow [\Rightarrow B]\rangle$ and by an application of $(\Box_R)$ we conclude.
  If $S$ is derived by applying a rule to some $\Rightarrow\langle G_j\rangle$ the reasoning is the same.
\end{proof}

{\noindent \sf \textbf{Proposition~\ref{prop:dp}.}} 
For any formulas $A,B$, if $\Rightarrow A \lor B$ is provable in \calculus, then either $\Rightarrow A$ or $\Rightarrow  B$ is provable.

\begin{proof}
  Let $\Rightarrow A \lor B$ be provable in \calculus. Then it must be derived by $(\lor_R)$ from $\Rightarrow A, B$ and then we apply the previous lemma.  
\end{proof}

{\noindent \sf \textbf{Lemma~\ref{lemma:star-operator}.}} 
Let $\mathcal{M}=(W,\leq,R,V)$ be a bi-relational model and $x, x'\in W$ with $x\leq x'$. Let $S = \Gamma \Rightarrow \Delta$ be any sequent, if $x \not\Vdash \Delta$ then $x' \not\Vdash \Delta^*$.

\begin{proof}
By induction on the structure of $\Delta^*$. If $\Delta^* = \emptyset$ it follows by definition. Otherwise $\Delta^*= [\Phi_1\Rightarrow\Psi_1^*], \ldots, [\Phi_k\Rightarrow\Psi_k^*]$ where $\Delta=\Delta_0,[\Phi_1\Rightarrow\Psi_1], \ldots, [\Phi_k\Rightarrow\Psi_k]$ and $\Delta_0$ is $[\cdot]$-free.
By hypothesis $x\not\Vdash \Delta$, thus $x\not\Vdash [\Phi_i\Rightarrow\Psi_i]$ for $i = 1,\ldots, k$. Therefore there are $y_1, \ldots, y_k$ with $Rx y_i$ for $i = 1,\ldots, k$ such that $y_i\not\Vdash \Phi_i\Rightarrow\Psi_i$. This means that (a) $y_i \Vdash C$ for every $C\in \Phi_i$ and (b) $y_i\not\Vdash \Psi_i$. By (FC) property there are $y'_1, \ldots, y'_k$ such that $Rx' y'_i$ and $y'_i \geq y_i$ for $i = 1,\ldots, k$. By (a)  it follows that (c) $y'_i \Vdash C$ for every $C\in \Phi_i$; moreover by induction hypothesis it follows that (d) $y'_i\not\Vdash \Psi_i^*$. Thus from (c) and (d) we have $y'_i \not \Vdash \Phi_i\Rightarrow\Psi_i^*$, whence $x'\not\Vdash [\Phi_i\Rightarrow\Psi_i^*]$ for for $i = 1,\ldots, k$, which means that $x'\not\Vdash \Delta^*$.
\end{proof}

{\noindent \sf \textbf{Lemma~\ref{forcing-preserving}.}} 
Given a model $\mathcal{M} = (W,  \leq, R, V)$ and $x\in W$, for any rule ($r$) of the form $\frac{G\{S_1\} \quad G\{S_2\}}{G\{S\}}$ or $\frac{G\{S_1\}}{G\{S\}}$, if $x\Vdash G\{S_i\}$, then $x\Vdash G\{S\}$. 
  
\begin{proof}
  We  proceed by induction on the structure of the context $G\{ \ \}$.
  \begin{itemize}
  \item (base of the induction)  $G\{  \ \} = \emptyset$. We check rule by rule. 
  As an example, we consider $(\Box_R)$ and (inter) rules, the other cases are similar or simpler and are left to the reader. For $(\Box_R)$, suppose by absurdity that 
  $x\Vdash \Gamma \Rightarrow \Delta, \langle \seq [\seq B]\rangle$ but $x\not\Vdash \Gamma \Rightarrow \Delta, \Box B$. It follows that: 
  $x\Vdash A$ for every $A\in \Gamma$, $x\not\Vdash \Delta$, (i) $x\not\Vdash \Box B$, (ii) $x\Vdash \langle \seq [\seq B]\rangle$. From (i) it follows that there is $x_1 \geq x$ and $y_1$, with $R x_1 y_1$ such that $y_1\not\Vdash B$. From (ii) it follows that for all $x'\geq x$ and for all $y$ with $Rx' y$, it holds $y\Vdash B$, thus taking $x' = x_1$ and $y = y_1$ we have a contradiction. 
  
  For (inter)  suppose by absurdity that 
  $x\Vdash \Gamma \Rightarrow \Delta, \langle \Sigma \seq \Pi, [\Lambda \seq \Theta^*]\rangle, [\Lambda \seq \Theta] $ but $x\not\Vdash \Gamma \Rightarrow \Delta, \langle \Sigma \seq \Pi\rangle, [\Lambda \seq \Theta]$. It follows that (i) $x\not\Vdash \langle \Sigma \seq \Pi\rangle$, (ii) $x\not\Vdash [\Lambda \seq \Theta]$, but (iii) $x\Vdash \langle \Sigma \seq \Pi, [\Lambda \seq \Theta^*]\rangle$. By (i) there is $x_1\geq x$, such that $x_1\not\Vdash  \Sigma \seq \Pi$, by (ii) there is $y$ with $Rx y$ such that (iv) $y\not\Vdash \Lambda \seq \Theta$. By (FC) condition, there is $y_1$ such that $R x_1 y_1$ and $y_1 \geq y$.
  By (iii), it follows $x_1 \Vdash \Sigma \seq \Pi, [\Lambda \seq \Theta^*]$ whence (v) $y_1 \Vdash \Lambda \seq \Theta^*$. By (iv) we have that $y \Vdash B$ for every $B\in \Lambda$ and $y\not \Vdash \Theta$. Since $y_1 \geq y$, we have that also $y_1 \Vdash B$ for every $B\in \Lambda$, so that by (v) it must be $y_1\Vdash \Theta^*$. Thus we have $y_1 \geq y$, $y\not \Vdash \Theta$, and $y_1\Vdash \Theta^*$, by the previous lemma we have a contradiction. 
  
  \item (inductive step) Let $G\{ \ \} = \Gamma\Rightarrow\Delta, \langle G'\{\}\rangle$. Let us consider for instance a rule $\frac{G\{S_1\}   \quad G\{S_2\}}{G\{S\}}$. Suppose that $x\Vdash G\{S_1\}$ and $x\Vdash G\{S_2\}$. This means that $x\Vdash \Gamma\Rightarrow\Delta, \langle G'\{S_1\}\rangle$ and $x\Vdash \Gamma\Rightarrow\Delta, \langle G'\{S_2\}\rangle$. We prove that $x\Vdash \Gamma\Rightarrow\Delta, \langle G'\{S\}\rangle$. If $x\not\Vdash B$ for some $B\in \Gamma$, or $x\Vdash {\cal O}$ for some ${\cal O}\in \Delta$ we are done. Otherwise, it must be $x\Vdash \langle G'\{S_1\}\rangle$ and 
  $x\Vdash \langle G'\{S_2\}\rangle$. From this it follows that for all $x'\geq x$, we have $x'\Vdash G'\{S_1\}$ and $x'\Vdash G'\{S_2\}$, by induction hypothesis we get $x'\Vdash G'\{S\}$ and the conclusion follows. 
  
  The case $G\{ \ \} = \Gamma\Rightarrow\Delta, [G'\{\}]$ is similar.
  \end{itemize}
\end{proof}

{\noindent \sf \textbf{Proposition~\ref{sequent-inclusion}.}} 
Let $\Gamma\Rightarrow\Delta$ be a sequent saturated with respect to both (trans) and (inter). If $\Delta$ is of form $\Delta',\langle\Sigma\Rightarrow\Pi\rangle$, then $\Gamma\Rightarrow\Delta\subseteq^\-S\Sigma\Rightarrow\Pi$.

\begin{proof}
  We show this by induction on the structure of $\Delta'$. 
\begin{description}
    \item[Base case] Assume $\Delta'$ is $[\cdot]$-free, then according to Definition \ref{def:structural-inclusion}, it suffices to check $\Gamma\subseteq\Sigma$. Since $\Delta',\langle\Sigma\Rightarrow\Pi\rangle$ is saturated, by the saturation condition associated with (trans), we see that $\Gamma\subseteq\Sigma$.
    \item[Inductive step] Assume $\Delta'$ contains $[\cdot]$ blocks, take an arbitrary $[\Phi\Rightarrow\Psi]$ from it. Then $\Delta$ can be written explicitly as $\Delta'',\langle\Sigma\Rightarrow\Pi\rangle,[\Phi\Rightarrow\Psi]$. By the saturation condition associated with (inter), there is a modal block occurring in $\Pi$ of form $\Omega\Rightarrow\Xi$ s.t. $\Phi\Rightarrow\Psi\subseteq^\-S\Omega\Rightarrow\Xi$. $\Sigma\Rightarrow\Pi$ can be written explicitly as $\Sigma\Rightarrow\Pi',[\Omega\Rightarrow\Xi]$, and further $\Gamma\Rightarrow\Delta$ is $\Gamma\Rightarrow\Delta'',\langle\Sigma\Rightarrow\Pi',[\Omega\Rightarrow\Xi]\rangle,[\Phi\Rightarrow\Psi]$. 
    
    Recall the whole sequent $\Gamma\Rightarrow\Delta$ is saturated with both (trans) and (inter), so is $\Gamma\Rightarrow\Delta'',\langle\Sigma\Rightarrow\Pi',[\Omega\Rightarrow\Xi]\rangle$. By IH, we see that $\Gamma\Rightarrow\Delta'',\langle\Sigma\Rightarrow\Pi',[\Omega\Rightarrow\Xi]\rangle\subseteq^\-S\Sigma\Rightarrow\Pi',[\Omega\Rightarrow\Xi]$. Since $[\Phi\Rightarrow\Psi]$ is arbitrary, by Definition \ref{def:structural-inclusion}, we see that $\Gamma\Rightarrow\Delta'',\langle\Sigma\Rightarrow\Pi',[\Omega\Rightarrow\Xi]\rangle,[\Phi\Rightarrow\Psi]\subseteq^\-S \Sigma\Rightarrow\Pi',[\Omega\Rightarrow\Xi]$ as well.
  \end{description}
  As a result, we conclude $\Gamma\Rightarrow\Delta\subseteq^\-S\Sigma\Rightarrow\Pi$.
\end{proof}

{\noindent \sf \textbf{Proposition~\ref{bound}.}} 
Let $\+Do$ be a derivation with root a non-axiomatic sequent $T = \Gamma \Rightarrow \Delta$ obtained by applying R1-rules to  $\Gamma \Rightarrow \Delta^\sharp$, then any $T'$ occurring in $\+Do$  has size $O(|T|^{|T|+1})$. 

\begin{proof}
Let  $T'$ be any sequent occurring in $\+Do$ We first prove that $md(T') = md(T)$. This is proved by induction on the depth of $T'$ in $\+Do$: the base is $T' = T$, whence trivial; for the inductive step let the claim holds for the conclusion of a rule $(R)$, we prove that  it holds for its premise(s), one of which is $T'$. As an example we show the case of $(\Box_L)$. Let $T_1 = \Sigma, \Box A \Rightarrow \Pi, [\Phi \Rightarrow \Psi]$, so that $T' = \Sigma, \Box A \Rightarrow \Pi, [\Phi, A \Rightarrow \Psi]$. 
We have that $md(T_1) = \max(\max(md(\Sigma), md(A) +1), md(\Pi), \max(md(\Phi), md(\Psi))+1) = \max(md(\Sigma), md(A) +1, md(\Pi), md(\Phi) +1, md(\Psi)+1) =  \max(\max(md(\Sigma), md(A) +1), md(\Pi), \max(md(\Phi), md(\Psi), A)+1)= md(T')$. The other cases are similar. 

Let  $\+T_{T'}$ be the tree associated to  $T'$. By the previous claim, we have $h(\+T_{T'}) = md(T') = md(T) \leq |T|$. Moreover, each node $N$ of  $\+T_{T'}$,  is  a pair of sets formulas belonging to $Sub(T') \subseteq Sub(T)$ whence $|N| \leq 2\times |T|$. Finally each node $N$ has as children either $[\cdot]$-blocks inherited from the root $T$, whose number is $\leq |T|$, or created by subformulas $\Diamond B\in Sub(T') \subseteq Sub(T)$, whence their number is again $\leq |T|$. In conclusion we have that $\+T_{T'}$ is a tree of height $md(T) = O(|T|)$ where each node has size  $O(|T|)$ and has $O(|T|)$ children, whence $Card(\+T_{T'}) = O(|T|^{|T|})$ so that $|T'| = O(|T|) \times  O(|T|^{|T|}) = O(|T|^{|T|+1})$
\end{proof}

{\noindent \sf \textbf{Lemma~\ref{lemma:invariant}.}} 
Given a sequent $S$ occurring as a leaf of a derivation   $\+D$ with root $\Rightarrow A$: 
\begin{enumerate}
    \item Let $T\in^+ S$, where $T = \Gamma \Rightarrow \Delta$, for every rule (R) if $T$ satisfies the R-saturation condition on  some formulas $A_i$ and/or blocks $\langle T_j\rangle, [T_k]$ {\em before} the execution of (the body of) the repeat loop (lines 3-14), then $T$ satisfies the R-condition on the involved $A_i,\langle T_j\rangle, [T_k]$ {\em after} the execution of it.
    \item Let $T\in^+ S$, if $T$ is blocked in $S$ {\em before} the execution of (the body of) the repeat loop, then it is still so {\em after} it.
\end{enumerate}

\begin{proof}
Concerning 1. it is obvious for all rules (R) except for (trans) and (inter) as the calculus is cumulative. Concerning (trans): suppose $T = \Gamma \Rightarrow \Delta'\langle \Sigma \Rightarrow \Pi\rangle$ and $\Gamma \subseteq \Sigma$  {\em before} the execution of repeat loop, we can suppose that $T$ satisfies this condition because of a previous execution of the repeat loop of Procedure (as the root of $\+D$ does not satisfies it): namely by \textbf{EXP2} executed in line 12. Thus $T$ is already R1-saturated, and this implies that $\Gamma$ cannot be expanded anymore, no matter which rules are applied to $\Sigma \Rightarrow \Pi$, whence the inclusion $\Gamma \subseteq \Sigma$ will always hold, in particular after the execution of the  repeat loop. The reasoning for (inter)-rule is similar: the inclusion  $\Lambda\Rightarrow\Theta\subseteq^\-S\Phi\Rightarrow\Psi$ involved in the saturation condition will be preserved for the same reason (in particular because $\Lambda\Rightarrow\Theta$ is R1-saturated).

Concerning 2.  the procedure checks whether $T$ is blocked in $S$ at line 10, this means that (i) $S$ is  already global R2 saturated (whence also $T$), (ii)  $T$ is blocked in $S$ by some $S_1\in^+S$ in $S$,    (iii) because of  $\in^{\langle\cdot\rangle}$-minimality,  for all $S'\in^+ S$, such that $T\in^{\langle\cdot\rangle} S'$, we have that $S'$ is R3-saturated, thus no rule can further modify neither $S'$, nor $T$ (nor $S_1$) during the execution  of (the body of) the procedure. Thus $T$ will be still  blocked in $S$ after it. 
\end{proof}

{\noindent \sf \textbf{Lemma~\ref{lem:eq-class-finite}.}} 
Given a formula $A$,  let $\textbf{Seq}(A)$ be the set of sequents that may occur in any possible derivation with root  $\Rightarrow A$. Let $\textbf{Seq}(A)/_{\simeq}$ be quotient of $\textbf{Seq}(A)$ with respect to block-equivalence $\simeq$ as defined in Definition \ref{def:block-eq}. Then $\textbf{Seq}(A)/_{\simeq}$ is finite.

\begin{proof}
  First observe that block-equivalence $\simeq$ is defined by means of  the $\sharp$-images of two sequents, thus it suffices to show that the set $\Phi_A:=\{\Gamma\Rightarrow\Delta^\sharp~|~\Gamma\Rightarrow\Delta\in\textbf{Seq}(A)\}$ is finite. By proposition \ref{bound} we know that every sequent $\Gamma\Rightarrow\Delta^\sharp\in \Phi_A$ has a bounded size, (namely $O(|A|^{|A|+1})$). Moreover observe that $Sub(\Gamma\Rightarrow\Delta^\sharp) \subseteq Sub(A)$. Thus there may be only finitely-many distinct $\Gamma\Rightarrow\Delta^\sharp$, that is  $\Phi_A$ is finite.
\end{proof}

{\noindent \sf \textbf{Theorem~\ref{termination}.}} 
Let $A$ be a formula. Proof-search for the sequent $\Rightarrow A$ terminates with a finite derivation in which any leaf is either an axiom or global-saturated.

\begin{proof}
  (Sketch) We prove that PROCEDURE($A$) terminates producing a finite derivation, in this case all leaves are  axioms or global-saturated. A non-axiomatic leaf $S$ is necessarily global-saturated, otherwise $S$ would be further expanded in Step 8 of PROCEDURE($A$) and it would not be a leaf.  Thus it suffices to  prove that the procedure produces a finite  derivation. Let  $\+D$ built by PROCEDURE($A$).  First we claim that all branches of  $\+D$ are finite.   Suppose for the sake of a contradiction that $\+D$ contains an infinite branch ${\cal B} = S_0,\ldots, S_i, \ldots $ , with $S_0 = \Rightarrow A$. The branch is generated by  applying repeatedly $\textbf{EXP1}(\cdot), \textbf{EXP2}(\cdot)$ and $\textbf{EXP3}(\cdot)$ to each   $S_i$ (or more precisely to some $T_i\in^+ S_i$) . Since each one of these sub-procedures terminates,  the three of them must infinitely alternate on the branch. By (invariant) Lemma, if $T_i\in^+ S_i$ satisfies a saturation condition for a rule (R) or is blocked in ($S_i$) it will remain so in all $S_j$ with $j >i$. That is to say, further steps in the branch cannot "undo" a fulfilled saturation condition or "unblock" a blocked sequent. We can conclude that the branch must contain infinitely many phases of $\textbf{EXP3}(\cdot)$ each time applied to an unblocked sequent in some $S_i$. This entails that  ${\cal B}$ contains infinitely many sequents that are not $\simeq$-equivalent, but this contradicts previous lemma \ref{lem:eq-class-finite}.
  Thus each branch of the derivation  $\+D$ built by PROCEDURE($A$) is finite. To conclude the proof, just observe that  $\+D$ is  a tree whose branches have a finite length and is finitely branching (namely each node/sequent has at most 2 successors, as the rules of \textbf{C}\calculus~are at most binary), therefore $\+D$ is finite. 
\end{proof}

{\noindent \sf \textbf{Proposition~\ref{hp-fc-property}.}} 
The countermodel $\+M_S$ determined by a global-saturated $S$ is a bi-relational model satisfying the hereditary property(HP) and forward confluence(FC).

\begin{proof}
  In the following proof, we abbreviate $R_S,\leq_S$ as $R$ and $\leq$ respectively for readability. 

  For (HP), take arbitrary $x_{S_1},x_{S_2}\in W_S$ with $x_{S_1}\leq x_{S_2}$. Suppose $S_1,S_2$ are of form $\Gamma_1\Rightarrow\Delta_1$ and $\Gamma_2\Rightarrow\Delta_2$ respectively, then $\Gamma_1\Rightarrow\Delta_1\subseteq^\-S\Gamma_2\Rightarrow\Delta_2$. 
  By definition, it follows $\Gamma_1\subseteq\Gamma_2$. As $V_S(x_{S_1})=\{p~|~p\in\Gamma_1\}$ and $V_S(x_{S_2})=\{p~|~p\in\Gamma_2\}$, we have $V_S(x_{S_1})\subseteq V_S(x_{S_2})$.

  For (FC), take arbitrary $x_{\Gamma\Rightarrow\Delta},x_{\Sigma\Rightarrow\Pi},x_{\Lambda\Rightarrow\Theta}\in W_S$ with $x_{\Gamma\Rightarrow\Delta}\leq x_{\Sigma\Rightarrow\Pi}$ and $Rx_{\Gamma\Rightarrow\Delta}x_{\Lambda\Rightarrow\Theta}$, our goal is to find some $x_0\in W_S$ s.t. both $x_{\Lambda\Rightarrow\Theta}\leq x_0$ and $Rx_{\Sigma\Rightarrow\Pi}x_0$ hold. 
  Since $Rx_{\Gamma\Rightarrow\Delta}x_{\Lambda\Rightarrow\Theta}$, by the definition of $R$, we see that $[\Lambda\Rightarrow\Theta]\in\Delta$ and hence $\Gamma\Rightarrow\Delta$ can be written explicitly as $\Gamma\Rightarrow\Delta',[\Lambda\Rightarrow\Theta]$. 
  Meanwhile, since $x_{\Gamma\Rightarrow\Delta}\leq x_{\Sigma\Rightarrow\Pi}$, by the definition of $\leq$, we have $\Gamma\Rightarrow\Delta',[\Lambda\Rightarrow\Theta]\subseteq^\-S\Sigma\Rightarrow\Pi$. 
  By the definition of structural inclusion, there is a block $[\Phi\Rightarrow\Psi]\in\Pi$ s.t. $\Lambda\Rightarrow\Theta\subseteq^\-S\Phi\Rightarrow\Psi$, and then $\Sigma\Rightarrow\Pi$ can be written explicitly as $\Sigma\Rightarrow\Pi',[\Phi\Rightarrow\Psi]$. Since $\Phi\Rightarrow\Psi\in^+\Sigma\Rightarrow\Pi\in^+S$ and $\in^+$ is transitive, we see that $x_{\Phi\Rightarrow\Psi}\in W_S$ as well. 
  Take $x_{\Phi\Rightarrow\Psi}$ to be $x_0$, by the construction of $\+M_S$, it follows directly $x_{\Lambda\Rightarrow\Theta}\leq x_0$ and $Rx_{\Sigma\Rightarrow\Pi}x_0$. 
\end{proof}

{\noindent \sf \textbf{Lemma~\ref{truth-lemma-formula}.}} 
Let $S$ be a global-saturated sequent and $\+M_{S}$ be defined as above. (a). If $A\in\Phi$, then $\+M_{S},x_{\Phi\Rightarrow\Psi}\Vdash A$; (b). If $A\in\Psi$, then $M_{S},x_{\Phi\Rightarrow\Psi}\nVdash A$.

\begin{proof}
  We prove the lemma by induction on the complexity of $A$. For convenience, we abbreviate $x_{\Phi\Rightarrow\Psi},\leq_S,R_S,W_S$ as $x,\leq, R,W$ respectively in the following proof.
  \begin{itemize}
    \item $A$ is of form $p,\bot,\top,B\vee C,B\wedge C$. These cases are similar and relatively trivial, here we only give the proof for $B\wedge C$ as an example. Recall that both R3-saturated and blocked sequents are already R1-saturated, so it is not necessary to distinguish the cases whether $\Phi\Rightarrow\Psi$ is blocked or not.

    For (a), let $B\wedge C\in\Phi$. By saturation  we have that both $B,C\in\Phi$. Thus by IH, we have  $x\Vdash B$ and $x\Vdash C$, whence   $x\Vdash B \land C$.

    For (b), let $B\wedge C\in\Psi$. By saturation  either $B\in\Psi$ or $C\in\Psi$. Thus by IH either  $x\not\Vdash B$ and $x\not\Vdash C$ hold. In both cases we get $x\not\Vdash B\wedge C $.

    \item $A$ is of form $B\supset C$. For (a), let $B\supset C\in\Phi$. 
    Assume for the sake of a contradiction that $x\nVdash B\supset C$. Then there exists a world $x_{0} =  x_{\Sigma\rightarrow \Pi}\in W_S$, with  $x\leq x_0$ such that $x_{0}\Vdash B$ and $x_{0}\nVdash C$.  By IH, we have $B\notin\Pi$ and $C\notin\Sigma$. Meanwhile, since  $\Sigma\Rightarrow\Pi$ satisfies the saturation condition associated with $(\supset_L)$ (no matter whether is blocked or not),  either $B\in\Pi$ or $C\in\Sigma$, and we have a contradiction.

    For (b), let $B\supset C\in\Psi$. We distinguish whether $\Phi\Rightarrow\Psi$ is blocked sequent or not. 
    Assume first that $\Phi\Rightarrow\Psi$ is not blocked, then it satisfies one of the two saturation conditions associated with $(\supset_R)$ for $B\supset C$: 
    \begin{enumerate}
      \item[(1).] $B\in \Phi$ and $C\in\Psi$. In this case by IH, it follows $x\Vdash B$ and $x\nVdash C$. By reflexivity  $x\leq x$, we conclude $x\not\Vdash B\supset C$.
      \item[(2).] there is a block $\langle\Lambda\Rightarrow\Theta\rangle\in\Psi$ s.t. $B\in\Lambda$ and $C\in\Theta$.  By saturation (and Proposition \ref{sequent-inclusion}), we have $\Phi\Rightarrow\Psi\subseteq^\-S\Lambda\Rightarrow\Theta$, whence  $x\leq x_{\Lambda\Rightarrow\Theta}$. Since $B\in\Lambda$ and $C\in\Theta$, by IH, we have $x_{\Lambda\Rightarrow\Theta}\Vdash B$ and  $x_{\Lambda\Rightarrow\Theta}\nVdash C$, thus $x\not\Vdash B\supset C$. 
    \end{enumerate}
    Assume now that $\Phi\Rightarrow\Psi$ is blocked and it does not satisfy the previous condition (1),  otherwise we conclude the proof as before. 
    By definition, there is an unblocked sequent $\Sigma\Rightarrow\Pi\in^+S$ s.t. $\Phi\Rightarrow\Psi$ is blocked by it. 
    Then we have $\Sigma\Rightarrow\Pi\simeq \Phi\Rightarrow\Psi$, which implies $\Pi^\sharp=\Psi^\sharp$, thus also $B\supset C\in\Pi$. 
    Observe that $\Sigma\Rightarrow\Pi\simeq \Phi\Rightarrow\Psi$ implies  $\Phi\Rightarrow\Psi\subseteq^\-S\Sigma\Rightarrow\Pi$ hold,   thus (*)  $x\leq x_{\Sigma\Rightarrow\Pi}$ by model construction.
    Given that $\Sigma\Rightarrow\Pi$ is R3-saturated, it already satisfies the saturation condition associated with $(\supset_R)$ for $B\supset C$. Since $\Sigma\Rightarrow\Pi\simeq \Phi\Rightarrow\Psi$, we get that $\Sigma\Rightarrow\Pi$ does not satisfy condition (1), thus it satisfies condition (2), that is there is   there is a block $\langle\Lambda\Rightarrow\Theta\rangle\in\Pi$ such that $B\in\Lambda$ and $C\in\Theta$. We have $\Sigma\Rightarrow\Pi\subseteq^\-S\Lambda\Rightarrow\Theta$, whence $x_{\Sigma\Rightarrow \Pi}\leq x_{\Lambda\Rightarrow\Theta}$ so that  by (*) and transitivity also $x\leq x_{\Lambda\Rightarrow\Theta}$. Then we proceed as in case (2) above. 
\hide{
    For (b), let $B\supset C\in\Psi$. We need to consider whether $\Phi\Rightarrow\Psi$ is blocked sequent or not. 
    Assume for the sake of a contradiction that $x\Vdash B\supset C$. Then for every $x'$ with $x\leq x'$, $x'\Vdash B$ implies $x'\Vdash C$. 
    
    If $\Phi\Rightarrow\Psi$ is not blocked, then it satisfies the saturation condition associated with $(\supset_R)$ for $B\supset C$, we see one of the following holds,
    \begin{enumerate}
      \item[(1).] $B\in \Phi$ and $C\in\Psi$. By IH, it follows $x\Vdash B$ and $x\nVdash C$. Recall $\leq$ is reflexive, we have $x\leq x$, then $x\Vdash B$ entails that $x\Vdash C$ as well, a contradiction.
      \item[(2).] there is a block $\langle\Lambda\Rightarrow\Theta\rangle\in\Psi$ s.t. $B\in\Lambda$ and $C\in\Theta$. Since $\Phi\Rightarrow\Psi$ is saturated with (trans) and (inter), by Proposition \ref{sequent-inclusion}, we have $\Phi\Rightarrow\Psi\subseteq^\-S\Lambda\Rightarrow\Theta$. Then according to the model construction, we see that $x\leq x_{\Lambda\Rightarrow\Theta}$. Since $B\in\Lambda$, by IH, we have $x_{\Lambda\Rightarrow\Theta}\Vdash B$, and hence $x_{\Lambda\Rightarrow\Theta}\Vdash C$. Meanwhile, as $C\in\Theta$, by IH, it follows that $x_{\Lambda\Rightarrow\Theta}\nVdash C$, a contradiction.
    \end{enumerate}
    Otherwise, $\Phi\Rightarrow\Psi$ is blocked, then it does not satisfy the saturation condition associated with $(\supset_R)$ for $B\supset C$. 
    By definition, there is an unblocked sequent $\Sigma\Rightarrow\Pi\in^+S$ s.t. $\Phi\Rightarrow\Psi$ is blocked by it. 
    Then we have $\Sigma\Rightarrow\Pi\simeq \Phi\Rightarrow\Psi$, which implies $\Pi^\sharp=\Psi^\sharp$, so $B\supset C\in\Pi$ as well. 
    Also, by definition, we have $\Phi\Rightarrow\Psi\subseteq^\-S\Sigma\Rightarrow\Pi$ and $\Sigma\Rightarrow\Pi\subseteq^\-S\Phi\Rightarrow\Psi$ hold. According to the model construction, it follows $x\leq x_{\Sigma\Rightarrow\Pi}$ and $x_{\Sigma\Rightarrow\Pi}\leq x$. 
    Given that $\Sigma\Rightarrow\Pi$ is R3-saturated, it already satisfies the saturation condition associated with $(\supset_R)$ for $B\supset C$. 
    Apply the same argument in the unblocked case above to $\Sigma\Rightarrow\Pi$, then we see one of the following holds,
    \begin{enumerate}
      \item[(1).] $x_{\Sigma\Rightarrow\Pi}\Vdash B$ and $x_{\Sigma\Rightarrow\Pi}\nVdash C$. Recall $x_{\Sigma\Rightarrow\Pi}\leq x$, by HP, we see that $x\Vdash B$. Since $\leq$ is reflexive, we have $x\leq x$, then $x\Vdash B$ entails that $x\Vdash C$ as well. Meanwhile, since  $x\leq x_{\Sigma\Rightarrow\Pi}$, by applying HP conversely, $x\nVdash C$, a contradiction.
      \item[(2).] there is a block $\langle\Lambda\Rightarrow\Theta\rangle\in\Pi$ s.t. $x_{\Lambda\Rightarrow\Theta}\Vdash B$ and $x_{\Lambda\Rightarrow\Theta}\nVdash C$ as well as $x_{\Sigma\Rightarrow\Pi}\leq x_{\Lambda\Rightarrow\Theta}$. Note that $x\leq x_{\Sigma\Rightarrow\Pi}$ and $x_{\Sigma\Rightarrow\Pi}\leq x$, then we further have $x\leq x_{\Lambda\Rightarrow\Theta}$ and $x_{\Lambda\Rightarrow\Theta}\leq x$. By HP, we see that $x\Vdash B$, which implies $x\Vdash C$ as well. Also by HP, we have $x\nVdash C$, a contradiction.
    \end{enumerate}
}

    \item $A$ is of form $\square B$. 
    For (a), let $\square B\in\Phi$. Similar as the ($\supset$)-case, $\Phi\Rightarrow\Psi$ satisfies the saturation condition associated with $(\square_R)$ for $\square B$ regardless of whether the sequent itself is blocked or not. Assume for the sake of a contradiction that $x\nVdash \square B$. 
    Then there exists $x_{\Sigma\Rightarrow\Pi},x_{\Lambda\Rightarrow\Theta}$ denoted as $x_1,x_2$ s.t. $x\leq x_1, Rx_1x_2$ and $x_2\nVdash B$. By IH, we see that $B\notin\Lambda$. Meanwhile, according to the model construction, we see that $\Phi\Rightarrow\Psi\subseteq^\-S\Sigma\Rightarrow\Pi$ and $[\Lambda\Rightarrow\Theta]\in\Pi$. Moreover we have $\Phi\subseteq\Sigma$, thus $\square B\in\Sigma$ as well. Also, since $\Sigma\Rightarrow\Pi$ is of form $\Sigma\Rightarrow\Pi',[\Lambda\Rightarrow\Theta]$, by the saturation condition associated with $(\square_L)$, we have $B\in \Lambda$, which leads to a contradiction.

    For (b), let $\square B\in \Psi$. We distinguish whether $\Phi\Rightarrow\Psi$ is blocked or not. Assume that  $\Phi\Rightarrow\Psi$ is not blocked, then it satisfies the one of the two saturation conditions associated with $(\square_R)$ for $\square B$: 
    \begin{enumerate}
      \item[(1).] there is a block $[\Lambda\Rightarrow\Theta]\in\Psi$ with $B\in\Theta$. By IH, we have $x_{\Lambda\Rightarrow\Theta} \nVdash B$. 
      By reflexivity  $x\leq x$ and model construction  $Rxx_{\Lambda\Rightarrow\Theta}$, so that  $x \nVdash \Box B$. 
      \item[(2).] there is a block $\langle\Omega\Rightarrow[\Lambda\Rightarrow\Theta],\Xi\rangle\in\Psi$ with $B\in\Theta$.  Denote the sequent $\Omega\Rightarrow[\Lambda\Rightarrow\Theta],\Xi$ by $S_0$. Since $\Phi\Rightarrow\Psi$ is saturated with (trans) and (inter), by Proposition \ref{sequent-inclusion}, we have $\Phi\Rightarrow\Psi\subseteq^\-S S_0$. According to the model construction, we see that $x\leq x_{S_0}$ and  $Rx_{S_0}x_{\Lambda\Rightarrow\Theta}$. Since $B\in\Theta$, by IH we have $x_{\Lambda\Rightarrow\Theta} \nVdash B$ and we can conclude   $x \nVdash \Box B$. 
    \end{enumerate}

    Assume that  $\Phi\Rightarrow\Psi$ is blocked and does not satisfy condition (1) for $\Box B$, otherwise the proof proceeds as in case (1) above. Then there is an unblocked sequent $\Sigma\Rightarrow\Pi\in^+S$ such that $\Phi\Rightarrow\Psi$ is blocked by it. Then $\Sigma\Rightarrow\Pi\simeq \Phi\Rightarrow\Psi$, which implies $\Pi^\sharp=\Psi^\sharp$, so $\square B\in\Pi$ as well. 
    Moreover, by definition, we have $\Phi\Rightarrow\Psi\subseteq^\-S\Sigma\Rightarrow\Pi$, whence  by  model construction (**) $x\leq x_{\Sigma\Rightarrow\Pi}$. 
    Given that $\Sigma\Rightarrow\Pi$ is R3-saturated, it satisfies the saturation condition associated with $(\square_R)$ for $\square B$,  but since $\Sigma\Rightarrow\Pi\simeq \Phi\Rightarrow\Psi$, we have that $\Sigma\Rightarrow\Pi$ does not satisfy condition (1), thus it must satisfy condition (2).  Therefore there is 
    there is a block $\langle\Omega\Rightarrow[\Lambda\Rightarrow\Theta],\Xi\rangle\in\Pi$,  such that $B\in \Theta$. Letting $S_0 = \Omega\Rightarrow[\Lambda\Rightarrow\Theta],\Xi$, we have  $x_{\Sigma\Rightarrow\Pi}\leq x_{S_0}$ and $Rx_{S_0}x_{\Lambda\Rightarrow\Theta}$. By (**) we have also  $x\leq x_{S_0}$ and we conclude as in case (2) above. 
    \item $A$ is of form $\Diamond B$. It is not necessary to distinguish cases when $\Phi\Rightarrow\Psi$ is blocked or not.

    For (a), let $\Diamond B\in\Phi$. Then by the saturation condition associated with $(\Diamond_L)$, there is a block $[\Lambda\Rightarrow\Theta]\in\Psi$ s.t. $B\in\Lambda$.  By model construction, we have $Rxx_{\Lambda\Rightarrow\Theta}$ and by IH, we get  $x_{\Lambda\Rightarrow\Theta}\Vdash B$, thus  $x\vDash \Diamond B$. 

    For (b), let $\Diamond B\in \Psi$. Let $y\in W$, with $Rx y$ we show that $y\nVdash B$.  If $Rx y$ it must be $y = x_{\Lambda\Rightarrow\Theta}$  and $[\Lambda\Rightarrow\Theta]\in\Psi$. By saturation condition for  ($\Diamond_R$), it follows that $B\in \Theta$, thus by IH  $x_{\Lambda\Rightarrow\Theta} \nVdash B$ and we are done. 
  \end{itemize}
This completes our proof.
\end{proof}

{\noindent \sf \textbf{Theorem~\ref{completeness-cfik}}}
For any formula $A\in \+L$, if $\Vdash A$, then $\Rightarrow A$ is provable in \textbf{C}\calculus.

\begin{proof}
By contraposition. Given a formula $A$, if $A$ is unprovable in \textbf{C}\calculus, then we see that PROCEDURE($A$) produces a derivation containing a non-axiomatic global saturated leaf $S = \Gamma \Rightarrow \Delta$ such that $A\in\Delta$. By the truth lemma, $A$ is not valid in the model $\+M_S$. 
\end{proof}

\end{document}